\numberwithin{equation}{section}
\newcommand{\repeatlemma}[1]{%
	\begingroup
	\renewcommand{\thelemma}{\ref{#1}}%
	\expandafter\expandafter\expandafter\lemma
	\csname replemma@#1\endcsname
	\endlemma
	\endgroup
}
\xdef\csname replemma@#1\endcsname{%
		\unexpanded\expandafter{\BODY}%
	}%
\unskip\label{#1}\endlemma
\algnewcommand{\StateInd}[1]{\State
	\parbox[t]{\dimexpr\linewidth-\ALG@thistlm}{\strut\hangindent=\algorithmicindent\hangafter=1#1}}
\newlength{\trianglerightwidth}
\algnewcommand{\LineComment}[1]{\Statex \hskip\ALG@thistlm $\triangleright$ \colorcomment{#1}}
\algnewcommand{\LineCommentCont}[1]{\Statex \hskip\ALG@thistlm%
	\parbox[t]{\dimexpr\linewidth-\ALG@thistlm}{\hangindent=\trianglerightwidth \hangafter=1 \strut$\triangleright$ \colorcomment{#1}\strut}}
\algnewcommand{\Input}[1]{\StateInd{\textbf{Input:} #1}}
\algnewcommand{\Output}[1]{\StateInd{\textbf{Output:} #1}}
\newcommand{\colorcomment}[1]{\textcolor{teal}{#1}}
\crefname{claim}{Claim}{Claims}
\Crefname{claim}{Claim}{Claims}
\crefname{algorithm}{Algorithm}{Algorithms}
\Crefname{algorithm}{Algorithm}{Algorithms}
\DeclareMathOperator{\poly}{poly}
\DeclareMathOperator{\polylog}{polylog}
\let\oldnl\nl
\newcommand{\nonl}{\renewcommand{\nl}{\let\nl\oldnl}}
\newcommand{\Funcf}{f}
\newcommand{\GroundSet}{\mathcal{N}}
\newcommand{\GroundSubset}{\mathcal{M}}
\newcommand{\GroundSetSize}{n}
\newcommand{\GroundSubsetSize}{\GroundSetSize'}
\newcommand{\OptSet}{\mathcal{O}}
\newcommand{\OptValue}{\textnormal{OPT}}
\newcommand{\Elem}{x}
\newcommand{\OptElem}{o}
\newcommand{\SetA}{S}
\newcommand{\SetB}{T}
\newcommand{\SetS}{S}
\newcommand{\SetT}{T}
\newcommand{\SetQ}{Q}
\newcommand{\Error}{\varepsilon}
\newcommand{\ErrorA}{\Error'}
\newcommand{\ErrorB}{\hat{\Error}}
\newcommand{\ApproxParam}{\beta}
\newcommand{\ApproxParamA}{\alpha}
\newcommand{\TopElem}{\mathcal{A}}
\newcommand{\BestSetTopElem}{A^{*}}
\newcommand{\Card}{k}
\newcommand{\PSep}{p}
\newcommand{\SampleIter}{u}
\newcommand{\SampleIndex}{j}
\newcommand{\SampleElem}{t}
\newcommand{\NumSets}{\ell}
\newcommand{\SetIndex}{l}
\newcommand{\OuterIterIndex}{\IterIndex}
\newcommand{\InnerIterIndex}{j}
\newcommand{\PropExp}{h}
\newcommand{\RemElemsDefer}{\RemElems}
\newcommand{\Curv}{\kappa}
\newcommand{\Homo}{\mu}
\newcommand{\NumRandVar}{n}
\newcommand{\RandVarIndex}{i}
\newcommand{\RandVarProb}{p}
\newcommand{\Mean}{\mu}
\newcommand{\Dev}{d}
\newcommand{\FirstSucc}{A}
\newcommand{\NumRemElemsInter}{Z}
\newcommand{\DepTrial}{X}
\newcommand{\depTrial}{x}
\newcommand{\IndepTrial}{Y}
\newcommand{\ProbTrialSucc}{\eta}
\newcommand{\NumSucc}{c}
\newcommand{\RandVarMaxIndex}{t}
\newcommand{\LS}{\textsc{LinearSeq}}
\newcommand{\LSa}{LS}
\newcommand{\PGB}{\textsc{ParallelGreedyBoost}}
\newcommand{\PGBa}{PGB}
\newcommand{\LSPGBa}{LS+PGB}
\newcommand{\TS}{\textsc{ThresholdSeq}}
\newcommand{\TSa}{TS}
\newcommand{\LALS}{\textsc{LowAdapLinearSeq}}
\newcommand{\LALSa}{LALS}
\newcommand{\LST}{\textsc{LinearSeq}}
\newcommand{\LSTa}{LS}
\newcommand{\PGS}{\textsc{GreedySampling}}
\newcommand{\PGSa}{GS}
\newcommand{\TBS}{\textsc{ThresholdBlockSeq}}
\newcommand{\TBSa}{TBS}
\newcommand{\LSTPGS}{\LST+\PGS}
\newcommand{\LSTPGSa}{\LSTa+\PGSa}
\newcommand{\LALST}{\textsc{LowAdapTop}}
\newcommand{\LALSTa}{LAT}
\newcommand{\SR}{SingleRun}
\newcommand{\AS}{\textsc{Adaptive-Sampling}}
\newcommand{\OptLB}{\Gamma}
\newcommand{\Threshold}{\tau}
\newcommand{\Prob}{\delta}
\newcommand{\GoodElems}{\mathcal{G}}
\newcommand{\MinSizeForSample}{m}
\newcommand{\MinSizeForSampleMax}{\MinSizeForSample_{\text{max}}}
\newcommand{\IterIndex}{i}
\newcommand{\Funcg}{g}
\newcommand{\RemElems}{E}
\newcommand{\NumIters}{\textit{numOut}}
\newcommand{\PrefixIndices}{\Lambda}
\newcommand{\GoodPrefixIndices}{\PrefixIndices^{*}}
\newcommand{\PrefixIndex}{\lambda}
\newcommand{\PrefixElemIndex}{v}
\newcommand{\PrefixIndexGood}{\PrefixIndex^{*}}
\newcommand{\PrefixIndexMax}{\PrefixIndex_{\text{max}}}
\newcommand{\MaxPrefixLength}{\textit{maxSize}}
\newcommand{\PrefixIndexBest}{\PrefixIndex_{\text{best}}}
\newcommand{\PrefixIndexReq}{\PrefixIndex_{\text{req}}}
\newcommand{\Prefix}{P}
\newcommand{\NumBlocks}{\textit{numIn}}
\newcommand{\Block}{B}
\newcommand{\BlockFiltered}{\Block^{*}}
\newcommand{\CardA}{\Card'}
\newcommand{\ReduProp}{\phi}
\newcommand{\ReduPropVal}{0.11879}
\newcommand{\BlockElem}{b}
\newcommand{\GeoExp}{h}
\newcommand{\ProbFail}{\delta}
\newcommand{\ElemGain}{w}
\newcommand{\SampleFailure}{Sample failure}
\newcommand{\PrefixFailure}{Prefix failure}
\newcommand{\InnerFailure}{Inner failure}
\newcommand{\OuterFailure}{Outer failure}
\newcommand{\TBSFailure}{\TBSa{} failure}
\newcommand{\sampleFailure}{sample failure}
\newcommand{\prefixFailure}{prefix failure}
\newcommand{\innerFailure}{inner failure}
\newcommand{\outerFailure}{outer failure}
\newcommand{\PGSFailure}{\PGSa{} failure}
\newcommand{\SMCC}{SMCC}
\newcommand{\Constc}{c}
\newcommand{\Constcp}{\Constc'}
\newtheorem{theorem}{Theorem}
\newtheorem{lemma}{Lemma}
\theoremstyle{definition}
\newtheorem{definition}{Definition}
\theoremstyle{remark}
\newtheorem{claim}{Claim}
\theoremstyle{example}
\begin{document}
	
	\title{Fast Parallel Algorithms for Submodular\\ $p$-Superseparable Maximization}
	
	\author{Philip Cervenjak\footnote{Corresponding author.}}
	\author{Junhao Gan}
	\author{Anthony Wirth}
	
	\affil{School of Computing and Information Systems\\The University of Melbourne, Parkville VIC, Australia}
	
	\affil{\href{mailto:pcervenjak@student.unimelb.edu.au}{\texttt{pcervenjak@student.unimelb.edu.au}}, 
		\href{mailto:junhao.gan@unimelb.edu.au}{\texttt{junhao.gan@unimelb.edu.au}},
		\href{mailto:awirth@unimelb.edu.au}{\texttt{awirth@unimelb.edu.au}}
	}
	
	\date{}
	
	\maketitle              

	\begin{abstract}
		Maximizing a non-negative, monontone, submodular function $\Funcf$ over $\GroundSetSize$ elements under a cardinality constraint $\Card$ (\SMCC) is a well-studied NP-hard problem. It has important applications in, e.g., machine learning and influence maximization. Though the theoretical problem admits polynomial-time approximation algorithms, solving it in practice often involves frequently querying submodular functions that are expensive to compute. This has motivated significant research into designing parallel approximation algorithms in the \emph{adaptive complexity model}; adaptive complexity (adaptivity) measures the number of sequential rounds of $\poly(\GroundSetSize)$ function queries an algorithm requires. The state-of-the-art algorithms can achieve $(1-\frac{1}{e}-\Error)$-approximate solutions with $O(\frac{1}{\Error^2}\log \GroundSetSize)$ adaptivity, which approaches the known adaptivity lower-bounds. However, the $O(\frac{1}{\Error^2} \log \GroundSetSize)$ adaptivity only applies to maximizing worst-case functions that are unlikely to appear in practice. Thus, in this paper, we consider the special class of \emph{$\PSep$-superseparable} submodular functions, which places a reasonable constraint on $\Funcf$, based on the parameter $\PSep$, and is more amenable to maximization, while also having real-world applicability. Our main contribution is the algorithm \LSTPGSa{}, a finer-grained version of the existing \LSPGBa{} algorithm, designed for instances of \SMCC{} when $\Funcf$ is $\PSep$-superseparable; it achieves an expected $(1-\frac{1}{e}-\Error)$-approximate solution with $O(\frac{1}{\Error^2}\log(\PSep \Card))$ adaptivity \emph{independent of $\GroundSetSize$}. Additionally, unrelated to $\PSep$-superseparability, our \LSTPGSa{} algorithm uses only $O(\frac{\GroundSetSize}{\Error} + \frac{\log \GroundSetSize}{\Error^2})$ oracle queries, which has an improved dependence on $\Error^{-1}$ over the state-of-the-art \LSPGBa{}; this is achieved through the design of a novel thresholding subroutine.
		
		\vspace*{\baselineskip}
		\noindent
		{\textbf{Keywords:} parallel algorithms, approximation algorithms, submodular maximization}
	\end{abstract}
	\newpage
	\section{Introduction}
	Submodular functions are an important class of set functions that capture a wide range of real-world applications that, informally,  exhibit the property of ``diminishing marginal gains'' or ``diminishing returns''. 
	In this paper, we consider maximizing {\em non-negative}, {\em monotone}, {\em submodular} functions $\Funcf : 2^{\GroundSet} \rightarrow \mathbb{R}_{\geq 0}$, defined on a ground set $\GroundSet$ of $\GroundSetSize$ elements, under a cardinality constraint (\SMCC).
	The goal of \SMCC~is to select a subset $S \subseteq \GroundSet$ of size $|\SetS| \leq \Card$ that maximizes $\Funcf(\SetS)$. 
	As a convention in the literature, we assume that, for any $\SetS \subseteq \GroundSet$, the value of $\Funcf(\SetS)$ can {\em only} be accessed via {\em queries} to a {\em value oracle}.
	
	Solving \SMCC{} is important for a wide range of applications, including machine learning (e.g., active learning \cite{Wei2015}, clustering \cite{Dueck2007}, data summarization \cite{mirzasoleiman2016fast}, and feature selection \cite{Khanna2017}), information gathering \cite{krause2011submodularity}, network monitoring \cite{Leskovec2007}, sensor placement \cite{krause2008near}, and influence maximization~\cite{kempe2003maximizing}.
	
	\paragraph{The Greedy Algorithm.}
	As is true for most interesting variants of submodular maximization, \SMCC{} is unfortunately an NP-hard problem. 
	Even worse, the best approximation that can be achieved with a polynomial number of oracle queries is $1-{1}/{e}$, unless $\text{P} = \text{NP}$~\cite{Nemhauser1978}.
	Interestingly, the ``best'' such approximation ratio can be achieved by a simple {\em greedy} algorithm~\cite{nemhauser1978analysis}. Specifically, the greedy algorithm starts with a solution $\SetS = \varnothing$ and performs $\Card$ iterations, in each of which the element with the largest {\em marginal gain} with respect to $\SetS$ is added to $\SetS$.
	In its raw form, the greedy algorithm queries $\Funcf$ $O(\Card \GroundSetSize)$ times, and it is strongly {\em sequential}: it has to perform $\Card$ iterations one by one.
	
	\paragraph{The Adaptive Complexity Model.}
	In practice, querying the oracle for a set's value, i.e., evaluating $\Funcf(\SetS)$, can be time consuming and it is often the main bottleneck of the overall running time of an algorithm.
	This has motivated significant research into designing {\em parallelizable} algorithms for \SMCC{} under the {\em adaptive complexity model}~\cite{balkanski2018adaptive,Balkanski2018,ene2019submodular,chekuri2019submodular,fahrbach2019submodular,balkanski2019exponential,kazemi2019submodular,breuer2020fast,chen2021best}, where the {\em efficiency} of an algorithm is measured by {\em the number of queries} and in each round, an algorithm is allowed to perform a polynomial number, $\poly(\GroundSetSize)$, of independent oracle queries in parallel.
	Each such round is called an {\em adaptive round} and the total number of adaptive rounds required is called the {\em adaptive complexity} (or {\em adaptivity}) of the algorithm.
	The smaller an algorithm's adaptivity is, the more parallelizable the algorithm is.
	Clearly, the adaptivity of the greedy algorithm is $O(\Card)$.

	\paragraph{The State-of-the-Art Adaptive Algorithm.}
	The goal of all existing adaptive algorithms~\cite{balkanski2018adaptive,Balkanski2018,ene2019submodular,chekuri2019submodular,fahrbach2019submodular,balkanski2019exponential,kazemi2019submodular,breuer2020fast,chen2021best} for \SMCC{} is to beat the $O(k)$ adaptivity bound of the standard greedy algorithm, and ideally, to beat the query complexity $O(kn)$ at the same time. 
	The state-of-the-art algorithm, due to Chen et al.~\cite{chen2021best}, called \LSPGBa{}, achieves a $(1-\frac{1}{e}-\Error)$-approximation with an adaptive complexity of $ O(\frac{1}{\Error^2}\log(\frac{\GroundSetSize}{\Error}))$ and a query complexity 
	of $O(\frac{\GroundSetSize}{\Error^2})$.
	Assuming $k \in \omega(\frac{1}{\Error^2} \log \GroundSetSize)$,
	\LSPGBa{} achieves $o(k)$ adaptivity and $o(kn)$ query complexity simultaneously, improving the naive greedy algorithm.
	
	\paragraph{Known Lower Bounds.}
	For \SMCC{}, Balkanski and Singer \cite{balkanski2018adaptive} initially proved that $\Omega(\frac{\log \GroundSetSize}{\log \log \GroundSetSize})$ adaptive rounds are required to achieve a \sloppy $\frac{1}{\log \GroundSetSize}$\nobreakdash-approximation. 
	Li et al. \cite{li2020polynomial} later proved lower bounds for achieving a $(1-\frac{1}{e}-\Error)$-approximation in two cases of $\Error > 0$: when $\Error > \frac{\Constcp}{\log \GroundSetSize}$, $\Omega(\frac{1}{\Error})$ rounds are required; when $\Error < \frac{\Constcp}{\log \GroundSetSize}$, $\Omega( \frac{\log^{2/3} \GroundSetSize}{\Error^{1/3}} )$ rounds are required, where $\Constcp$ is an absolute constant. Kuhnle \cite{kuhnle2021quick} proved that $\Omega(\frac{\GroundSetSize}{\Card})$ queries are required to achieve a constant-factor approximation, even when queries can be made to infeasible sets.
	
	\paragraph{Our Research Question.}
	In spite of the aforementioned progress, \SMCC{} is a general problem formulation and, thus, captures difficult problem instances that are not likely to appear in practice.
	Analogously, although the greedy algorithm for Set Cover achieves only an~$O(\log n)$ approximation factor on~$n$ elements, the well known tight example is bespoke, and in practice greedy performs well~\cite{Grossman1997}.
	It would be of theoretical and practical interest if there were a useful class of submodular functions that can be maximised in fewer adaptive rounds than what is needed for the worst-case functions, especially since real-world submodular functions can be computationally expensive to query. This motivates our main research question:
	\begin{center}
		\emph{Is there an interesting class of \SMCC{} instances  that admits\\ algorithms with $o(\log \GroundSetSize)$ adaptive complexity,\\ while achieving reasonable approximation?}
	\end{center}
	
	We address our research question by considering the class of \emph{$\PSep$-superseparable} submodular functions ($\PSep \in [1, \GroundSetSize]$ is a class parameter); in particular, we design highly parallel approximation algorithms for \SMCC{} when $\Funcf$ is assumed to be $\PSep$-superseparable. This class of functions belongs in the super-class of $\PSep$-\emph{separable} submodular functions\footnote{Our work does not focus on the other two classes of $\PSep$-separable functions, which are $\PSep$-subseparable and rev-$\PSep$-subseparable functions.}, introduced by Skowron \cite{SKOWRON201765} for the purpose of showing fixed-parameter tractable (FPT) algorithms for \SMCC{}. As Skowron shows, $\PSep$-superseparable submodular functions capture several useful real-world functions, such as those found in election and recommendation systems. They also capture the Max-$\Card$ (Weighted) Coverage problem with element frequencies upper-bounded by $\PSep$.
	We now outline our contributions one by one.
	\subsection{Our Contributions}
	\paragraph{Parallel Algorithm for $\PSep$-Superseparable \SMCC{}.} \label{sec:LSTPGS contributions}
	Our first contribution is the algorithm \LSTPGS{} (\LSTPGSa) and its subroutine \PGS{} (\PGSa) for $\PSep$-superseparable \SMCC{}, with theoretical guarantees stated in \cref{theorem:LSTPGS results,theorem:PGS results} respectively.
	
	\LSTPGSa{} is essentially a finer-grained version, parameterised by $\PSep$, of \LSPGBa{} \cite{chen2021best} that exploits the $\PSep$-superseparability of $\Funcf$ to achieve an adaptive complexity of $O( \frac{1}{\Error^2} \log(\frac{\PSep \Card}{\Error}))$. 
	For example, assuming $\PSep, \Card \in O(\polylog \GroundSetSize)$ as well as $\Card \in \omega(\frac{1}{\Error^2} \log \GroundSetSize)$ (since otherwise the greedy algorithm would have adaptivity as good as the existing algorithms), the adaptivity of our algorithm is bounded by $O( \frac{1}{\Error^2}\log \log \GroundSetSize )$. Under this setting, our algorithm's adaptive complexity beats the $O( \frac{1}{\Error^2} \log \GroundSetSize )$ adaptive complexity of the existing algorithms for general \SMCC{}, as well as the $O(\Card) = O(\polylog \GroundSetSize)$ adaptive complexity of the greedy algorithm. We summarize the performance guarantees of \LSTPGSa{}, compared with \LSPGBa{}, in the \cref{table:kpi-smcc} below. For simplicity and fair comparisons, we assume $\Card \in \omega(\frac{1}{\Error^2} \log \GroundSetSize)$ in \cref{table:kpi-smcc}.
	
	\begin{table}
		\caption{Key performance indicators for \SMCC{} algorithms.}
		\label{table:kpi-smcc}
		\begin{center}
			\vspace*{-2mm}
			\begin{tabular}{ |c||c|c|c| }
				\hline
				\textbf{Algorithm} & \textbf{Approx.} & \textbf{Adaptivity} & \textbf{Expected queries} \\
				\hline\hline
				\makecell{\LSPGBa{} \cite{chen2021best}} & $1-\frac{1}{e}-\Error$ & $O\left( \frac{1}{\Error^2} \log{\GroundSetSize} \right)$ & $O\left( \frac{\GroundSetSize}{\Error^2} \right)$ \\
				\hline
				\makecell{general \LSTPGSa{} \\ (\cref{theorem:LSTPGS general results}, ours)} & $ 1-\frac{1}{e}-\Error $ & $O\left( \frac{1}{\Error^2} \log{\GroundSetSize} \right)$ & $O\left( \frac{\GroundSetSize}{\Error} \right)$ \\
				\hline
				\makecell{$\PSep$-supersep. \LSTPGSa{} \\ (\cref{theorem:LSTPGS results}, ours)} & \makecell{expected\\$1-\frac{1}{e}-\Error$} & $O\left( \frac{1}{\Error^2} \log\left({\PSep \Card}\right) \right)$ & $O\left( \GroundSetSize + \frac{\PSep\Card}{\Error^2}\right)$ \\ 
				\hline
			\end{tabular}
		\end{center}
	\end{table}
	
	Additionally, for general \SMCC{}, \LSTPGSa{} has an expected query complexity of $O( \frac{\GroundSetSize}{\Error} + \frac{\log \GroundSetSize}{\Error^2} )$. This improves the dependence on $\Error^{-1}$ in the $O( \frac{\GroundSetSize}{\Error^2} )$ expected query complexity of \LSPGBa{} \cite{chen2021best}. The improved query complexity is due to \PGSa{} using a novel thresholding procedure, which we outline next. We formally state the theoretical guarantees of \LSTPGSa{} for general \SMCC{} in \cref{theorem:LSTPGS general results}.
	
	\paragraph{Parallel Thresholding Procedure for \SMCC{}.} \label{sec:TBS contributions}
	Our second contribution is the procedure \TBS{} (\TBSa), with theoretical guarantees stated in \cref{theorem:TBS results}. \TBSa{} is used by \PGSa{} for the task of selecting a set of elements whose average marginal gain approximately satisfies a given threshold $\Threshold$.
	
	The significance of \TBSa{} is that, given $\GroundSubsetSize$ input elements and an error term $\ErrorA$, it achieves an expected query complexity of $O(\GroundSubsetSize + \frac{\log \GroundSubsetSize}{\ErrorA})$.
	This improves the dependence on $\ErrorA^{-1}$ in the query complexity: existing procedures for the same task \cite{fahrbach2019submodular,10.1145/3313276.3316304,kazemi2019submodular,chen2021best} perform $O( \frac{\GroundSubsetSize}{\ErrorA})$ queries. Note that \TBSa{} does not rely on $\PSep$-superseparability for its improved query complexity; indeed, the improved query complexity of \TBSa{} is what leads to the $O( \frac{\GroundSetSize}{\Error} + \frac{\log \GroundSetSize}{\Error^2} )$ expected query complexity of \LSTPGSa{} for general \SMCC{}. This also means that \TBSa{} can replace the existing procedures that are used as subroutines for solving general \SMCC{}. We summarize the guarantees of \TBSa{} in \cref{table:kpi-thresh} below ($\MinSizeForSample$ is an input parameter satisfying $\GroundSubsetSize \leq \MinSizeForSample \leq \GroundSetSize$, and $\ProbFail$ is a failure probability term).
	
	\paragraph{Simple Parallel Algorithm for $\PSep$-Superseparable \SMCC{}.} \label{sec:simple contributions}
	Finally, we introduce \LALST{} (\LALSTa) for $\PSep$-superseparable \SMCC{}, with theoretical guarantees stated in \cref{theorem:simple results}. \LALSTa{} works by simply running the existing procedure \LALS{} \cite{chen2021best} on the set of top-$ \lceil \frac{\PSep\Card}{1-\ApproxParamA} + \Card \rceil $ elements by value. \LALSTa{} achieves an $\ApproxParamA (5 + O(\Error))^{-1}$-approximation, an adaptive complexity of $O( \frac{1}{\Error^3} \log (\frac{\PSep}{1-\ApproxParamA}) )$, and a query complexity of $ O( \frac{1}{1-\ApproxParamA} ( \frac{\PSep \Card}{\Error^3} + \frac{\PSep}{\Error^4} ) ) $.
	
	\begin{table}
		\caption{Key performance indicators for parallel thresholding procedures.}
		\label{table:kpi-thresh}
		\begin{center}
			\vspace*{-2mm}
			\begin{tabular}{ |c||c|c| }
				\hline
				\textbf{Procedure} & \textbf{Adaptivity} & \textbf{Expected queries} \\
				\hline\hline
				\makecell{\TS{} \\ (Theorem 2 of \cite{chen2021best})} & $O\left( \frac{1}{\ErrorA} \log\left(\frac{\MinSizeForSample}{\ProbFail}\right) \right)$ & $O\left( \frac{\GroundSubsetSize}{\ErrorA} \right)$ \\
				\hline
				\makecell{\TBS{} \\ (\cref{theorem:TBS results}, ours)} & $O\left( \frac{1}{\ErrorA} \log\left(\frac{\MinSizeForSample}{\ProbFail}\right) \right)$ & $O\left( \GroundSubsetSize + \frac{\log \GroundSubsetSize}{\ErrorA} \right)$ \\
				\hline
			\end{tabular}
		\end{center}
		\vspace*{-\baselineskip}
	\end{table}
	\subsection{Our Techniques}
	Our algorithms are based on the state-of-the art framework of Chen et al. \cite{chen2021best}, particularly the \LSPGBa{} algorithm.
	Importantly, our two key techniques exploit $\PSep$-superseparability to achieve adaptive complexities \emph{independent of $\GroundSetSize$}, rather depending on parameters $\PSep$ and $\Card$.
	
	The first key technique is to run an existing algorithm on a limited number of ``top-valued'' elements so as to bound the adaptive complexity of the algorithm. It follows from the $\PSep$-superseparability of $\Funcf$ that the set of top-$ \lceil \frac{\PSep\Card}{1-\ApproxParamA} + \Card \rceil $ valued elements contains a $\Card$-size $\ApproxParamA$\nobreakdash-approximation of the optimal solution, leading to a good approximation overall. This result is stated in \cref{lemma:top elems opt value} and follows from Theorem 1 of Skowron \cite{SKOWRON201765}. We use this technique in our main algorithm \LSTPGSa{} and our algorithm \LALSTa{}.
	
	The second key technique is to uniformly-at-random sample elements from a sufficiently large set $\GoodElems$ of valuable elements. The $\PSep$-superseparability of $\Funcf$ ensures that newly sampled elements, on average, remain valuable even after previously sampled elements are added to a solution, bypassing the need to make sequential oracle queries; this is formally stated in \cref{lemma:expected gain of sample}. We use this technique in \PGSa{}, a subroutine of \LSTPGSa{}.
	
	Furthermore, independent of $\PSep$-superseparability, we develop a new ``element filtering'' technique, used by \TBSa{} to achieve its improved expected query complexity of $O(\GroundSubsetSize + \frac{\log \GroundSubsetSize}{\ErrorA})$. The key insight in \TBSa{} is to avoid repeated filtering queries on \emph{all} remaining elements, and instead mainly perform the filtering queries on small random samples or ``blocks'' of the remaining elements.
	
	\subsection{Paper Structure} We give additional related work in \cref{sec:related work}, and then present preliminaries in \cref{sec:preliminaries}, including the definition and intuition of $\PSep$-superseparability, and an overview of the state-of-the-art algorithm \LSPGBa{}. We then give the descriptions and analyses of \LSTPGSa{} and \PGSa{} in \cref{sec:parallel alg p-super}, \TBSa{} in \cref{sec:thresholdblockseq}, and \LALSTa{} in \cref{sec:simple}. Finally, we give some conclusions in \cref{sec:conclusions}.
	
	\subsection{Additional Related Work} \label{sec:related work}
	\paragraph{$\PSep$-Separable Submodular Functions.} \label{sec:p-sep functions}
	Skowron originally proposed $\PSep$-separable functions \cite{SKOWRON201765} for the purpose of showing that \SMCC{} admits fixed-parameter tractable (FPT) algorithms with respect to parameters $\Card$ and $\PSep$, achieving arbitrarily good approximation factors. Moreover, Skowron showed that \SMCC{} for $\PSep$-superseparable or $\PSep$-subseparable $\Funcf$ captures a range of real-world maximization problems. For example, it captures a general problem framework where one must select $\Card$ items to maximize the total satisfaction of agents that can only approve of $\PSep$ items; this framework models problems in multiwinner election systems, recommendation systems, and facility location. These two variants of \SMCC{} also capture the Max-$\Card$ (Weighted) Coverage problem where each element's frequency is at most $\PSep$, i.e., each element appears in at most $\PSep$ sets. On the other hand, \SMCC{} for rev-$\PSep$-subseparable $\Funcf$ captures the same problem but each element's frequency is at least $\PSep$. 
	
	Skowron proposed an FPT algorithm for $\PSep$-superseparable \SMCC{} that returns an $\ApproxParamA$-approximate solution, where $0 < \ApproxParamA < 1$. This algorithm works by constructing the set $\TopElem$ of top-$ \lceil \frac{\PSep\Card}{1-\ApproxParamA} + \Card \rceil $ elements by value and then, by brute-force search, returning the $\Card$-size subset of $\TopElem$ that maximizes $\Funcf$. The $\ApproxParamA$-approximation of the returned subset follows from the $\PSep$-superseparability of $\Funcf$ amongst its other properties, as shown in Theorem 1 of the paper.
	
	Skowron also proposed an FPT algorithm for $\PSep$-subseparable \SMCC{} that returns a solution $\SetS$ satisfying $ \Funcf(\GroundSet) - \Funcf(\SetS) \leq \ApproxParamA( \Funcf(\GroundSet) - \OptValue )$ with probability arbitrarily close to 1, where $\ApproxParamA > 1$. This algorithm works by first defining a procedure `\SR{}'. This procedure builds a solution by performing $\Card$ sequential samples of the ground set, where an element is sampled with probability proportional to its marginal gain to the current solution. By the $\PSep$-subseparability of $\Funcf$ amongst its other properties, \SR{} has a non-negligible probability of sampling $\Card$ optimal solution elements while the partial solutions do not satisfy the required approximation. The main FPT algorithm simply repeats \SR{} sufficiently many times to boost the probability of sampling a solution that is optimal or that satisfies the required approximation.
	
	Further, Skowron showed that, for maximizing a non-negative, monotone, rev-$\PSep$-subseparable function (not necessarily submodular) under a cardinality constraint, the standard greedy algorithm returns a $(1-e^{-\frac{\PSep\Card}{\GroundSetSize}})$-approximation. If the function is additionally assumed to be submodular, then the greedy algorithm returns a $(1-e^{-\max \{ \frac{\PSep\Card}{\GroundSetSize},1 \}})$-approximation.
	
	\paragraph{Curvature and Homogeneity of a Submodular Function.} \label{sec:curv and hom functions}
	\emph{Curvature} is a different property of submodular functions $\Funcf$ that was introduced by Conforti and Cornuejols \cite{Conforti1984} and has since been well-studied \cite{Vondrak2010,Iyer2013,Iyer2013a,Balkanski2016,Sviridenko2017,Balkanski2018}. $\Funcf$ has curvature $\Curv$, where $0 \leq \Curv \leq 1$, iff for each $\SetS \subseteq \GroundSet$ and $\Elem \in \GroundSet \setminus \SetS \colon \Funcf(\Elem \mid \SetS) \geq (1-\Curv)\Funcf(\Elem)$. The main interest in curvature is that, when $\Funcf$ has bounded curvature, approximation factors better than $1-\frac{1}{e}$ can be obtained for a number of submodular optimization problems. For example, Conforti and Cornuejols originally showed that the greedy algorithm achieves a $\frac{1-e^{-\Curv}}{\Curv}$-approximation for \SMCC{} \cite{Conforti1984}. This result was later improved by Sviridenko et al., who provided a $(1-\frac{\Curv}{e})$-approximation algorithm \cite{Sviridenko2017}; moreover, they showed that this is the best approximation possible using a polynomial number of oracle queries.
	
	Interestingly, Balkanski and Singer \cite{Balkanski2018} studied \SMCC{} for $\Funcf$ with bounded curvature in the adaptive complexity model. They gave a modification of their original \AS{} algorithm \cite{balkanski2018adaptive}, achieving an approximation factor arbitrarily close to $\max(1-\Curv, \frac{1}{2})$ in $O( \frac{\log \GroundSetSize}{1-\Curv} )$ adaptive rounds. This suggests a trade-off between the curvature of $\Funcf$ and the adaptive complexity needed to maintain a near $\frac{1}{2}$-approximation. Moreover, they proved that $\Omega(\frac{\log \GroundSetSize}{\log \log \GroundSetSize})$ adaptive rounds are required to achieve a $(1-\Curv + o(1))$-approximation.
	
	Balkanski and Singer additionally studied a property of submodular functions known as \emph{homogeneity}, although this is not as well-studied as curvature. $\Funcf$ is $\Homo$-homogeneous, where $\Homo \geq 0$, iff for all $\Elem \in \GroundSet \colon \Funcf(\Elem) \leq (1+\Homo)\frac{\OptValue}{\Card}$. Balkanski and Singer showed that, for \SMCC{} for $\Homo$-homogeneous $\Funcf$ (with curvature $\Curv$), the approximation factor of \AS{} is strengthened to be arbitrarily close to $1-\frac{\Homo}{2\Homo+1}$ while retaining its adaptive complexity of $O( \frac{\log \GroundSetSize}{1-\Curv} )$.
	
	Although the aforementioned work of Balkanski and Singer is similar in spirit to ours, curvature is not comparable to $\PSep$-separability as pointed out by Skowron. There are $\PSep$-separable functions with good values of $\PSep$ that have bad curvature (large $\Curv$) \cite{SKOWRON201765}, and conversely there are functions with good curvature (small $\Curv$) that have bad values of $\PSep$.
	
	\section{Preliminaries} \label{sec:preliminaries}
	Denote by $ \GroundSet $ the \emph{ground set} of elements. For every set $\SetA \subseteq \GroundSet$ and a \emph{set function} $ \Funcf $, $\Funcf(\SetA)$ is called the \emph{value} of $\SetA$. Given two sets $ \SetA,\SetB \subseteq\GroundSet$, we define $ \Funcf( \SetA \mid \SetB ) $ to be the \emph{marginal gain} of $ \SetA $ to $ \SetB $, i.e., $ \Funcf( \SetA \mid \SetB ) \coloneqq \Funcf(\SetA \cup \SetB) - \Funcf(\SetB) $. When expressing the value, or marginal gain, of a singleton set,~$\{ \Elem \}$, we abuse notation and use~$\Funcf(\Elem)$ and $\Funcf(\Elem \mid \SetS)$, rather than $\Funcf(\{\Elem\})$ and $\Funcf(\{\Elem\} \mid \SetS)$. We denote by $\OptSet$ the optimal solution to the \SMCC{} problem, and $\OptValue \coloneqq \Funcf(\OptSet)$.
	\begin{definition}[Submodular; monotone; non-negative]
		Set function~$ \Funcf $ is \emph{submodular} iff \mbox{$ \forall \SetA, \SetB \subseteq \GroundSet $} such that \mbox{$ \SetA\subseteq \SetB $}, and $ \forall \Elem \in \GroundSet \setminus \SetB \colon
		\Funcf( \Elem \mid \SetA) \geq \Funcf( \Elem \mid \SetB )$; \emph{monotone} iff $ \forall \SetA, \SetB \subseteq \GroundSet \colon \Funcf( \SetA \mid \SetB ) \geq 0$; \emph{non-negative} iff $ \forall \SetA \subseteq \GroundSet \colon \Funcf(\SetA) \geq 0$.
	\end{definition}
	
	\subsection{$\PSep$-Superseparable Functions} \label{sec:p-superseparable}
	Our interest is in the class of $ \PSep $-superseparable functions, \cref{def:p-separable}, introduced by Skowron \cite{SKOWRON201765}.
	A larger $\PSep$ represents a more general class, so a smaller~$\PSep$ yields stronger results.
	When $\Funcf$ is non-negative and submodular, the smallest sensible value for $\PSep$ is $1$; on the other hand, every monotone $\Funcf$ is $\GroundSetSize$-superseparable. We give some background and applications for $\PSep$-separable functions in \cref{sec:p-sep functions}.
	\begin{definition}[$ \PSep $-superseparable set function \cite{SKOWRON201765}]\label{def:p-separable}
		A set function, $ \Funcf $, is $ \PSep $\emph{-superseparable} iff $ \forall \SetA \subseteq \GroundSet\colon $
		\begin{align}
			\sum_{\Elem\in\GroundSet} \Funcf(\Elem \mid \SetA) \geq \sum_{x\in\GroundSet} \Funcf(\Elem) - \PSep \Funcf(\SetA). \label{eqn:p-superseparable}
		\end{align}
	\end{definition}
	
	\paragraph{Intuition of $\PSep$-Superseparable Functions.}
	Intuitively, dividing both sides of \cref{eqn:p-superseparable} by $\GroundSetSize$, the left-hand side becomes the average marginal gain of an element $\Elem$ to $\SetS$, while the right-hand side becomes the average individual value of an element minus the average ``loss'' or ``overlap'' due to $\SetS$. Hence, $\PSep$-superseparability ensures that, {\em on average}, a single element $\Elem$ loses at most $\frac{\PSep}{\GroundSetSize}\Funcf(\SetS)$ from its individual value to give $\Funcf(\Elem \mid \SetS)$.
	
	\subsection{An Overview of the State-of-the-Art Algorithm} \label{sec:overview of LSPGB}
	We now give an overview of \LSPGBa{} by Chen et al.~\cite{chen2021best}, which is the starting point for our algorithm \LSTPGSa{} as proposed in \cref{sec:parallel alg p-super} (see \cref{table:kpi-smcc}). \LSPGBa{} comprises two procedures performed in sequence, where \PGBa{} invokes \TS{} as a subroutine. Our outline includes our notation rather than that of Chen et al. \cite{chen2021best}.
	
	\begin{description}
		\item[\LS{} (\LSa{})] This is a pre-processing procedure whose purpose, given an approximation error $\ErrorB$, is to obtain a value $\OptLB$ satisfying $\OptLB \leq \OptValue \leq \frac{\OptLB}{\ApproxParam}$ for $\ApproxParam = (4 + O(\ErrorB))^{-1}$. It uses $O(\frac{1}{\ErrorB^3}\log \GroundSetSize)$ adaptive rounds and $O( (\frac{1}{\ErrorB \Card} + 1 ) \frac{\GroundSetSize}{\ErrorB^3} )$ expected queries. The quantity~$\ErrorB$ can be set constant without affecting the main approximation error,~$\Error$.
		\item[\PGB{} (\PGBa{})] With the previously obtained $\OptLB$ and $\ApproxParam$, \PGBa{} initializes a threshold $\Threshold$ to upper-bound the average value of an optimal solution element, i.e., $ \Threshold = \frac{\OptLB}{\ApproxParam\Card} \geq \frac{\OptValue}{\Card} $. It then uses a diminishing-threshold strategy to achieve the final $(1-\frac{1}{e}-\Error)$-approximation, while using $O(\frac{1}{\Error^2}\log(\frac{\GroundSetSize}{\Error}))$ adaptive rounds and $O( \frac{\GroundSetSize}{\Error^2} )$ expected queries. It is crucial that $\ApproxParam$ is a constant, as this helps to bound the number of threshold diminutions over a while-loop.
		\item[\TS{} (\TSa{})] For each threshold $\Threshold$ considered and a current solution $\SetS$, \PGBa{} calls \TSa{} to select a set of elements $\SetT$ whose marginal gain to $\SetS$ approximately satisfies $\Threshold |\SetT|$; \PGBa{} then appends $\SetT$ to $\SetS$. Given a failure probability term $\ProbFail$, \TSa{} uses $O( \frac{1}{\ErrorA} \log( \frac{\GroundSetSize}{\ProbFail} ) )$ adaptive rounds and $O( \frac{\GroundSetSize}{\ErrorA} )$ expected queries. \TSa{} performs a loop, in which each iteration appends elements to $\SetT$ using an improved version of the \emph{adaptive sequencing} technique. Adaptive sequencing was introduced by Balkanski et al. for monotone submodular maximization under a matroid constraint \cite{10.1145/3313276.3316304}, and refined in the FAST algorithm by Breuer et al. for \SMCC{} \cite{breuer2020fast}. We describe \TSa{} in more detail in \cref{sec:thresholdblockseq} so as to directly compare it with our improved procedure \TBSa{}.
	\end{description}
	
	\section{Parallel Algorithm for $\PSep$-Superseparable \SMCC{}} \label{sec:parallel alg p-super}
	In this section, we introduce our main parallel approximation algorithm and its components, and formalize the performance guarantees.
	
	\subsection{\LSTPGS{}} \label{sec:overview of LSTPGS}
	We begin with \LSTPGSa{}, pseudocode in \cref{alg:LSTPGS}, for $\PSep$-superseparable \SMCC{}, with theoretical guarantees in \cref{theorem:LSTPGS results} below.
	The performance guarantees of \LSTPGSa{} for the {\em general} \SMCC{} problem are given in \cref{theorem:LSTPGS general results}.
	
	\begin{theorem}\label{theorem:LSTPGS results}
		Let $(\Funcf, \Card)$ be an instance of \SMCC{} where $\Funcf$ is $\PSep$-superseparable. Suppose \LSTPGSa{} (\cref{alg:LSTPGS}) is run such that $0 < \Error < 1-\frac{1}{e}$, $0 < \ErrorB < \frac{1}{2}$, and $0 < \ApproxParamA < 1$, where $\ErrorB$ and $\ApproxParamA$ are constants. Then, with probability $1 - O ( \frac{\Error}{\PSep \Card} )$, \LSTPGSa{} achieves:
		\begin{itemize}
			\item a solution $\SetS$ satisfying $|\SetS| \leq \Card$ and $\mathbb{E}[ \Funcf(\SetS) ] \geq ( 1-\frac{1}{e}-\Error ) \OptValue$,
			\item an adaptive complexity of $O( \frac{1}{\Error^2} \log(\frac{\PSep \Card}{\Error}) )$, and
			\item an expected query complexity of $O( \GroundSetSize + \frac{\PSep\Card}{\Error^2} + \frac{1}{\Error^2}\log( \frac{\PSep \Card}{\Error} ) )$.
		\end{itemize}
	\end{theorem}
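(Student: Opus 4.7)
The plan is to show that \LSTPGSa{} reduces \SMCC{} on $\GroundSet$ to essentially the same problem on a much smaller ground set $\TopElem$ of top-valued singletons, and then invokes the subroutine \PGSa{} on $\TopElem$; the three claims of the theorem then follow by combining \cref{lemma:top elems opt value}, which controls the loss from the restriction, with \cref{theorem:PGS results}, which controls the loss, adaptivity, and expected queries of \PGSa{} on its input set.

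First I would describe the reduction step explicitly. \LSTPGSa{} extracts $\TopElem$, the set of the top $\lceil \PSep\Card/(1-\ApproxParamA) + \Card\rceil = O(\PSep\Card)$ singletons by value $\Funcf(\cdot)$, using $\GroundSetSize$ queries in a single adaptive round. By \cref{lemma:top elems opt value}, $\TopElem$ contains a $\Card$-subset of value at least $\ApproxParamA\cdot\OptValue$, hence $\OptValueTopElem \geq \ApproxParamA\cdot\OptValue$. The remainder of \LSTPGSa{} replays the \LSPGBa{} template on $\TopElem$ in place of $\GroundSet$: a \LST{}-style pre-processing produces a value $\OptLB$ satisfying $\ApproxParam\cdot\OptValueTopElem \leq \OptLB \leq \OptValueTopElem$ for a constant $\ApproxParam$, and then \PGSa{} is run on $\TopElem$ seeded with $\OptLB$.

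For the approximation bound, I would invoke \cref{theorem:PGS results} to obtain $\Expect[\Funcf(\SetS)] \geq (1 - 1/e - \ErrorA)\OptValueTopElem \geq \ApproxParamA(1-1/e-\ErrorA)\OptValue$ for a tunable $\ErrorA$. Expanding $\ApproxParamA(1-1/e-\ErrorA) = (1-1/e) - (1-\ApproxParamA)(1-1/e) - \ApproxParamA\ErrorA$ and choosing $\ApproxParamA$ with $(1-\ApproxParamA)(1-1/e) \leq \Error/2$ together with $\ErrorA \leq \Error/2$ (both independent of $\GroundSetSize$, consistent with the theorem's ``constant'' hypothesis on $\ApproxParamA$) yields $\Expect[\Funcf(\SetS)] \geq (1 - 1/e - \Error)\OptValue$. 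The adaptivity follows directly from \cref{theorem:PGS results} applied with input size $|\TopElem| = O(\PSep\Card)$, giving $O(\Error^{-2}\log(\PSep\Card/\Error))$ adaptive rounds. The expected queries decompose as $\GroundSetSize$ (for extracting $\TopElem$) plus the \PGSa{} cost $O(|\TopElem| + |\TopElem|/\Error^{2} + \Error^{-2}\log(|\TopElem|/\Error))$, which sums to the claimed bound.

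The main obstacle will be threading the failure probability cleanly through both stages: both the construction of $\OptLB$ and \PGSa{} succeed only with high probability, and I would take a union bound over these two events to match the claimed $1 - O(\Error/(\PSep\Card))$ success probability. Conditional on both succeeding, the expectation bound above holds over the remaining internal randomness of \PGSa{}; I would verify that the polynomially-small failure probability, combined with a trivial $0$-value bound in the failure case and the non-negativity of $\Funcf$, contributes at most a lower-order term that can be absorbed into the stated $\Error$ slack.
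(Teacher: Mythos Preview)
Your proposal misreads the algorithm in a way that breaks the approximation argument. In \cref{alg:LSTPGS}, only \LST{} is run on $\TopElem$; the call in \cref{line:LSTPGS run PGS} passes the \emph{full} ground set $\GroundSet$ to \PGSa{}, not $\TopElem$. The role of $\TopElem$ is solely to produce a value $\OptLB$ with $\OptLB \leq \OptValue \leq \OptLB/\ApproxParam$ for a \emph{constant} $\ApproxParam$ (via \cref{lemma:top elems opt value} and the \LST{} guarantee), so that \PGSa{} can then be invoked on $\GroundSet$ and return a solution with $\Expect[\Funcf(\SetS)] \geq (1-1/e-\Error)\OptValue$ directly, with no $\ApproxParamA$ factor to absorb. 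Your route, running \PGSa{} on $\TopElem$, would only yield $\ApproxParamA(1-1/e-\ErrorA)\OptValue$; you then try to recover by choosing $\ApproxParamA$ depending on $\Error$, but the theorem is stated for an arbitrary constant $\ApproxParamA \in (0,1)$ supplied as input, so you are not free to make that choice.

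Relatedly, you attribute the $O(\Error^{-2}\log(\PSep\Card/\Error))$ adaptivity of \PGSa{} to a small input set. In the paper this bound holds even with input $\GroundSet$: it comes from the internal structure of \PGSa{}, namely that whenever \TBSa{} is called the set $\GoodElems_\IterIndex$ passed to it has size at most $\MinSizeForSample_{\IterIndex} \leq \MinSizeForSampleMax = O(\PSep\Card/(\ApproxParam\Error))$ (\cref{claim:num good elems upper-bound}), and otherwise the sampling branch uses zero adaptive rounds. Likewise the $\GroundSetSize$ term in the query complexity is incurred inside \PGSa{} itself (the loop assigning $\ElemGain_\Elem \gets \Funcf(\Elem)$), not only in extracting $\TopElem$. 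So while your final complexity bounds coincidentally match, the mechanism you cite is not the one that actually delivers them.
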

	
	\paragraph{Description of \LSTPGSa{}.}
	Based on \LSPGBa{} \cite{chen2021best}, our two key modifications achieve an adaptive complexity dependent on $\PSep$ and $\Card$ rather than~$\GroundSetSize$.
	\begin{enumerate}
		\item To obtain an initial value, $\OptLB$, satisfying $\OptLB \leq \OptValue \leq \frac{\OptLB}{\ApproxParam}$ for a constant $\ApproxParam$, instead of running \LS{} on $\GroundSet$, \LS{} is run only on the set $\TopElem$ of top-$\lceil \frac{\PSep \Card}{1-\ApproxParamA} + \Card \rceil$ elements by individual value, thus using only $O(\frac{1}{\ErrorB^3} \log( \frac{\PSep \Card}{1-\ApproxParamA}))$ adaptive rounds; $\ErrorB$ and $\ApproxParamA$ can be constant.
		\item Instead of running \PGBa{}, our procedure \PGSa{} is run on $\GroundSet$ (taking $\OptLB$ and $\ApproxParam$ in its input). \PGSa{} uses only $O( \frac{1}{\Error^2} \log(\frac{\PSep \Card}{\Error}) )$ adaptive rounds at the expense of returning an \emph{expected} $(1-\frac{1}{e}-\Error)$-approximation.
	\end{enumerate}
	
	\begin{algorithm}
		\caption{} \label{alg:LSTPGS}
		\begin{algorithmic}[1]
			\Procedure{\LSTPGS}{$\Funcf, \GroundSet, \PSep, \Card, \ApproxParamA, \ErrorB, \Error$}
			\Input{value oracle $ \Funcf \colon 2^\GroundSet \rightarrow \mathbb{R}_{\geq 0} $, ground set $ \GroundSet $, parameter $ \PSep $ such that $ \Funcf $ is $ \PSep $-superseparable, cardinality constraint $ \Card $, initial approximation error $ \ErrorB $, initial approximation term $ \ApproxParamA $,  approximation error $\Error$}
			\Output{set $ \SetS $ satisfying $ \mathbb{E}[\Funcf(\SetS)] \geq \left(1-\frac{1}{e}-\Error \right) \OptValue $}
			\State $ \TopElem \gets $ set of top-$ \left\lceil \frac{\PSep\Card}{1-\ApproxParamA} + \Card \right\rceil $ elements $ \Elem \in \GroundSet $ by value $ \Funcf(\Elem) $ \label{line:LSTPGS assign TopElem}
			\State $ \SetQ \gets \textsc{LinearSeq}(\Funcf, \TopElem, \Card, \ErrorB) $ \label{line:LSTPGS run LinearSeq}
			\State $ \OptLB \gets \Funcf(\SetQ) $ \label{line:LSTPGS assign Opt LB}
			\If{$ |\TopElem| < |\GroundSet| $}
			\State $ \ApproxParam \gets \ApproxParamA \left(4 + \frac{4(2-\ErrorB)\ErrorB}{(1-\ErrorB)(1-2\ErrorB)} \right)^{-1} $ \label{line:LSTPGS assign ApproxParam}
			\Else
			\State $ \ApproxParam \gets \left(4 + \frac{4(2-\ErrorB)\ErrorB}{(1-\ErrorB)(1-2\ErrorB)} \right)^{-1} $ \label{line:LSTPGS assign def ApproxParam}
			\EndIf
			\State $ \SetS \gets \textsc{\PGS}(\Funcf, \GroundSet, \PSep, \Card, \ApproxParam, \OptLB, \Error) $ \label{line:LSTPGS run PGS}
			\State \Return{$ \SetS $} \label{line:LSTPGS return solution}
			\EndProcedure
		\end{algorithmic}
	\end{algorithm}
	
	\paragraph{Deriving \cref{theorem:LSTPGS results}.}
	At a high-level, we obtain the guarantees in \cref{theorem:LSTPGS results} by combining the guarantees of running \LS{} on the set $\TopElem$ of top-valued elements and the guarantees of \PGS{} when $\ApproxParam$ is constant (see \cref{theorem:PGS results} below), as they are run one at a time.
	
	We can ensure $ \ApproxParam $ is constant by setting $\ErrorB$ and $\ApproxParamA$ constant and by \cref{lemma:top elems opt value} below, which follows from Theorem 1 of Skowron~\cite{SKOWRON201765}.
	This lemma guarantees that, for $\PSep$-superseparable \SMCC{}, running an approximation algorithm such as \LS{} on a sufficiently large set of ``top-valued'' elements $\TopElem$ only worsens its approximation by a factor of $\ApproxParamA$, since the optimal solution within the top-valued elements is an $\ApproxParamA$-approximation of the optimal solution in $\GroundSet$. Note also that if $\TopElem = \GroundSet$, then $\ApproxParam$ defaults to $(4+O(\ErrorB))^{-1}$.
	
	\begin{lemma}[Best $\Card$-size subset in top-valued elements \cite{SKOWRON201765}] \label{lemma:top elems opt value}
		Let $(\Funcf, \Card)$ be an instance of \SMCC{} where $\Funcf$ is $\PSep$-superseparable. Further, let $\ApproxParamA$ be a parameter such that $0 < \ApproxParamA < 1$, let $\TopElem$ be the set of top-$ \lceil \frac{\PSep\Card}{1-\ApproxParamA} + \Card \rceil $ elements $ \Elem \in \GroundSet $ by value $ \Funcf(\Elem) $, and let $\BestSetTopElem \subseteq \TopElem$ be a $\Card$-size subset that maximizes $\Funcf$. Then $\Funcf(\BestSetTopElem) \geq \ApproxParamA \OptValue$.
	\end{lemma}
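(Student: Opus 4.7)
My plan is to appeal directly to Theorem~1 of Skowron~\cite{SKOWRON201765}, which establishes precisely this statement modulo notation. Skowron's algorithm takes the top-$\lceil\PSep\Card/(1-\ApproxParamA)+\Card\rceil$ elements by individual value and returns the $\Card$-subset maximizing $\Funcf$, and his theorem guarantees that this subset attains value at least $\ApproxParamA \OptValue$ whenever $\Funcf$ is $\PSep$-superseparable. Since $\BestSetTopElem$ is by definition this maximizing $\Card$-subset of $\TopElem$, the lemma follows as an immediate corollary. I would simply state the notational correspondence (our $\ApproxParamA$ is Skowron's approximation parameter, our $\TopElem$ his top-valued set) and cite the result.

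If a self-contained argument were preferred, I would reproduce Skowron's proof. Assume WLOG that $|\OptSet|=\Card$, set $U = \OptSet \cap \TopElem$, $W = \OptSet \setminus \TopElem$, and let $R$ be any $|W|$-size subset of $\TopElem \setminus U$; such an $R$ exists because $|\TopElem \setminus U| \ge \lceil\PSep\Card/(1-\ApproxParamA)\rceil \ge |W|$, and the candidate $\SetS = U \cup R$ satisfies $|\SetS| = \Card$ and $\SetS \subseteq \TopElem$. Monotonicity then gives $\Funcf(\SetS \cup W) \ge \Funcf(\OptSet) = \OptValue$, and combining with subadditivity of $\Funcf(W \mid \SetS)$ yields
\[
\Funcf(\BestSetTopElem) \;\ge\; \Funcf(\SetS) \;\ge\; \OptValue - \sum_{w \in W} \Funcf(w) \;\ge\; \OptValue - |W| \cdot m,
\]
where $m \coloneqq \min_{\Elem \in \TopElem} \Funcf(\Elem)$, using $\Funcf(\Elem) \le m$ for every $\Elem \notin \TopElem$. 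The proof would then split on whether $m$ lies below or above the threshold $\tau \approx (1-\ApproxParamA)\OptValue/\Card$; the ``small $m$'' branch is immediate from the display above.

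The hard part will be the ``large $m$'' branch, where every element of $\TopElem$ has individual value exceeding $\tau$. Here I would invoke $\PSep$-superseparability twice---with $\SetA = \OptSet$ it yields $\sum_{o \in \OptSet} \Funcf(o) \le \PSep \OptValue$, and with $\SetA = \SetS$ it yields $\Funcf(\SetS) \ge \tfrac{1}{\PSep}\sum_{\Elem \in \SetS}\Funcf(\Elem)$---so that the large individual values of elements in $\SetS$ translate into a useful lower bound on $\Funcf(\SetS)$. Balancing these inequalities against the specific choice of $|\TopElem| = \lceil\PSep\Card/(1-\ApproxParamA)+\Card\rceil$ is what yields the $\ApproxParamA$-approximation for \emph{arbitrary} $\ApproxParamA \in (0,1)$, rather than just $\ApproxParamA \le 1/(1+\PSep)$ that a naive analysis delivers; faithfully reproducing Skowron's accounting is the main technical hurdle, and I would follow his derivation step by step.
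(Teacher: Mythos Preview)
Your proposal is correct and matches the paper's approach: the paper does not give its own proof of this lemma but simply attributes it to Theorem~1 of Skowron~\cite{SKOWRON201765}, exactly as your primary plan suggests. Your optional self-contained sketch goes beyond what the paper provides.
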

	
	\subsection{\PGS{}} \label{sec:overview of PGS}
	\PGS{} (\PGSa{}), pseudocode in \cref{alg:PGS}, is the greedy-thresholding procedure called by \LSTPGSa{} to find an expected $(1-\frac{1}{e}-\Error)$-approximate solution in $O( \frac{1}{\Error^2} \log(\frac{\PSep \Card}{\Error}) )$ adaptive rounds and $O( \GroundSetSize + \frac{\PSep\Card}{\Error^2} + \frac{1}{\Error^2}\log( \frac{\PSep \Card}{\Error} ) )$ expected queries for constant $\ApproxParam$. Theoretical guarantees are given in \cref{theorem:PGS results}, and formally proven in \cref{sec:PGS analysis}.
	
	\begin{theorem} \label{theorem:PGS results}
		Let $(\Funcf, \Card)$ be an instance of \SMCC{} where $\Funcf$ is $\PSep$-superseparable. Suppose \PGSa{} (\cref{alg:PGS}) is run such that $\OptLB \leq \OptValue \leq \frac{\OptLB}{\ApproxParam}$ and $0 < \Error < 1-\frac{1}{e}$. Then, with probability $1-\frac{\ApproxParam\Error}{6 \PSep \Card}$, \PGSa{} achieves:
		\begin{itemize}
			\item a solution $\SetS$ satisfying $|\SetS| \leq \Card$ and $\mathbb{E}[ \Funcf(\SetS) ] \geq ( 1-\frac{1}{e}-\Error ) \OptValue$,
			\item an adaptive complexity of $ O( \frac{\log( \ApproxParam^{-1} )}{\Error^2} \log( \frac{\PSep \Card \log( \ApproxParam^{-1} ) }{\ApproxParam \Error}) ) $, and
			\item an expected query complexity of $ O( \GroundSetSize + \frac{(1-\ApproxParam)\PSep \Card}{\Error^2} + \frac{\log(\ApproxParam^{-1})}{\Error^2} \log( \frac{\PSep \Card}{\Error} ) ) $.
		\end{itemize}
	\end{theorem}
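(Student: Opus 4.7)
The plan is to view \PGSa{} as a modified \PGBa{}-style diminishing-threshold procedure in which (i) the inner thresholding subroutine is \TBSa{} rather than \TSa{}, so I can invoke \cref{theorem:TBS results}, and (ii) inside each outer iteration the deterministic selection of high-marginal-gain elements is replaced by uniform sampling from a pool $\GoodElems$ of candidates, where $\PSep$-superseparability of $\Funcf$ (via \cref{lemma:expected gain of sample}) guarantees that each sampled element retains a $(1-O(\Error))\Threshold$ lower bound on its expected marginal gain even as the solution grows. The proof then decomposes into the expected approximation, the adaptive complexity, the expected query complexity, and a final union bound over \TBSa{} invocations giving the $1 - \ApproxParam\Error/(6\PSep\Card)$ success probability.

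For the approximation, I would first observe that the outer loop decreases the threshold $\Threshold$ multiplicatively (by a factor of roughly $1-\Error$), starting at $\OptLB/(\ApproxParam\Card) \geq \OptValue/\Card$ and terminating at an order-$\Error\OptLB/\Card$ cutoff; this produces $O(\log(\ApproxParam^{-1})/\Error)$ outer iterations. For each individual element $e$ sampled and appended to the solution $\SetS$, \cref{lemma:expected gain of sample} supplies a conditional bound of the form $\mathbb{E}[\Funcf(e \mid \SetS)] \geq (1-O(\Error))\cdot (\OptValue - \Funcf(\SetS))/\Card$, where the $O(\Error)$ slack absorbs both the threshold-approximation loss of \TBSa{} and the $\PSep$-superseparability correction. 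Iterating this conditionally across the $t \leq \Card$ appended elements yields the standard recursion $\OptValue - \mathbb{E}[\Funcf(\SetS_t)] \leq (1 - (1-O(\Error))/\Card)\,(\OptValue - \mathbb{E}[\Funcf(\SetS_{t-1})])$, from which the $(1-\tfrac{1}{e}-\Error)$ expected approximation follows after absorbing the minimum-threshold loss into the final $\Error$.

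For the complexity bounds I would compose the $O(\log(\ApproxParam^{-1})/\Error)$ outer iterations with the per-call cost of \TBSa{}. The candidate pool at any threshold $\Threshold$ has size $O(\PSep\Card/\Error)$, since inequality \eqref{eqn:p-superseparable} applied with $\SetA = \SetS$ caps the number of elements with marginal gain $\geq \Threshold$ by $(\sum_{e}\Funcf(e)+\PSep\Funcf(\SetS))/\Threshold$, which is $O(\PSep\OptValue/\Threshold) = O(\PSep\Card/\Error)$ once $\Threshold$ has fallen to its operating range. Setting the per-call \TBSa{} failure probability to $\ProbFail = \Theta(\ApproxParam\Error^2/(\PSep\Card\log(\ApproxParam^{-1})))$ and using \cref{theorem:TBS results} then gives per-call adaptivity $O(\tfrac{1}{\Error}\log(\PSep\Card\log(\ApproxParam^{-1})/(\ApproxParam\Error)))$ and per-call expected query cost $O(\GroundSubsetSize + \tfrac{\log \GroundSubsetSize}{\Error})$. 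Multiplying and summing over the outer iterations yields the stated adaptive and query complexities, while one initial pass evaluating every $\Funcf(e)$ accounts for the $O(\GroundSetSize)$ query term and a union bound over the \TBSa{} calls yields the overall failure probability.

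The main obstacle will be making the expected-approximation argument fully rigorous in the presence of sampling randomness: \cref{lemma:expected gain of sample} must be invoked in a conditional, per-round form that survives iteration across up to $\Card$ (adaptively chosen) random samples, and the accumulated $1-O(\Error)$ multiplicative slack must be tracked carefully so that it does not inflate the additive $\Error$ in the final approximation. A secondary technical point is the tight $O(\PSep\Card/\Error)$ bound on the candidate pool size under $\PSep$-superseparability; this is precisely where the class assumption supplants generic submodular maximization, and it is what propagates the $\PSep\Card$ (rather than $\GroundSetSize$) dependence through every complexity term.
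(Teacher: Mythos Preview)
Your high-level decomposition (approximation, adaptivity, queries, union bound) matches the paper's, but there is a structural misunderstanding of how \PGSa{} works that breaks your complexity argument.

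The point you miss is the \emph{branching structure} of each outer iteration: in iteration $\IterIndex$, the algorithm builds $\GoodElems_{\IterIndex} = \{x \in \GroundSet\setminus\SetS_{\IterIndex-1} : \Funcf(x) \geq \Threshold_{\IterIndex}\}$ (defined by \emph{individual} value $\Funcf(x)$, not marginal gain) and then \emph{either} samples uniformly from $\GoodElems_{\IterIndex}$ (if $|\GoodElems_{\IterIndex}| \geq \MinSizeForSample_{\IterIndex}$) \emph{or} calls \TBSa{} on $\GoodElems_{\IterIndex}$ (if $|\GoodElems_{\IterIndex}| < \MinSizeForSample_{\IterIndex}$), never both. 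Consequently the bound $|\GoodElems_{\IterIndex}| < \MinSizeForSample_{\IterIndex} \leq \MinSizeForSampleMax = O(\PSep\Card/(\ApproxParam\Error))$ on the input to \TBSa{} holds \emph{by construction}: the algorithm only enters the \TBSa{} branch when the size check fails. Your attempt to derive such a bound from $\PSep$-superseparability cannot work: inequality~\eqref{eqn:p-superseparable} is a \emph{lower} bound on $\sum_x \Funcf(x \mid \SetS)$, not an upper bound, and $\sum_x \Funcf(x)$ is not $O(\PSep\cdot\OptValue)$ in general (take $\Funcf$ modular, hence $1$-superseparable, with $\GroundSetSize$ unit-valued elements; then $\sum_x \Funcf(x) = \GroundSetSize$ while $\OptValue = \Card$, and $|\GoodElems_{\IterIndex}| = \GroundSetSize$ for $\Threshold_{\IterIndex} \leq 1$).

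The role of $\PSep$-superseparability in the paper is exactly the complementary one: it is invoked \emph{only} in the sampling branch, where \cref{lemma:expected gain of sample} shows that when $|\GoodElems_{\IterIndex}| \geq \MinSizeForSample_{\IterIndex}$, a uniform sample $\SetT_{\IterIndex}$ satisfies $\mathbb{E}[\Funcf(\SetT_{\IterIndex}\mid\SetS_{\IterIndex-1})] \geq |\SetT_{\IterIndex}|(1-\Error/2)\Threshold_{\IterIndex}$, matching what \TBSa{} would have delivered but at zero adaptive cost. The approximation argument then proceeds at the level of batches $\SetT_{\IterIndex}$ (not individual elements as you suggest), combining the per-batch gain with the threshold lower bound $\Threshold_{\IterIndex} \geq (1-\Error)(\OptValue - \Funcf(\SetS_{\IterIndex-1}))/\Card$ and a case split on $|\SetS| < \Card$ versus $|\SetS| = \Card$; note that the sampling branch can occur only in the final iteration, since it fills the solution to size~$\Card$. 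Once you adopt this branching view, the adaptive and query bounds follow by multiplying the per-call \TBSa{} cost (on input of size at most $\MinSizeForSampleMax$) by the $O(\log(\ApproxParam^{-1})/\Error)$ iteration count, and the union bound for the success probability is over the same number of \TBSa{} calls with $\ProbFail = (\log_{1-\Error}(\ApproxParam/3))^{-1}$.
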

	
	We give the description of \PGSa{} and then explain how it achieves its adaptive complexity bounds with reference to \cref{lemma:expected gain of sample}. This is the key lemma showing that, for $\PSep$-superseparable \SMCC{}, sampling uniformly-at-random from sufficiently many high-value elements does not decrease the expected marginal gain of the remaining elements too much, giving a final sample with good expected marginal gain. We state and prove \cref{lemma:expected gain of sample} along with the preceding \cref{lemma:p-superseparable subset,lemma:average gain} in \cref{sec:PGS analysis}.
	
	\paragraph{Description of \PGS{}.} \PGSa{} takes in its input the values $ \OptLB $ and $ \ApproxParam $ such that $ \OptLB \leq \OptValue \leq \frac{\OptLB}{\ApproxParam} $. In \cref{line:PGS initialisation}, \PGSa{} initializes the solution $ \SetS_{0} \gets \varnothing $, the threshold $ \Threshold_{0} \gets \frac{\OptLB}{\ApproxParam \Card} \geq \frac{\OptValue}{\Card} $, and $ \MinSizeForSampleMax \gets \frac{3 \PSep \Card}{\ApproxParam \Error / 2} + \Card - 1$, which is the maximum number of elements that can be passed to \TBSa{} over the entire run of \PGSa{} (see \cref{claim:num good elems upper-bound}). For each element $\Elem \in \GroundSet$, \cref{line:PGS assign value} assigns $\ElemGain_\Elem \gets \Funcf(\Elem)$; these are used to build $\GoodElems_{\IterIndex}$ in \cref{line:PGS G update}.
	
	After the initialisation steps, \PGSa{} performs the steps below in each iteration,~$\IterIndex$, of the \cref{line:PGS while} loop. \PGSa{} differs from \PGBa{} \cite{chen2021best} in steps \ref{item:PGS construct good elems} and \ref{item:PGS sample or TBS}.
	
	\begin{enumerate}
		\item Assigns the threshold $\Threshold_\IterIndex$ by geometrically diminishing the previous threshold $\Threshold_{\IterIndex-1}$ (\cref{line:PGS tau update}).
		\item\label{item:PGS construct good elems} Constructs the set $\GoodElems_\IterIndex$ of elements $\Elem \in \GroundSet \setminus \SetS_{\IterIndex-1}$ with $\Funcf(\Elem) = \ElemGain_\Elem \geq \Threshold_{\IterIndex}$ (\cref{line:PGS G update}).
		\item\label{item:PGS sample or TBS} If $ |\GoodElems_{\IterIndex}| \geq \MinSizeForSample_{\IterIndex} $, \PGSa{} uniformly-at-random samples a set $\SetT_{\IterIndex}$ of size $ \Card - |\SetS_{\IterIndex-1}|$ from $\GoodElems_\IterIndex$ (\cref{line:PGS T sample}). Otherwise, if $|\GoodElems_{\IterIndex}| < \MinSizeForSample_{\IterIndex}$, \PGSa{} runs \TBSa{} on $\GoodElems_{\IterIndex}$ to obtain $\SetT_{\IterIndex}$ (\cref{line:PGS thresholding}). Either way, the expected marginal gain $ \Funcf(\SetT_{\IterIndex} \mid \SetS_{\IterIndex-1}) $ is approximately $ |\SetT_{\IterIndex}| \Threshold_{\IterIndex} $. This step is crucial to bounding the adaptive complexity as explained below.
		\item Produces $\SetS_{\IterIndex}$ by adding the set of new elements $\SetT_{\IterIndex}$ to $\SetS_{\IterIndex-1}$ (\cref{line:PGS solution update}).
	\end{enumerate}
	
	The \cref{line:PGS while} loop breaks if $\SetS_{\IterIndex}$ satisfies the cardinality constraint or if the threshold is too small to add elements with significant marginal gain.
	
	\paragraph{Bounding the Adaptive Complexity via $\PSep$-Superseparability.}
	The key idea behind the $O( \frac{1}{\Error^2} \log(\frac{\PSep \Card}{\Error}) )$ adaptive complexity of \PGSa{} is to bound the adaptive complexity of each iteration $\IterIndex$ in two cases for $|\GoodElems_{\IterIndex}|$. 
	
	When $|\GoodElems_{\IterIndex}| \geq \MinSizeForSample_{\IterIndex}$, \emph{no adaptive rounds} are needed to sample $ \SetT_{\IterIndex} \subseteq \GoodElems_{\IterIndex} $ in \cref{line:PGS T sample}. This is because \cref{lemma:expected gain of sample} ensures that, when $\Funcf$ is $\PSep$-superseparable and $|\GoodElems_{\IterIndex}| \geq \MinSizeForSample_{\IterIndex}$, $\SetT_{\IterIndex}$ has good expected marginal gain $ \Funcf(\SetT_{\IterIndex} \mid \SetS_{\IterIndex-1}) $.
	
	Otherwise, when $ |\GoodElems_{\IterIndex}| < \MinSizeForSample_{\IterIndex} $ (where $\MinSizeForSample_{\IterIndex} \leq \MinSizeForSampleMax = O(\frac{\PSep \Card}{\Error}) $ by \cref{claim:num good elems upper-bound}), running \TBSa{} on $\GoodElems_{\IterIndex}$ has a bounded adaptive complexity of $O( \frac{1}{\Error} \log(\frac{\PSep \Card}{\Error}) )$ (\cref{claim:PGS adaptive complexity iter} for constant $\ApproxParam$).
	
	The overall adaptive complexity of \PGSa{} follows since the case $|\GoodElems_{\IterIndex}| < \MinSizeForSample_{\IterIndex}$ (in which \TBSa{} is called) may occur in every \cref{line:PGS while} iteration, and the number of such iterations is bounded by $O(\log_{1-\Error}(\ApproxParam)) $ (\cref{claim:num PGS iters}), which is $O(\frac{1}{\Error})$ for constant $\ApproxParam$.
	
	\begin{algorithm}
		\caption{} \label{alg:PGS}
		\begin{algorithmic}[1]
			\Procedure{\PGS}{$\Funcf, \GroundSet, \PSep, \Card, \ApproxParam, \OptLB, \Error$}
			\Input{value oracle $ \Funcf : 2^\GroundSet \rightarrow \mathbb{R}_{\geq 0} $, ground set $ \GroundSet $, value $ \PSep $ such that $ \Funcf $ is $ \PSep $-superseparable, cardinality constraint $ \Card $, initial approximation factor $ \ApproxParam $, value $ \OptLB $ such that $ \OptLB \leq \OptValue \leq \frac{\OptLB}{\ApproxParam} $, approximation error $ \Error $}
			\Output{set $ \SetS \subseteq \GroundSet $ satisfying $ \mathbb{E}[\Funcf(\SetS)] \geq \left(1-\frac{1}{e}-\Error \right) \OptValue $}
			\State $ \IterIndex \gets 0,\, \SetS_0 \gets \varnothing,\, \Threshold_0 \gets \frac{\OptLB}{\ApproxParam \Card},\, \MinSizeForSampleMax \gets \frac{3 \PSep \Card}{\ApproxParam \Error / 2} + \Card - 1, \ProbFail \gets \Big(\log_{1-\Error}\left(\frac{\ApproxParam}{3}\right)\Big)^{-1} $ \label{line:PGS initialisation}
			\For{$ \Elem \in \GroundSet $}\label{line:PGS assign value loop}
			\State $ \ElemGain_\Elem \gets \Funcf( \Elem ) $ \label{line:PGS assign value}
			\EndFor
			\While{$ |\SetS_{\IterIndex}| < \Card $ \textnormal{and} $ \Threshold_{\IterIndex} \geq \frac{\OptLB}{(1-\Error) 3 \Card} $} \label{line:PGS while}
			\State $ \IterIndex \gets \IterIndex + 1 $ \label{line:PGS i update}
			\State $ \Threshold_{\IterIndex} \gets (1-\Error)^{\IterIndex} \frac{\OptLB}{\ApproxParam \Card} $ \label{line:PGS tau update}
			\State $ \GoodElems_{\IterIndex} \gets \{ \Elem \in \GroundSet \setminus \SetS_{\IterIndex-1} : \ElemGain_\Elem \geq \Threshold_{\IterIndex} \} $ \label{line:PGS G update}
			\State $ \MinSizeForSample_{\IterIndex} \gets \frac{\PSep \Card}{(1-\Error)^\IterIndex \Error / 2} + \Card - |\SetS_{\IterIndex-1}| - 1 $ \label{line:PGS m update}
			\If{$ |\GoodElems_{\IterIndex}| \geq \MinSizeForSample_{\IterIndex} $}\label{line:PGS if then sample}
			\State $ \SetT_{\IterIndex} \gets $ uniform-at-random sample of $ \Card - |\SetS_{\IterIndex-1}| $ elements from $ \GoodElems_{\IterIndex} $ \label{line:PGS T sample}
			\Else
			\State $ \SetT_{\IterIndex} \gets \textsc{\TBS}(\Funcf(\SetS_{\IterIndex-1} \cup \cdot \, ), \GoodElems_{\IterIndex}, \min\{\MinSizeForSampleMax, \GroundSetSize\}, \Card - |\SetS_{\IterIndex-1}|, \frac{\Error}{3}, \Prob, \Threshold_{\IterIndex}) $ \label{line:PGS thresholding}
			\EndIf
			\State $ \SetS_{\IterIndex} \gets \SetS_{\IterIndex-1} \cup \SetT_{\IterIndex} $ \label{line:PGS solution update}
			\EndWhile
			\State \Return{$ \SetS_{\IterIndex} $} \label{line:PGS return solution}
			\EndProcedure
		\end{algorithmic}
	\end{algorithm}
	
	\subsection{Analysis of \PGS{}} \label{sec:PGS analysis}	
	\paragraph{$\PSep$-Superseparability Lemmas for \PGSa{}.}
	Here, we prove \cref{lemma:p-superseparable subset}, which in turn is used to prove \cref{lemma:average gain}. Crucially, \cref{lemma:average gain} is used in the proof of \cref{lemma:expected gain of sample}.
	
	Notice that \cref{lemma:p-superseparable subset} essentially states that the properties of $\Funcf$ imply that it is also $\PSep$-superseparable  over an arbitrary subset $\GroundSubset\subseteq\GroundSet$, even for an arbitrary choice of $\SetA\subseteq\GroundSet$ in the inequality.
	
	\begin{lemma}\label{lemma:p-superseparable subset}
		If $ \Funcf $ is non-negative, submodular, and $ \PSep $-superseparable over ground set $ \GroundSet $, then $ \forall \GroundSubset\subseteq \GroundSet $ and $ \forall\SetA \subseteq \GroundSet $:
		\begin{align}
			\sum_{\Elem\in\GroundSubset} \Funcf(\Elem\mid \SetA) \geq \sum_{\Elem\in\GroundSubset} \Funcf(\Elem) - \PSep \Funcf(\SetA). \notag 
		\end{align}
	\end{lemma}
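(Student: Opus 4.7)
My plan is to derive the desired inequality by starting from the $\PSep$-superseparability of $\Funcf$ over the whole ground set $\GroundSet$ and then ``peeling off'' the contribution of the elements in $\GroundSet \setminus \GroundSubset$. Concretely, I would split each side of the defining inequality \eqref{eqn:p-superseparable} as a sum over $\GroundSubset$ plus a sum over $\GroundSet \setminus \GroundSubset$, and then rearrange to isolate the $\GroundSubset$ part of the left-hand side. What remains after the rearrangement is a residual term of the form $\sum_{\Elem \in \GroundSet \setminus \GroundSubset} \left[ \Funcf(\Elem) - \Funcf(\Elem \mid \SetA) \right]$ on the right-hand side, and the lemma reduces to showing that this residual is non-negative.

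The key step is therefore a pointwise inequality: for every $\Elem \in \GroundSet$, we have $\Funcf(\Elem) \geq \Funcf(\Elem \mid \SetA)$. I would obtain this from submodularity combined with non-negativity: by submodularity applied to the chain $\varnothing \subseteq \SetA$, we get $\Funcf(\Elem \mid \SetA) \leq \Funcf(\Elem \mid \varnothing) = \Funcf(\Elem) - \Funcf(\varnothing)$, and non-negativity of $\Funcf$ gives $\Funcf(\varnothing) \geq 0$, so $\Funcf(\Elem \mid \SetA) \leq \Funcf(\Elem)$. Summing this pointwise inequality over $\Elem \in \GroundSet \setminus \GroundSubset$ shows that the residual is non-negative, and the lemma follows immediately.

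I do not expect any real obstacle: the only slightly subtle point is remembering that $\Funcf$ is not assumed monotone in the hypothesis of the lemma (only non-negative and submodular), so the pointwise inequality must be justified through submodularity and $\Funcf(\varnothing) \geq 0$ rather than by appealing directly to monotonicity. Everything else is bookkeeping: one rearrangement of a finite sum and one application of the original $\PSep$-superseparability inequality. The entire proof should fit in a few lines.
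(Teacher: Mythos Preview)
Your proposal is correct and essentially identical to the paper's own proof: both start from the $\PSep$-superseparability inequality on $\GroundSet$, subtract off the $\GroundSet\setminus\GroundSubset$ part of the left-hand side, and then use submodularity together with non-negativity to bound $\Funcf(\Elem\mid\SetA)\le\Funcf(\Elem)$ for the peeled-off elements. Your remark that one must rely on submodularity and $\Funcf(\varnothing)\ge 0$ rather than monotonicity is exactly the point the paper labels ``submodularity and non-negativity'' in its chain of inequalities.
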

	
	\begin{proof}
		Begin with the definition of $ \PSep $-superseparability, which holds $ \forall \SetA \subseteq \GroundSet $, to derive \cref{lemma:p-superseparable subset}.
		\begin{align}
			\sum_{\Elem\in\GroundSet} \Funcf(\Elem \mid \SetA) &\geq \sum_{x\in\GroundSet} \Funcf(\Elem) - \PSep \Funcf(\SetA), &\text{$\PSep$-superseparability} \notag \\
			\sum_{\Elem\in\GroundSubset} \Funcf(\Elem\mid \SetA)  &\geq \sum_{x\in\GroundSet} \Funcf(\Elem) - \sum_{\Elem\in \GroundSet \setminus \GroundSubset } \Funcf( \Elem \mid \SetA) - \PSep \Funcf(\SetA) \notag \\
			&\geq \sum_{x\in\GroundSet} \Funcf(\Elem) - \sum_{\Elem\in \GroundSet \setminus \GroundSubset } \Funcf( \Elem ) - \PSep \Funcf(\SetA)\,, &\begin{aligned}
				&&\text{submodularity and} \\
				&&\text{non-negativity}
			\end{aligned} \notag \\
			\sum_{\Elem\in\GroundSubset} \Funcf(\Elem\mid \SetA) &\geq \sum_{x\in\GroundSubset} \Funcf(\Elem) - \PSep \Funcf(\SetA)\,. \notag
		\end{align} 
	\end{proof}
	
	\begin{lemma}\label{lemma:average gain}		
		Suppose $ \Funcf $ is non-negative, submodular, and $ \PSep $-superseparable over ground set $ \GroundSet $; and suppose for some $ \ErrorA > 0 $, $ \GroundSubset \subseteq \GroundSet $, and $\SetS\subseteq\GroundSet$, it holds that
		\begin{align}
			\ErrorA \sum_{\Elem \in \GroundSubset} \Funcf(\Elem) \geq \PSep \Funcf(\SetS)\,.\label{eqn:average gain condition}
		\end{align}
		It then follows that
		\begin{align}
			\sum_{\Elem \in \GroundSubset} \Funcf( \Elem \mid \SetA) \geq (1-\ErrorA) \sum_{\Elem \in \GroundSubset} \Funcf( \Elem )\,.\label{eqn:average gain}
		\end{align}
	\end{lemma}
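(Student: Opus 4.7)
The plan is to obtain the conclusion as an almost immediate consequence of \cref{lemma:p-superseparable subset}, treating the hypothesis \eqref{eqn:average gain condition} as the lever that converts the additive slack $p\Funcf(\SetS)$ (or $\PSep\Funcf(\SetA)$) into a multiplicative $(1-\ErrorA)$ factor on $\sum_{\Elem\in\GroundSubset}\Funcf(\Elem)$. Note that the statement as written mixes $\SetS$ and $\SetA$; I will read the hypothesis with the same set appearing on both sides, i.e., apply \cref{lemma:p-superseparable subset} with the set $\SetA$ (equivalently $\SetS$) appearing in \eqref{eqn:average gain condition}.

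First I would invoke \cref{lemma:p-superseparable subset} with the given subset $\GroundSubset\subseteq\GroundSet$ and with the set $\SetA\subseteq\GroundSet$ used in the conclusion. This yields
\[
\sum_{\Elem\in\GroundSubset}\Funcf(\Elem\mid\SetA)\;\geq\;\sum_{\Elem\in\GroundSubset}\Funcf(\Elem)\;-\;\PSep\Funcf(\SetA).
\]
Next, I would substitute the hypothesis \eqref{eqn:average gain condition}, which bounds $\PSep\Funcf(\SetA)$ (equivalently $\PSep\Funcf(\SetS)$) from above by $\ErrorA\sum_{\Elem\in\GroundSubset}\Funcf(\Elem)$, to obtain
\[
\sum_{\Elem\in\GroundSubset}\Funcf(\Elem\mid\SetA)\;\geq\;\sum_{\Elem\in\GroundSubset}\Funcf(\Elem)\;-\;\ErrorA\sum_{\Elem\in\GroundSubset}\Funcf(\Elem)\;=\;(1-\ErrorA)\sum_{\Elem\in\GroundSubset}\Funcf(\Elem),
\]
which is exactly \eqref{eqn:average gain}.

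There is no substantive obstacle: the entire argument is a one-line substitution once \cref{lemma:p-superseparable subset} is in hand. The only subtlety worth flagging in the write-up is the role of non-negativity and submodularity; these are not used directly in this derivation but are inherited through their use inside \cref{lemma:p-superseparable subset}, so no additional hypotheses need to be invoked here. It is therefore important to state clearly that the lemma is being applied with the specific pair $(\GroundSubset,\SetA)$ from the hypothesis, and to point out the (harmless) identification $\SetS=\SetA$ in \eqref{eqn:average gain condition} so that the additive term produced by \cref{lemma:p-superseparable subset} matches the quantity bounded in the assumption.
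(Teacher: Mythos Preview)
Your proposal is correct and follows essentially the same approach as the paper: invoke \cref{lemma:p-superseparable subset} for the pair $(\GroundSubset,\SetA)$ and then substitute the hypothesis \eqref{eqn:average gain condition} to convert the additive $\PSep\Funcf(\SetA)$ term into the multiplicative $(1-\ErrorA)$ factor. Your observation about the $\SetS$/$\SetA$ mismatch is apt; the paper silently identifies them as well.
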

	
	\begin{proof}
		Function~$ \Funcf $ satisfies the conditions for \cref{lemma:p-superseparable subset}, so from there we derive inequality~\eqref{eqn:average gain}.
		\begin{align}
			\sum_{\Elem\in\GroundSubset} \Funcf(\Elem\mid \SetA) &\geq \sum_{x\in\GroundSubset} \Funcf(\Elem) - \PSep \Funcf(\SetA) &\text{\cref{lemma:p-superseparable subset}} \notag \\
			&\geq \sum_{\Elem\in\GroundSubset} \Funcf(\Elem) - \ErrorA \sum_{\Elem \in \GroundSubset} \Funcf(\Elem) &\text{Inequality~\eqref{eqn:average gain condition}} \notag \\
			&= (1 - \ErrorA) \sum_{\Elem \in \GroundSubset} \Funcf(\Elem)\,. \notag
		\end{align}
	\end{proof}
	
	\begin{lemma}\label{lemma:expected gain of sample}
		Suppose \PGSa{} (\cref{alg:PGS}) is run such that $\OptLB \leq \OptValue \leq \frac{\OptLB}{\ApproxParam}$. Further, suppose that in some iteration $ \SampleIter $, \cref{line:PGS T sample} is executed so that $ \SetT_\SampleIter $ is assigned an ordered, uniform-at-random sample of $ \Card - |\SetS_{\SampleIter-1}| $ elements from $ \GoodElems_\SampleIter $ (without replacement). Then $ \SetT_\SampleIter $ satisfies
		\begin{align}
			\mathbb{E}[ \Funcf(\SetT_\SampleIter \mid \SetS_{\SampleIter-1} ) ] \geq |\SetT_\SampleIter| \left(1-\frac{\Error}{2}\right)\Threshold_\SampleIter\,. \notag
		\end{align}
	\end{lemma}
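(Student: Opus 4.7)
The plan is to expand $\Funcf(\SetT_\SampleIter \mid \SetS_{\SampleIter-1})$ as a telescoping sum over the ordered samples and lower-bound each conditional term using \cref{lemma:average gain}. Write $\SetT_\SampleIter = (\SampleElem_1, \ldots, \SampleElem_\NumSample)$ with $\NumSample = \Card - |\SetS_{\SampleIter-1}|$, so that
$$\Funcf(\SetT_\SampleIter \mid \SetS_{\SampleIter-1}) = \sum_{\SampleIndex=1}^{\NumSample} \Funcf\bigl(\SampleElem_\SampleIndex \mid \SetS_{\SampleIter-1} \cup \{\SampleElem_1, \ldots, \SampleElem_{\SampleIndex-1}\}\bigr).$$
For each $\SampleIndex$, let $M_\SampleIndex \coloneqq \GoodElems_\SampleIter \setminus \{\SampleElem_1, \ldots, \SampleElem_{\SampleIndex-1}\}$ denote the remaining pool (from which $\SampleElem_\SampleIndex$ is drawn uniformly at random, conditional on the earlier draws), and $A_{\SampleIndex-1} \coloneqq \SetS_{\SampleIter-1} \cup \{\SampleElem_1, \ldots, \SampleElem_{\SampleIndex-1}\}$.

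The key step is to invoke \cref{lemma:average gain} with subset $M_\SampleIndex$, conditioning set $A_{\SampleIndex-1}$, and error parameter $\ErrorA = \Error/2$. To verify its hypothesis, I use two ingredients. First, every $\Elem \in \GoodElems_\SampleIter$ (hence in $M_\SampleIndex$) satisfies $\Funcf(\Elem) \geq \Threshold_\SampleIter$ by construction (\cref{line:PGS G update}), and \cref{line:PGS T sample} executes only when $|\GoodElems_\SampleIter| \geq \MinSizeForSample_\SampleIter$, so $|M_\SampleIndex| \geq \MinSizeForSample_\SampleIter - \NumSample + 1 = \PSep\Card/((1-\Error)^\SampleIter \Error/2)$ by the definition of $\MinSizeForSample_\SampleIter$ in \cref{line:PGS m update}. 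Hence $\sum_{\Elem \in M_\SampleIndex} \Funcf(\Elem) \geq |M_\SampleIndex| \Threshold_\SampleIter$. Second, $|A_{\SampleIndex-1}| \leq \Card$, so $\Funcf(A_{\SampleIndex-1}) \leq \OptValue \leq \OptLB/\ApproxParam = \Card \Threshold_\SampleIter/(1-\Error)^\SampleIter$, using the definition $\Threshold_\SampleIter = (1-\Error)^\SampleIter \OptLB/(\ApproxParam\Card)$. Combining these,
$$\frac{\Error}{2} \sum_{\Elem \in M_\SampleIndex} \Funcf(\Elem) \geq \frac{\PSep\Card\Threshold_\SampleIter}{(1-\Error)^\SampleIter} \geq \PSep\Funcf(A_{\SampleIndex-1}),$$
which is exactly the hypothesis of \cref{lemma:average gain}. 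The lemma then yields $\sum_{\Elem \in M_\SampleIndex} \Funcf(\Elem \mid A_{\SampleIndex-1}) \geq (1-\Error/2) \sum_{\Elem \in M_\SampleIndex} \Funcf(\Elem) \geq (1-\Error/2)|M_\SampleIndex|\Threshold_\SampleIter$.

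Since $\SampleElem_\SampleIndex$ is uniform over $M_\SampleIndex$ given $\SampleElem_1, \ldots, \SampleElem_{\SampleIndex-1}$, the per-element average on the left is exactly $\Expect[\Funcf(\SampleElem_\SampleIndex \mid A_{\SampleIndex-1}) \mid \SampleElem_1, \ldots, \SampleElem_{\SampleIndex-1}]$, so this conditional expectation is at least $(1-\Error/2)\Threshold_\SampleIter$. Taking outer expectations and summing over $\SampleIndex$ then gives $\Expect[\Funcf(\SetT_\SampleIter \mid \SetS_{\SampleIter-1})] \geq \NumSample(1-\Error/2)\Threshold_\SampleIter = |\SetT_\SampleIter|(1-\Error/2)\Threshold_\SampleIter$, as required.

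The main subtlety is that the hypothesis of \cref{lemma:average gain} has to be re-verified as the conditioning set $A_{\SampleIndex-1}$ grows with $\SampleIndex$ while the pool $M_\SampleIndex$ simultaneously shrinks; this is precisely why $\MinSizeForSample_\SampleIter$ carries the additive slack $\Card - |\SetS_{\SampleIter-1}| - 1$, which offsets the up-to-$(\NumSample-1)$ elements that may have been removed from $M_\SampleIndex$, and why the crude feasibility bound $\Funcf(A_{\SampleIndex-1}) \leq \OptValue$ can be used uniformly in $\SampleIndex$ rather than a tighter $\SampleIndex$-dependent estimate.
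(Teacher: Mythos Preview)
Your proof is correct and follows essentially the same approach as the paper: the same telescoping decomposition of $\Funcf(\SetT_\SampleIter \mid \SetS_{\SampleIter-1})$, the same verification of the hypothesis of \cref{lemma:average gain} via the lower bound $|M_\SampleIndex| \geq \MinSizeForSample_\SampleIter - (\NumSample - 1) = \PSep\Card/((1-\Error)^\SampleIter \Error/2)$ combined with $\Funcf(A_{\SampleIndex-1}) \leq \OptValue \leq \OptLB/\ApproxParam$, and the same conditional-expectation-then-sum conclusion. The paper packages the hypothesis check and the per-element bound into two separate claims, but the content is identical to what you wrote inline.
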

	
	Let $ \SetT_{\SampleIter, 0} = \varnothing $ and let $ \SetT_{\SampleIter, \SampleIndex} = \{ \SampleElem_1, \dots, \SampleElem_\SampleIndex \} $. That is, $ \SetT_{\SampleIter, \SampleIndex} $ is a partial sample of $ \SampleIndex $ elements from $ \GoodElems_\SampleIter $ (without replacement).
	
	We will first prove \cref{claim:good elements condition} below. This claim means that for every $ \SampleIndex \in \{ 1, \dots, |\SetT_\SampleIter| \} $, the set of elements $ \GoodElems_\SampleIter \setminus \SetT_{\SampleIter, \SampleIndex-1} $ satisfies inequality~\eqref{eqn:average gain condition} due to the success of the \cref{line:PGS if then sample} if-condition $ |\GoodElems_\SampleIter| \geq \MinSizeForSample_\SampleIter $, where inequality~\eqref{eqn:average gain condition} is a condition for \cref{lemma:average gain}. Function~$ \Funcf $ also satisfies the other conditions of \cref{lemma:average gain} by assumption. Then, by invoking \cref{lemma:average gain}, we will show that each sampled element $ \SampleElem_\SampleIndex $ has expected marginal gain $ \mathbb{E}[\Funcf(\SampleElem_{\SampleIndex} \mid \SetS_{\SampleIter-1} \cup \SetT_{\SampleIter, \SampleIndex-1})] \geq (1-\frac{\Error}{2}) \Threshold_\SampleIter $, where $ \SetS_{\SampleIter-1} \cup \SetT_{\SampleIter, \SampleIndex-1} $ is the previous solution combined with the previous partial sample. Finally, \cref{lemma:expected gain of sample} will follow by linearity of expectation.
	
	\begin{proof}
		We begin by proving \cref{claim:good elements condition}, which is needed to invoke \cref{lemma:average gain} later. 
		
		\begin{claim}\label{claim:good elements condition}
			For every $ \SampleIndex \in \{ 1, \dots, |\SetT_\SampleIter| \} $, $ \GoodElems_\SampleIter \setminus \SetT_{\SampleIter, \SampleIndex-1} $ satisfies
			\begin{align}
				\frac{\Error}{2}\cdot \sum_{\SampleElem_\SampleIndex \in \GoodElems_\SampleIter \setminus \SetT_{\SampleIter, \SampleIndex-1}} \Funcf( \SampleElem_\SampleIndex ) \geq \PSep \Funcf( \SetS_{\SampleIter-1} \cup \SetT_{\SampleIter, \SampleIndex-1} )\,. \notag
			\end{align}
		\end{claim}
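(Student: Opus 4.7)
The plan is to separately bound the left-hand side of the claimed inequality from below and the right-hand side from above, then combine them. The constants in the definitions of $\MinSizeForSample_\SampleIter$ and $\Threshold_\SampleIter$ in \cref{line:PGS m update,line:PGS tau update} are designed precisely so that, after the $(1-\Error)^\SampleIter$ factors cancel, both sides collapse to quantities comparable to $\PSep \OptValue$.

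First I would bound the cardinality $|\GoodElems_\SampleIter \setminus \SetT_{\SampleIter,\SampleIndex-1}|$ from below. Since the algorithm only reaches \cref{line:PGS T sample} when the if-condition on \cref{line:PGS if then sample} holds, we have $|\GoodElems_\SampleIter| \geq \MinSizeForSample_\SampleIter$, and since $\SetT_{\SampleIter, \SampleIndex-1} \subseteq \GoodElems_\SampleIter$ has size $\SampleIndex - 1 \leq |\SetT_\SampleIter| - 1 = \Card - |\SetS_{\SampleIter-1}| - 1$, we obtain
\begin{align}
    |\GoodElems_\SampleIter \setminus \SetT_{\SampleIter,\SampleIndex-1}| \;\geq\; \MinSizeForSample_\SampleIter - (\Card - |\SetS_{\SampleIter-1}| - 1) \;=\; \frac{\PSep \Card}{(1-\Error)^\SampleIter \Error / 2}\,. \notag
\end{align}
Then, since every element of $\GoodElems_\SampleIter$ has individual value at least $\Threshold_\SampleIter = (1-\Error)^\SampleIter \frac{\OptLB}{\ApproxParam \Card}$ (by the definition in \cref{line:PGS G update} and the fact that $\ElemGain_\Elem = \Funcf(\Elem)$), multiplying gives
\begin{align}
    \sum_{\SampleElem_\SampleIndex \in \GoodElems_\SampleIter \setminus \SetT_{\SampleIter,\SampleIndex-1}} \Funcf(\SampleElem_\SampleIndex) \;\geq\; \frac{\PSep \Card}{(1-\Error)^\SampleIter \Error / 2} \cdot (1-\Error)^\SampleIter \frac{\OptLB}{\ApproxParam \Card} \;=\; \frac{2\PSep \OptLB}{\ApproxParam \Error}\,, \notag
\end{align}
so that $\frac{\Error}{2} \sum_{\SampleElem_\SampleIndex \in \GoodElems_\SampleIter \setminus \SetT_{\SampleIter,\SampleIndex-1}} \Funcf(\SampleElem_\SampleIndex) \geq \frac{\PSep \OptLB}{\ApproxParam} \geq \PSep \OptValue$, where the last step uses the input assumption $\OptValue \leq \OptLB / \ApproxParam$.

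For the right-hand side, I would observe that $|\SetS_{\SampleIter-1} \cup \SetT_{\SampleIter,\SampleIndex-1}| \leq |\SetS_{\SampleIter-1}| + (\SampleIndex-1) \leq \Card - 1 \leq \Card$, so this set is feasible for the cardinality constraint, whence $\Funcf(\SetS_{\SampleIter-1} \cup \SetT_{\SampleIter,\SampleIndex-1}) \leq \OptValue$ and thus $\PSep \Funcf(\SetS_{\SampleIter-1} \cup \SetT_{\SampleIter,\SampleIndex-1}) \leq \PSep \OptValue$. Chaining the two bounds closes the claim. The only step that requires genuine care is the cardinality bookkeeping in the first step --- making sure that subtracting $\SampleIndex - 1$ and $\Card - |\SetS_{\SampleIter-1}| - 1$ does not eat into the additive $\frac{\PSep \Card}{(1-\Error)^\SampleIter \Error/2}$ term of $\MinSizeForSample_\SampleIter$ --- but this is exactly what the ``$+\Card - |\SetS_{\SampleIter-1}| - 1$'' slack in the definition of $\MinSizeForSample_\SampleIter$ is engineered to provide, so no real obstacle arises.
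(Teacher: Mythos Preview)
Your proposal is correct and follows essentially the same approach as the paper: both use the if-condition $|\GoodElems_\SampleIter| \geq \MinSizeForSample_\SampleIter$ together with $|\SetT_{\SampleIter,\SampleIndex-1}| \leq \Card - |\SetS_{\SampleIter-1}| - 1$ to lower-bound $|\GoodElems_\SampleIter \setminus \SetT_{\SampleIter,\SampleIndex-1}|$ by $\frac{\PSep\Card}{(1-\Error)^\SampleIter \Error/2}$, then multiply by the per-element lower bound $\Threshold_\SampleIter$ and use $\frac{\OptLB}{\ApproxParam} \geq \OptValue \geq \Funcf(\SetS_{\SampleIter-1} \cup \SetT_{\SampleIter,\SampleIndex-1})$. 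The only cosmetic difference is that you explicitly justify the last inequality via feasibility of $\SetS_{\SampleIter-1} \cup \SetT_{\SampleIter,\SampleIndex-1}$, whereas the paper simply asserts it.
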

		
		\begin{proof}
			Since we assume \cref{line:PGS T sample} is executed, the \cref{line:PGS if then sample} if-condition must have succeeded, meaning $ |\GoodElems_\SampleIter| \geq \MinSizeForSample_{\SampleIter} $. Then, for every $ \SampleIndex \in \{ 1, \dots, |\SetT_\SampleIter| \} $, we lower-bound $ |\GoodElems_\SampleIter \setminus \SetT_{\SampleIter, \SampleIndex-1}| $ as follows:
			\begin{align}
				|\GoodElems_\SampleIter \setminus \SetT_{\SampleIter, \SampleIndex-1}| &\geq \MinSizeForSample_{\SampleIter} - |\SetT_{\SampleIter, \SampleIndex-1}| \notag\\
				&\geq \MinSizeForSample_{\SampleIter} - (\Card - |\SetS_{\SampleIter-1}| - 1) \notag \\
				&= \frac{\PSep \Card}{(1-\Error)^\SampleIter \Error / 2}\,. &\text{via value of $ \MinSizeForSample_{\SampleIter} $ (\cref{line:PGS m update})} \label{eqn:G minus T lower-bound}
			\end{align}
			Then for every $ \SampleIndex \in \{ 1, \dots, |\SetT_\SampleIter| \} $, we prove the required inequality.
			\begin{align}
				\frac{\Error}{2} \cdot \sum_{\SampleElem_\SampleIndex \in \GoodElems_\SampleIter \setminus \SetT_{\SampleIter, \SampleIndex-1}} \Funcf( \SampleElem_\SampleIndex) &\geq \frac{\Error}{2} \cdot\sum_{\SampleElem_\SampleIndex \in \GoodElems_\SampleIter \setminus \SetT_{\SampleIter, \SampleIndex-1}} \Threshold_\SampleIter &\text{\cref{line:PGS assign value,line:PGS G update}} \notag \\
				&= \frac{\Error}{2} \cdot|\GoodElems_\SampleIter \setminus \SetT_{\SampleIter, \SampleIndex-1}| \Threshold_\SampleIter \notag \\
				&\geq \frac{\Error}{2} \frac{\PSep \Card}{(1-\Error)^\SampleIter \Error / 2} \Threshold_\SampleIter &\text{Inequality~\eqref{eqn:G minus T lower-bound} }  \notag \\
				&= \frac{\PSep \Card}{(1-\Error)^\SampleIter} (1-\Error)^\SampleIter \frac{\OptLB}{\ApproxParam \Card} &\text{value of $ \Threshold_\SampleIter $ (line~\ref{line:PGS tau update})} \notag \\
				&= \PSep \frac{\OptLB}{\ApproxParam} \notag \\
				&\geq \PSep \, \OptValue &\text{$ \frac{\OptLB}{\ApproxParam} \geq \OptValue $ on input to \PGSa{}} \notag \\
				&\geq \PSep \Funcf( \SetS_{\SampleIter-1} \cup \SetT_{\SampleIter, \SampleIndex-1} )\,. &\text{$\OptValue \geq \Funcf( \SetS_{\SampleIter-1} \cup \SetT_{\SampleIter, \SampleIndex-1} ) $} \notag
			\end{align} 
		\end{proof}
		
		Now by \cref{claim:good elements condition}, we have that for every $ \SampleIndex \in \{ 1, \dots, |\SetT_\SampleIter| \} $, inequality~\eqref{eqn:average gain condition} holds when $ \ErrorA = \frac{\Error}{2} $, $ \GroundSubset = \GoodElems_{\SampleIter} \setminus \SetT_{\SampleIter, \SampleIndex-1} $, and $ \SetS = \SetS_{\SampleIter-1} \cup \SetT_{\SampleIter, \SampleIndex-1} $. Recall that \cref{eqn:average gain condition} is a condition for \cref{lemma:average gain}. Further, $ \Funcf $ is assumed to be non-negative, monotone, submodular, and $ \PSep $-superseparable. Therefore, the conditions for \cref{lemma:average gain} hold for every $ \SampleIndex \in \{ 1, \dots, |\SetT_\SampleIter| \} $; we invoke this lemma in the proof of \cref{claim:exp marginal gain of element} below. Recall that $ \SetS_{\SampleIter-1} \cup \SetT_{\SampleIter, \SampleIndex-1} $ is the previous solution combined with a partial sample of $ \SampleIndex - 1 $ elements.
		\begin{claim}\label{claim:exp marginal gain of element}
			For every $ \SampleIndex \in \{ 1, \dots, |\SetT_\SampleIter| \} $, we have $ \mathbb{E}[\Funcf(\SampleElem_{\SampleIndex} \mid \SetS_{\SampleIter-1} \cup \SetT_{\SampleIter, \SampleIndex-1} )] \geq (1-\frac{\Error}{2}) \Threshold_\SampleIter $.
		\end{claim}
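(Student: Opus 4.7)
The plan is to condition on the partial sample $\SetT_{\SampleIter, \SampleIndex-1}$, compute the conditional expectation of $\Funcf(\SampleElem_\SampleIndex \mid \SetS_{\SampleIter-1} \cup \SetT_{\SampleIter, \SampleIndex-1})$ by exploiting the uniformity of the sampling, apply \cref{lemma:average gain} to this conditional sum, and then take total expectation. The main work is setting up the conditional expectation correctly; the ingredients are all in place.

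First, I would fix an arbitrary realisation of $\SetT_{\SampleIter, \SampleIndex-1}$ and note that, since $\SetT_\SampleIter$ is an ordered uniform-at-random sample (without replacement) from $\GoodElems_\SampleIter$, the conditional distribution of $\SampleElem_\SampleIndex$ given $\SetT_{\SampleIter, \SampleIndex-1}$ is uniform over $\GoodElems_\SampleIter \setminus \SetT_{\SampleIter, \SampleIndex-1}$. Hence
\begin{align}
\mathbb{E}\bigl[\Funcf(\SampleElem_\SampleIndex \mid \SetS_{\SampleIter-1} \cup \SetT_{\SampleIter, \SampleIndex-1}) \,\big|\, \SetT_{\SampleIter, \SampleIndex-1}\bigr]
= \frac{1}{|\GoodElems_\SampleIter \setminus \SetT_{\SampleIter, \SampleIndex-1}|}\sum_{\Elem \in \GoodElems_\SampleIter \setminus \SetT_{\SampleIter, \SampleIndex-1}} \Funcf(\Elem \mid \SetS_{\SampleIter-1} \cup \SetT_{\SampleIter, \SampleIndex-1})\,. \notag
\end{align}

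Next, I would invoke \cref{lemma:average gain} with $\ErrorA = \Error/2$, $\GroundSubset = \GoodElems_\SampleIter \setminus \SetT_{\SampleIter, \SampleIndex-1}$, and $\SetA = \SetS_{\SampleIter-1} \cup \SetT_{\SampleIter, \SampleIndex-1}$. The precondition \eqref{eqn:average gain condition} holds by \cref{claim:good elements condition}, and $\Funcf$ is non-negative, submodular, and $\PSep$-superseparable by assumption. The lemma then gives
\begin{align}
\sum_{\Elem \in \GoodElems_\SampleIter \setminus \SetT_{\SampleIter, \SampleIndex-1}} \Funcf(\Elem \mid \SetS_{\SampleIter-1} \cup \SetT_{\SampleIter, \SampleIndex-1}) \geq \left(1-\frac{\Error}{2}\right) \sum_{\Elem \in \GoodElems_\SampleIter \setminus \SetT_{\SampleIter, \SampleIndex-1}} \Funcf(\Elem)\,. \notag
\end{align}
Since every element of $\GoodElems_\SampleIter$ satisfies $\Funcf(\Elem) = \ElemGain_\Elem \geq \Threshold_\SampleIter$ by \cref{line:PGS G update}, the right-hand sum is at least $|\GoodElems_\SampleIter \setminus \SetT_{\SampleIter, \SampleIndex-1}|\, \Threshold_\SampleIter$. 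Dividing by $|\GoodElems_\SampleIter \setminus \SetT_{\SampleIter, \SampleIndex-1}|$ yields, for every realisation of $\SetT_{\SampleIter,\SampleIndex-1}$,
\begin{align}
\mathbb{E}\bigl[\Funcf(\SampleElem_\SampleIndex \mid \SetS_{\SampleIter-1} \cup \SetT_{\SampleIter, \SampleIndex-1}) \,\big|\, \SetT_{\SampleIter, \SampleIndex-1}\bigr] \geq \left(1-\frac{\Error}{2}\right)\Threshold_\SampleIter\,. \notag
\end{align}

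Finally I would apply the law of total expectation over $\SetT_{\SampleIter, \SampleIndex-1}$ to obtain the unconditional bound claimed. The only subtle point is ensuring that \cref{claim:good elements condition}, which was proved for a specific (deterministic) partial sample, is in fact used here as a \emph{pointwise} statement that holds for every realisation in the support of $\SetT_{\SampleIter,\SampleIndex-1}$ --- which it does, since the derivation in the claim depends only on $|\SetT_{\SampleIter,\SampleIndex-1}| \leq \Card - |\SetS_{\SampleIter-1}| - 1$ and not on which specific elements were drawn. This is the only place where care is needed; the rest is a direct chain of substitutions.
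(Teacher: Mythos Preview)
Your proposal is correct and follows essentially the same approach as the paper: write the (conditional) expectation as a uniform average over $\GoodElems_\SampleIter \setminus \SetT_{\SampleIter,\SampleIndex-1}$, apply \cref{lemma:average gain} (with the precondition supplied by \cref{claim:good elements condition}), and then use $\Funcf(\Elem) \geq \Threshold_\SampleIter$ for $\Elem \in \GoodElems_\SampleIter$. The only difference is presentational: you make the conditioning on $\SetT_{\SampleIter,\SampleIndex-1}$ and the final appeal to total expectation explicit, whereas the paper leaves this implicit by writing the expectation directly as the average.
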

		
		\begin{proof}
			\begin{align}
				&\mathbb{E}[\Funcf(\SampleElem_{\SampleIndex} \mid \SetS_{\SampleIter-1} \cup \SetT_{\SampleIter, \SampleIndex-1} )] \notag \\
				&= \frac{1}{|\GoodElems_\SampleIter \setminus \SetT_{\SampleIter, \SampleIndex-1}|} \sum_{\SampleElem_\SampleIndex \in \GoodElems_\SampleIter \setminus \SetT_{\SampleIter, \SampleIndex-1} } \Funcf(\SampleElem_{\SampleIndex} \mid \SetS_{\SampleIter-1} \cup \SetT_{\SampleIter, \SampleIndex-1} ) \notag \\
				&\geq \frac{1}{|\GoodElems_\SampleIter \setminus \SetT_{\SampleIter, \SampleIndex-1}|} \left( 1-\frac{\Error}{2} \right) \sum_{\SampleElem_\SampleIndex \in \GoodElems_\SampleIter \setminus \SetT_{\SampleIter, \SampleIndex-1} } \Funcf(\SampleElem_{\SampleIndex}) &\text{\cref{lemma:average gain}} \notag \\
				&\geq \frac{1}{|\GoodElems_\SampleIter \setminus \SetT_{\SampleIter, \SampleIndex-1}|} \left(1-\frac{\Error}{2}\right) \sum_{\SampleElem_\SampleIndex \in \GoodElems_\SampleIter \setminus \SetT_{\SampleIter, \SampleIndex-1}} \Threshold_\SampleIter &\text{\cref{line:PGS assign value,line:PGS G update}} \notag \\
				&= \frac{1}{|\GoodElems_\SampleIter \setminus \SetT_{\SampleIter, \SampleIndex-1}|} \left(1-\frac{\Error}{2}\right) |\GoodElems_\SampleIter \setminus \SetT_{\SampleIter, \SampleIndex-1}| \Threshold_\SampleIter \notag \\
				&= \left(1-\frac{\Error}{2}\right) \Threshold_\SampleIter. \notag
			\end{align} 
		\end{proof}
		
		Finally, we show that $ \mathbb{E}[ \Funcf(\SetT_\SampleIter \mid \SetS_{\SampleIter-1} ) ] \geq |\SetT_\SampleIter| (1-\frac{\Error}{2})\Threshold_\SampleIter $. Recall that $ \SetT_{\SampleIter} $ is an ordered, uniform-at-random sample of elements from $ \GoodElems_\SampleIter $.
		\begin{align}
			\mathbb{E}[ \Funcf(\SetT_\SampleIter \mid \SetS_{\SampleIter-1} ) ] &= \mathbb{E}\left[ \sum_{\SampleIndex = 1}^{|\SetT_\SampleIter|} \Funcf( \SampleElem_{\SampleIndex} \mid \SetS_{\SampleIter-1} \cup \SetT_{\SampleIter, \SampleIndex-1} ) \right] &\text{telescoping series} \notag \\
			&= \sum_{\SampleIndex = 1}^{|\SetT_\SampleIter|} \mathbb{E}[ \Funcf( \SampleElem_{\SampleIndex} \mid \SetS_{\SampleIter-1} \cup \SetT_{\SampleIter, \SampleIndex-1} ) ] &\text{linearity of expectation} \notag \\
			&\geq \sum_{\SampleIndex = 1}^{|\SetT_\SampleIter|} \left( 1 - \frac{\Error}{2} \right) \Threshold_\SampleIter &\text{\cref{claim:exp marginal gain of element}} \notag \\
			&= |\SetT_\SampleIter| \left(1-\frac{\Error}{2}\right)\Threshold_\SampleIter\,. \notag
		\end{align} 
	\end{proof}
	
	\paragraph{Main Proof of Approximation Factor of \PGSa{}.}
	The remaining analysis of the approximation factor of \PGSa{} is very similar to that of \PGBa{} \cite{chen2021best}, containing only a few minor modifications. Nevertheless, we give the full analysis for completeness.
	
	We first provide \cref{lemma:exp upper-bound,claim:error and e factor lower-bound,claim:error factor lower-bound}, which are used in the main approximation factor proof in \cref{lemma:PGS approx factor}. The proof of \cref{lemma:exp upper-bound} is given in Lemma 14 of Chen et al.~\cite{chen2021best}.
	
	\begin{lemma}[Upper-bound on exponential term \cite{chen2021best}] \label{lemma:exp upper-bound}
		For $0 \leq \Error \leq 1$, we have $e^{-\frac{(1-\Error/3)(1-\Error)}{1+\Error/3}} \leq \frac{1}{e}+\Error$. 
	\end{lemma}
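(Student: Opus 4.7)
The plan is to treat this as a single-variable calculus inequality on $[0,1]$, after a rewrite that renders both sides comparable at $\varepsilon = 0$.

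I would first simplify the exponent. Multiplying numerator and denominator by $3$ and rearranging yields
\begin{equation*}
\frac{(1-\varepsilon/3)(1-\varepsilon)}{1+\varepsilon/3} \;=\; \frac{3-4\varepsilon+\varepsilon^{2}}{3+\varepsilon} \;=\; 1 - \frac{\varepsilon(5-\varepsilon)}{3+\varepsilon}.
\end{equation*}
Writing $u(\varepsilon) \coloneqq \varepsilon(5-\varepsilon)/(3+\varepsilon)$ and multiplying the target inequality through by $e$, the claim becomes the equivalent statement $e^{u(\varepsilon)} \leq 1 + e\varepsilon$ for $\varepsilon \in [0,1]$. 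Both sides equal $1$ at $\varepsilon = 0$, so it suffices to show that $g(\varepsilon) \coloneqq 1 + e\varepsilon - e^{u(\varepsilon)}$ satisfies $g(0) = 0$ and $g'(\varepsilon) \geq 0$ on $[0,1]$; since $g'(\varepsilon) = e - u'(\varepsilon)\,e^{u(\varepsilon)}$, the entire claim reduces to establishing the pointwise bound $\psi(\varepsilon) \coloneqq u'(\varepsilon)\,e^{u(\varepsilon)} \leq e$ on $[0,1]$.

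Using the partial-fraction form $u(\varepsilon) = 8 - \varepsilon - 24/(3+\varepsilon)$ one obtains the clean expressions $u'(\varepsilon) = -1 + 24/(3+\varepsilon)^{2}$ and $u''(\varepsilon) = -48/(3+\varepsilon)^{3} < 0$, so $u$ is concave with $u(0)=0$, $u(1)=1$, $u'(0) = 5/3$ and $u'(1) = 1/2$. In particular $\psi(0) = 5/3 < e$ and $\psi(1) = e/2 < e$, so only the interior values of $\psi$ need to be controlled.

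The main obstacle is precisely this interior analysis: the crude bounds $u' \leq 5/3$ and $e^{u} \leq e$ multiply to $5e/3 > e$, because the maxima of $u'$ and of $e^{u}$ are attained at opposite endpoints and must be traded off. I would proceed by differentiating: $\psi'(\varepsilon) = e^{u(\varepsilon)}\bigl(u'(\varepsilon)^{2} + u''(\varepsilon)\bigr)$, whose sign equals that of the polynomial $(3+\varepsilon)^{4}\bigl(u'(\varepsilon)^{2}+u''(\varepsilon)\bigr) = y^{4} - 48 y^{2} - 48 y + 576$ in the variable $y = 3+\varepsilon$. Its derivative $4y^{3} - 96 y - 48$ has derivative $12y^{2}-96>0$ on $[3,4]$, and takes the values $-228$ at $y=3$ and $-176$ at $y=4$, so $4y^{3}-96y-48<0$ throughout; hence the quartic is strictly decreasing on $[3,4]$. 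Since the quartic takes value $81$ at $y=3$ and $-128$ at $y=4$, it has a unique root $\varepsilon^{\ast}\in(0,1)$, giving $\psi$ a unique interior critical point (a maximum). At $\varepsilon^{\ast}$, the critical-point identity $u'(\varepsilon^{\ast})^{2} = 48/(3+\varepsilon^{\ast})^{3}$ together with the explicit form of $u$ allows one to verify $\psi(\varepsilon^{\ast}) \leq e$ by direct substitution (numerically, $\varepsilon^{\ast}\approx 0.35$ gives $\psi(\varepsilon^{\ast})\approx 1.85 < e$). Combining $\psi(0),\psi(\varepsilon^{\ast}),\psi(1) \leq e$ yields $\psi \leq e$ on $[0,1]$, whence $g'\geq 0$ and $g(\varepsilon)\geq g(0)=0$, establishing the claim. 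Since the statement coincides with Lemma~14 of Chen et al.~\cite{chen2021best}, one may alternatively cite their proof in place of carrying out this calculation.
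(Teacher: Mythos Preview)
The paper does not actually prove this lemma: it simply states that the proof is given in Lemma~14 of Chen et al.~\cite{chen2021best} and moves on. Your proposal, by contrast, supplies a self-contained single-variable calculus argument, and you explicitly note at the end that one may alternatively just cite Chen et al.\ --- which is precisely what the paper does.

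Your argument is correct in substance. The rewrite $e^{u(\varepsilon)}\le 1+e\varepsilon$ with $u(\varepsilon)=8-\varepsilon-24/(3+\varepsilon)$, the reduction to $\psi(\varepsilon)=u'(\varepsilon)e^{u(\varepsilon)}\le e$, and the analysis of the sign of $\psi'$ via the quartic $y^{4}-48y^{2}-48y+576$ on $y\in[3,4]$ are all sound. The only soft spot is the final step, where you bound $\psi(\varepsilon^{\ast})$ by quoting a numerical value $\varepsilon^{\ast}\approx 0.35$; as written this is a numerical check rather than a proof. It is easily made rigorous: evaluating the quartic at $y=3.3$ and $y=3.4$ localises $\varepsilon^{\ast}\in[0.3,0.4]$, and since $u'$ is decreasing and $u$ increasing on this interval, $\psi(\varepsilon^{\ast})\le u'(0.3)\,e^{u(0.4)}\approx 1.204\cdot 1.718<2.07<e$. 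With that tweak your proof is complete, and it buys the reader a self-contained verification that the paper (reasonably) chose to outsource.
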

	
	\begin{claim}\label{claim:error and e factor lower-bound}
		For all $ 0 \leq \Error \leq \frac{2}{3} $, it holds that $ 1-\frac{1}{(1-\Error)3} \geq \frac{2}{3}-\Error $.
	\end{claim}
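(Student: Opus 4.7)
The plan is to reduce the inequality to a simple factored nonnegativity condition by clearing the denominator. Since $0 \leq \Error \leq \tfrac{2}{3} < 1$, the quantity $3(1-\Error)$ is strictly positive, so I can rearrange the target inequality $1 - \frac{1}{3(1-\Error)} \geq \frac{2}{3} - \Error$ into the equivalent form $\frac{1}{3} + \Error \geq \frac{1}{3(1-\Error)}$, and then multiply both sides by $3(1-\Error)$ without flipping the inequality.

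Next I would expand the resulting polynomial inequality $(1 + 3\Error)(1-\Error) \geq 1$. Expanding the left side gives $1 + 2\Error - 3\Error^2$, so the inequality reduces to $2\Error - 3\Error^2 \geq 0$, i.e.,
\begin{align}
\Error(2 - 3\Error) \geq 0. \notag
\end{align}
On the interval $0 \leq \Error \leq \tfrac{2}{3}$, both factors $\Error$ and $2 - 3\Error$ are nonnegative, so the product is nonnegative, establishing the claim. Reversing the chain of equivalent manipulations yields the original inequality.

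There is no real obstacle here; the main thing to be careful about is noting that the denominator $3(1-\Error)$ is positive on the stated range (which is why we can clear it safely), and that the endpoint $\Error = \tfrac{2}{3}$ is exactly where the factor $2 - 3\Error$ hits zero, giving equality. This suggests the bound in the hypothesis is tight.
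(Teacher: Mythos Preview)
Your proof is correct and follows essentially the same approach as the paper: the paper simplifies the inequality to $\Error(\tfrac{2}{3}-\Error)\geq 0$, which is your $\Error(2-3\Error)\geq 0$ up to a constant factor. You have simply spelled out the intermediate algebraic steps that the paper omits.
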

	
	\begin{proof}
		The required inequality can be simplified to $ \Error ( \frac{2}{3} -\Error ) \geq 0$, which holds for all $ 0 \leq \Error \leq \frac{2}{3} $. 
	\end{proof}
	
	\begin{claim}\label{claim:error factor lower-bound}
		For all $ 0 \leq \Error \leq 1 $, it holds that $ 1-\frac{\Error}{2} \geq \frac{1-\Error/3}{1+\Error/3} $.
	\end{claim}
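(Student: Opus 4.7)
The plan is a direct algebraic verification. Since $0 \leq \varepsilon \leq 1$, the denominator $1 + \varepsilon/3$ is strictly positive, so I can cross-multiply without flipping the inequality. This turns the claim into
\[
\left(1 - \tfrac{\varepsilon}{2}\right)\left(1 + \tfrac{\varepsilon}{3}\right) \geq 1 - \tfrac{\varepsilon}{3},
\]
which is a polynomial inequality in $\varepsilon$.

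Next I would expand the product on the left as $1 + \varepsilon/3 - \varepsilon/2 - \varepsilon^2/6$ and subtract the right-hand side, reducing the desired inequality to
\[
\tfrac{2\varepsilon}{3} - \tfrac{\varepsilon}{2} - \tfrac{\varepsilon^2}{6} \geq 0,
\]
which simplifies (putting everything over $6$) to $\varepsilon(1 - \varepsilon)/6 \geq 0$. This is manifestly true for $0 \leq \varepsilon \leq 1$, completing the proof.

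There is no real obstacle here: the claim is a one-line algebraic fact, and the only thing to be careful about is the sign of $1 + \varepsilon/3$ when cross-multiplying, which is handled by the hypothesis $\varepsilon \geq 0$. The entire argument can be presented as a short chain of equivalent inequalities ending in $\varepsilon(1-\varepsilon) \geq 0$.
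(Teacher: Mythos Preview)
Your proof is correct and follows essentially the same approach as the paper: both cross-multiply and reduce the inequality to $\varepsilon \geq \varepsilon^2$ (equivalently, $\varepsilon(1-\varepsilon) \geq 0$) on $[0,1]$. The paper's proof is simply a one-line statement of this simplification without the intermediate algebra you have written out.
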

	
	\begin{proof}
		The required inequality can be simplified to $ \Error \geq \Error^2$, which holds for all $ 0 \leq \Error \leq 1 $. 
	\end{proof}
	
	\begin{lemma}\label{lemma:PGS approx factor}
		Let $(\Funcf, \Card)$ be an instance of \SMCC{} where $\Funcf$ is $\PSep$-superseparable. Suppose \PGSa{} (\cref{alg:PGS}) is run such that $\OptLB \leq \OptValue \leq \frac{\OptLB}{\ApproxParam}$ and $0 < \Error < 1-\frac{1}{e}$. Further, suppose that \PGSa{} terminates successfully, and let $\SampleIter$ be the index of the final iteration of the \cref{line:PGS while} loop. Then \PGSa{} returns a solution $\SetS_{\SampleIter}$ such that $|\SetS_{\SampleIter}| \leq \Card$ and
		\begin{align}
			\mathbb{E}[\Funcf(\SetS_{\SampleIter})] \geq \left(1-\frac{1}{e}-\Error\right) \OptValue\,. \notag
		\end{align}
	\end{lemma}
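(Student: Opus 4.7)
My plan is to follow the two-case structure used by Chen et al.\ in their approximation analysis of \PGBa{} (Lemma~15 in~\cite{chen2021best}), adapting the per-iteration gain step to cover both branches of \PGSa{}'s inner if-statement on \cref{line:PGS if then sample}. Let $u$ be the index of the final iteration of the \cref{line:PGS while} loop; the loop's guard forces termination in one of two cases: \textbf{Case~A}, $|\SetS_u| = \Card$, or \textbf{Case~B}, $|\SetS_u| < \Card$ with $\Threshold_u < \OptLB/((1-\Error)3\Card)$. Before the case split, I would establish a uniform per-iteration gain bound: for every $i \leq u$ and conditional on $\SetS_{i-1}$,
\[
    \mathbb{E}[\Funcf(\SetT_i \mid \SetS_{i-1})] \;\geq\; \tfrac{1-\Error/3}{1+\Error/3}\,|\SetT_i|\,\Threshold_i.
\]
The sampling branch (\cref{line:PGS T sample}) satisfies this by \cref{lemma:expected gain of sample} combined with \cref{claim:error factor lower-bound}; the \TBSa{} branch (\cref{line:PGS thresholding}), called with error $\Error/3$ and threshold $\Threshold_i$, satisfies it by the guarantee in \cref{theorem:TBS results}.

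For Case~A I would reproduce Chen et al.'s \PGBa{}-style telescoping induction, showing
\[
    \mathbb{E}[\OptValue - \Funcf(\SetS_i)] \;\leq\; \OptValue \cdot \prod_{j=1}^{i}\Bigl(1 - \tfrac{(1-\Error/3)(1-\Error)}{1+\Error/3}\cdot\tfrac{|\SetT_j|}{\Card}\Bigr).
\]
The inductive step combines the uniform per-iteration gain with the geometric threshold schedule $\Threshold_j = (1-\Error)^j \OptLB/(\ApproxParam\Card) \geq (1-\Error)^j \OptValue/\Card$ to obtain $\Threshold_j \gtrsim (1-\Error)(\OptValue-\mathbb{E}[\Funcf(\SetS_{j-1})])/\Card$, which is the standard bridge between threshold greedy and the classical $(1-1/e)$ decay. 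Specializing to $i=u$, using $\sum_{j\leq u}|\SetT_j| = \Card$, and applying $1-x \leq e^{-x}$ followed by \cref{lemma:exp upper-bound} then yields $\mathbb{E}[\Funcf(\SetS_u)] \geq (1 - 1/e - \Error)\OptValue$.

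For Case~B, the small-threshold exit condition $\Threshold_u < \OptLB/((1-\Error)3\Card) \leq \OptValue/((1-\Error)3\Card)$ together with the standard submodular-threshold residual bound $\mathbb{E}[\OptValue - \Funcf(\SetS_u)] \leq \Card\Threshold_u \cdot (1+\Error/3)/(1-\Error/3)$ gives
\[
    \mathbb{E}[\Funcf(\SetS_u)] \;\geq\; \Bigl(1 - \tfrac{1}{(1-\Error)\,3}\Bigr)\OptValue \;\geq\; \Bigl(\tfrac{2}{3} - \Error\Bigr)\OptValue \;\geq\; \Bigl(1-\tfrac{1}{e}-\Error\Bigr)\OptValue,
\]
where the middle inequality is \cref{claim:error and e factor lower-bound} and the last uses $1-1/e < 2/3$. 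The main obstacle I anticipate lies in the Case~A inductive step: because the sampling branch of \PGSa{} does not supply the usual pointwise ``all remaining elements have marginal gain below $\Threshold_j$'' invariant that \TSa{} supplies inside \PGBa{}, justifying $\Threshold_j \gtrsim (1-\Error)(\OptValue-\mathbb{E}[\Funcf(\SetS_{j-1})])/\Card$ must lean on $\PSep$-superseparability through \cref{lemma:average gain} (and hence \cref{lemma:expected gain of sample}), together with the structural observation that any invocation of sampling adds $\Card - |\SetS_{i-1}|$ elements at once and hence immediately terminates the loop in Case~A, so the induction only needs to track pure \TBSa{} iterations followed by at most one final sampling iteration.
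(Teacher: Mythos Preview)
Your two-case split and the overall telescoping structure match the paper's proof essentially step for step. Two small corrections will make the write-up clean.

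First, in Case~B your residual bound is stated as $\mathbb{E}[\OptValue - \Funcf(\SetS_u)] \leq \Card\Threshold_u \cdot (1+\Error/3)/(1-\Error/3)$, but that extra factor is both unnecessary and incompatible with the displayed conclusion: with it present, the inequality $\mathbb{E}[\Funcf(\SetS_u)] \geq (1 - 1/((1-\Error)3))\OptValue$ would not follow. In Case~B no sampling ever occurred (sampling fills $\SetS$ to size $\Card$), so every iteration used \TBSa{}; together with the fourth bullet of \cref{theorem:TBS results} and the filtering in \cref{line:PGS G update}, one has the deterministic, pointwise bound $\Funcf(\OptElem \mid \SetS_u) < \Threshold_u$ for every $\OptElem \in \OptSet \setminus \SetS_u$, and hence $\OptValue - \Funcf(\SetS_u) \leq \Card\Threshold_u$ with no extra factor and no expectation. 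This is exactly the paper's \cref{claim:marginal gain opt ub}.

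Second, the ``obstacle'' you anticipate in Case~A is not actually an obstacle, and it does \emph{not} require $\PSep$-superseparability. The structural observation you already state --- that sampling can occur only in the last iteration --- is by itself sufficient: for every $j \leq u$ with $j \geq 2$, the set $\SetS_{j-1}$ was produced by \TBSa{} and has $|\SetS_{j-1}| < \Card$, so \cref{claim:marginal gain opt ub} applies at step $j-1$ and gives $\Threshold_j = (1-\Error)\Threshold_{j-1} \geq (1-\Error)\tfrac{1}{\Card}(\OptValue - \Funcf(\SetS_{j-1}))$ deterministically. The only place $\PSep$-superseparability enters the argument is through \cref{lemma:expected gain of sample}, which supplies the expected-gain bound for the (at most one) sampled $\SetT_u$; that is the per-iteration gain bound, not the threshold lower bound. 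Once you separate these two roles, the induction in Case~A goes through exactly as in the paper.
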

	
	\begin{proof}
		We prove the approximation factor of \PGSa{} in two cases: $|\SetS_{\SampleIter}| < \Card$ and $|\SetS_{\SampleIter}| = \Card$. Before this, we prove \cref{claim:marginal gain opt ub} below.
		
		\begin{claim} \label{claim:marginal gain opt ub}
			For each iteration $\IterIndex \geq 1$ of the \cref{line:PGS while} loop, if $|\SetS_{\IterIndex}| < \Card$, then for all optimal elements \mbox{$\OptElem \in \OptSet \setminus \SetS_{\IterIndex} \colon \Funcf(\OptElem \mid \SetS_{\IterIndex}) < \Threshold_{\IterIndex}$}.
		\end{claim}
		
		\begin{proof}
			If $|\SetS_{\IterIndex}| < \Card$, then $\SetS_{\IterIndex}$ had room to include new elements without violating the cardinality constraint. This means that, for each $\OptElem \in \OptSet \setminus \SetS_{\IterIndex}$, either
			\begin{itemize}
				\item $\OptElem \notin \GoodElems_{\IterIndex}$, which means $\Funcf(\OptElem) = \ElemGain_{\OptElem} < \Threshold_{\IterIndex}$ (\cref{line:PGS assign value,line:PGS G update}). Therefore, by submodularity and non-negativity, $\Funcf(\OptElem \mid \SetS_{\IterIndex}) \leq \Funcf(\OptElem) < \Threshold_{\IterIndex}$.
				\item $\OptElem \in \GoodElems_{\IterIndex}$. But due to $|\SetS_{\IterIndex}| < \Card$, the added set $\SetT_{\IterIndex}$ must have been assigned by \TBSa{} (\cref{line:PGS thresholding}) as the other assignment of $\SetT_{\IterIndex}$ (\cref{line:PGS T sample}) would have given $ |\SetS_{\IterIndex}| = \Card $. Then $\OptElem$ must have failed the threshold within \TBSa{}. That is, by \cref{theorem:TBS results}, $\Funcf(\OptElem \mid \SetS_{\IterIndex}) = \Funcf(\OptElem \mid \SetS_{\IterIndex-1} \cup \SetT_{\IterIndex}) = \Funcg(\OptElem \mid \SetT_{\IterIndex}) < \Threshold_{\IterIndex}$. 
			\end{itemize} 
		\end{proof}
		
		\paragraph*{Case $|\SetS_{\SampleIter}| < \Card$.}
		We prove the approximation factor in the case $|\SetS_{\SampleIter}| < \Card$. Note below that the final threshold $\Threshold_{\SampleIter}$ satisfies $\Threshold_{\SampleIter} < \frac{\OptLB}{(1-\Error) 3 \Card}$, as this is the only way to exit the \cref{line:PGS while} loop when $|\SetS_{\SampleIter}| < \Card$.
		\begin{align}
			\OptValue - \Funcf(\SetS_{\SampleIter}) &\leq \Funcf(\OptSet \cup \SetS_{\SampleIter}) - \Funcf(\SetS_{\SampleIter}) &\text{monotonicity} \notag \\
			&\leq \sum_{\OptElem\in\OptSet\setminus\SetS_{\SampleIter}} \Funcf(\OptElem \mid \SetS_{\SampleIter}) &\text{submodularity} \notag \\
			&\leq \Card \Threshold_{\SampleIter} &\text{\cref{claim:marginal gain opt ub}} \notag \\
			&< \Card \frac{\OptLB}{(1-\Error) 3 \Card} &\text{$\Threshold_{\SampleIter}$ fails the \cref{line:PGS while} condition} \notag \\
			&= \frac{\OptLB}{(1-\Error) 3} \notag \\
			&\leq \frac{\OptValue}{(1-\Error) 3}\,, &\text{$ \OptLB \leq \OptValue $ holds on input to \PGSa{}} \notag \\
			\Funcf(\SetS_{\SampleIter}) &\geq \left(1-\frac{1}{(1-\Error)3}\right)\OptValue \notag \\
			&\geq \left(\frac{2}{3} - \Error \right)\OptValue\,. &\text{\cref{claim:error and e factor lower-bound} and $0 < \Error < 1-\frac{1}{e}$} \notag
		\end{align}
		
		\paragraph*{Case $|\SetS_{\SampleIter}| = \Card$.} 
		First, we prove \cref{claim:threshold lower-bound}.
		
		\begin{claim} \label{claim:threshold lower-bound}
			For each $\IterIndex \in \{1, \dots, \SampleIter\}$, we have
			\begin{align}
				\Threshold_{\IterIndex} &\geq (1-\Error) \frac{1}{\Card} \left( \OptValue - \Funcf(\SetS_{\IterIndex-1}) \right). \notag
			\end{align}
		\end{claim}
		
		\begin{proof}
			First we lower-bound $\Threshold_{\IterIndex}$ for $\IterIndex = 1$.
			\begin{align}
				\Threshold_{1} &= (1-\Error)\frac{\OptLB}{\ApproxParam \Card} &\text{value of $\Threshold_{1}$ (\cref{line:PGS tau update})} \notag \\
				&\geq (1-\Error)\frac{1}{\Card}\OptValue &\text{$\frac{\OptLB}{\ApproxParam} \geq \OptValue$ holds on input to \PGSa{}} \notag \\
				&\geq (1-\Error)\frac{1}{\Card}(\OptValue - \Funcf(\SetS_{\IterIndex-1}))\,. &\text{non-negativity} \notag
			\end{align}
			Then, for every $\IterIndex \in \{2, \dots, \SampleIter \}$, we lower-bound $\Threshold_{\IterIndex}$. Note that $ |\SetS_{\IterIndex-1}| < \Card $ since $ \IterIndex - 1 < \SampleIter $.
			\begin{align}
				\Threshold_{\IterIndex} &= (1-\Error) \Threshold_{\IterIndex-1} &\text{\cref{line:PGS tau update}} \notag \\
				&\geq (1-\Error) \frac{1}{\Card}\sum_{\OptElem \in \OptSet\setminus \SetS_{\IterIndex-1}} \Funcf(\OptElem \mid \SetS_{\IterIndex-1}) &\text{\cref{claim:marginal gain opt ub} and $ |\SetS_{\IterIndex-1}| < \Card $} \notag\\
				&\geq (1-\Error) \frac{1}{\Card} \left( \Funcf(\OptSet\cup\SetS_{\IterIndex-1}) - \Funcf(\SetS_{\IterIndex-1}) \right) &\text{submodularity} \notag \\
				&\geq (1-\Error) \frac{1}{\Card} \left( \OptValue - \Funcf(\SetS_{\IterIndex-1}) \right)\,. &\text{monotonicity} \notag
			\end{align} 
		\end{proof}	
		
		Now, for every $\IterIndex \in \{1,\dots,\SampleIter\}$, we lower-bound $\Funcf(\SetT_{\IterIndex} \mid \SetS_{\IterIndex-1})$ assuming that every $\SetT_{\IterIndex}$ was assigned by \TBSa{} in \cref{line:PGS thresholding}.
		\begin{align}
			\Funcf(\SetT_{\IterIndex} \mid \SetS_{\IterIndex-1}) &\geq \frac{1-\Error/3}{1+\Error/3} \Threshold_{\IterIndex} |\SetT_{\IterIndex}| &\text{\cref{theorem:TBS results}} \notag \\
			&\geq \frac{(1-\Error/3)(1-\Error)}{1+\Error/3} \frac{|\SetT_{\IterIndex}|}{\Card} (\OptValue - \Funcf(\SetS_{\IterIndex-1}) )\,, & \label{eqn:marginal gain Ti}
		\end{align}
		where the latter follows from \cref{claim:threshold lower-bound}.
		Next, we lower-bound $\mathbb{E}[\Funcf(\SetT_{\SampleIter} \mid \SetS_{\SampleIter-1})]$ assuming that the final added set $\SetT_{\SampleIter}$ was sampled in \cref{line:PGS T sample}. (Note that no earlier set $\SetT_{\IterIndex}$ could have been sampled since the sampling always fills the solution with $\Card$ elements, causing the \cref{line:PGS while} loop to break.)
		\begin{align}
			\mathbb{E}[ \Funcf(\SetT_\SampleIter \mid \SetS_{\SampleIter-1} ) ] &\geq |\SetT_\SampleIter| \left(1-\frac{\Error}{2}\right)\Threshold_\SampleIter &\text{\cref{lemma:expected gain of sample}} \notag \\
			&\geq \left(1-\frac{\Error}{2}\right)(1-\Error) \frac{|\SetT_\SampleIter|}{\Card} (\OptValue - \Funcf(\SetS_{\SampleIter-1}) ) &\text{\cref{claim:threshold lower-bound}} \notag \\
			&\geq \frac{(1-\Error/3)(1-\Error)}{1+\Error/3} \frac{|\SetT_{\SampleIter}|}{\Card} (\OptValue - \Funcf(\SetS_{\SampleIter-1}) )\,, & \label{eqn:marginal gain sampled Tu}
		\end{align}
		where the latter follows from \cref{claim:error factor lower-bound}.

		Now, for every $\IterIndex\in\{1, \dots, \SampleIter \}$, we rearrange \cref{eqn:marginal gain Ti} to derive an upper-bound for $\OptValue - \Funcf(\SetS_{\IterIndex})$ assuming that every $\SetT_{\IterIndex}$ was assigned by \TBSa{} in \cref{line:PGS thresholding}.
		\begin{align}
			\Funcf(\SetT_{\IterIndex} \mid \SetS_{\IterIndex-1}) &\geq \frac{(1-\Error/3)(1-\Error)}{1+\Error/3} \frac{|\SetT_{\IterIndex}|}{\Card} (\OptValue - \Funcf(\SetS_{\IterIndex-1}) )\,, \notag \\
			\Funcf(\SetS_{\IterIndex}) - \Funcf(\SetS_{\IterIndex-1}) &\geq \frac{(1-\Error/3)(1-\Error)}{1+\Error/3} \frac{|\SetT_{\IterIndex}|}{\Card} (\OptValue - \Funcf(\SetS_{\IterIndex-1}) )\,, \notag \\
			\OptValue - \Funcf(\SetS_{\IterIndex}) &\leq \left(1-\frac{(1-\Error/3)(1-\Error)}{1+\Error/3} \frac{|\SetT_{\IterIndex}|}{\Card}\right) (\OptValue - \Funcf(\SetS_{\IterIndex-1}) )\,. \label{eqn:opt Si upper-bound}
		\end{align}
		
		We similarly rearrange Inequality~\eqref{eqn:marginal gain sampled Tu} to derive an upper-bound for $\OptValue - \mathbb{E}[\Funcf(\SetS_{\SampleIter})]$ assuming that $\SetT_{\SampleIter}$ was sampled in \cref{line:PGS T sample}. Note that we replace $\mathbb{E}[\Funcf(\SetS_{\SampleIter-1})]$ with $\Funcf(\SetS_{\SampleIter-1})$ since the expectation is with respect to the uniform-at-random sampling of the added set $\SetT_{\SampleIter}$, and the value of $\Funcf(\SetS_{\SampleIter-1})$ is constant once the sampling begins.
		\begin{align}
			\mathbb{E}[\Funcf(\SetT_{\SampleIter} \mid \SetS_{\SampleIter-1})] &\geq \frac{(1-\Error/3)(1-\Error)}{1+\Error/3} \frac{|\SetT_{\SampleIter}|}{\Card} (\OptValue - \Funcf(\SetS_{\SampleIter-1}) )\,, \notag \\
			\mathbb{E}[\Funcf(\SetS_{\SampleIter})] - \Funcf(\SetS_{\SampleIter-1}) &\geq \frac{(1-\Error/3)(1-\Error)}{1+\Error/3} \frac{|\SetT_{\SampleIter}|}{\Card} (\OptValue - \Funcf(\SetS_{\SampleIter-1}) )\,, \notag \\
			\OptValue - \mathbb{E}[\Funcf(\SetS_{\SampleIter})] &\leq \left(1-\frac{(1-\Error/3)(1-\Error)}{1+\Error/3} \frac{|\SetT_{\SampleIter}|}{\Card}\right) (\OptValue - \Funcf(\SetS_{\SampleIter-1}) )\,. \label{eqn:opt sampled Su upper-bound}
		\end{align}
		
		Finally, we prove the approximation factor in the case $|\SetS_{\SampleIter}| = \Card$. Assuming that $\SetT_{\SampleIter}$ was sampled in \cref{line:PGS T sample}, we begin with \cref{eqn:opt sampled Su upper-bound} and then chain \cref{eqn:opt Si upper-bound} on the right-hand side for each $\IterIndex \in \{\SampleIter-1, \dots, 1\}$. If we instead assume that every $\SetT_{\IterIndex}$ was assigned by \TBSa{} in \cref{line:PGS thresholding}, the approximation factor is derived in the same way except that we simply begin with \cref{eqn:opt Si upper-bound} instead of \cref{eqn:opt sampled Su upper-bound}.
		\begin{align}
			&\OptValue - \mathbb{E}[\Funcf(\SetS_{\SampleIter})] \notag \\
			&\leq \prod_{\IterIndex=1}^{\SampleIter}\left(1 - \frac{(1-\Error/3)(1-\Error)}{(1+\Error/3)} \frac{|\SetT_{\IterIndex}|}{\Card} \right) \cdot (\OptValue - \Funcf(\SetS_{0})) \notag \\
			&\leq \prod_{\IterIndex=1}^{\SampleIter}\left(1 - \frac{(1-\Error/3)(1-\Error)}{(1+\Error/3)} \frac{|\SetT_{\IterIndex}|}{\Card} \right) \cdot \OptValue &\text{non-negativity} \notag \\
			&\leq \prod_{\IterIndex=1}^{\SampleIter} \left( e^{ -\frac{(1-\Error/3)(1-\Error)}{(1+\Error/3)} \frac{|\SetT_{\IterIndex}|}{\Card} } \right) \cdot \OptValue \notag \\
			&= e^{ -\frac{(1-\Error/3)(1-\Error)}{(1+\Error/3)} \frac{1}{\Card} \sum_{\IterIndex=1}^{\SampleIter} |\SetT_{\IterIndex}| }  \cdot \OptValue \notag \\
			&= e^{ -\frac{(1-\Error/3)(1-\Error)}{(1+\Error/3)} } \cdot \OptValue &\text{$ \sum_{\IterIndex=1}^{\SampleIter} |\SetT_{\IterIndex}| = |\SetS_{\SampleIter}| = \Card $} \notag \\
			&\leq \left( \frac{1}{e} + \Error \right) \OptValue\,, &\text{\cref{lemma:exp upper-bound}} \notag \\
			\mathbb{E}[\Funcf(\SetS_{\SampleIter})]	&\geq \left(1 - \frac{1}{e} - \Error \right) \OptValue\,. \notag
		\end{align} 
	\end{proof}
	
	\paragraph{Preliminary Claims for \PGSa{}.}
	Here we prove Claims~\ref{claim:num PGS iters} and \ref{claim:num good elems upper-bound}, which are used in the proofs of the adaptive complexity (\cref{lemma:PGS adaptive complexity}), query complexity (\cref{lemma:PGS query complexity}), and success probability (\cref{lemma:PGS success probability}) of \PGSa{} (\cref{alg:PGS}).
	\begin{claim} \label{claim:num PGS iters}
		In \PGSa{} (\cref{alg:PGS}), there are at most $ \log_{1-\Error}( \frac{\ApproxParam}{3} ) $ iterations of the \cref{line:PGS while} loop. That is, the iteration index $\IterIndex \leq \log_{1-\Error}( \frac{\ApproxParam}{3} ) $.
	\end{claim}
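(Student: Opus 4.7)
The plan is to directly analyze the termination condition of the while-loop in \cref{alg:PGS}, focusing on the threshold criterion $\Threshold_{\IterIndex} \geq \frac{\OptLB}{(1-\Error) 3 \Card}$. The cardinality criterion $|\SetS_{\IterIndex}| < \Card$ can only cause the loop to terminate earlier, so the threshold criterion alone supplies an upper bound on the iteration count that we seek.

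First, I would record that \cref{line:PGS tau update} yields the closed form $\Threshold_{\IterIndex} = (1-\Error)^{\IterIndex} \frac{\OptLB}{\ApproxParam \Card}$ for every $\IterIndex \geq 1$, matching also the initialization $\Threshold_0 = \frac{\OptLB}{\ApproxParam \Card}$ from \cref{line:PGS initialisation}. Next, I would note that for iteration $\IterIndex \geq 1$ of the while-loop to be entered, the while-condition must succeed when checked using the previous loop-index $\IterIndex - 1$; in particular, the inequality $\Threshold_{\IterIndex - 1} \geq \frac{\OptLB}{(1-\Error) 3 \Card}$ must hold. Substituting the closed form gives $(1-\Error)^{\IterIndex - 1} \frac{\OptLB}{\ApproxParam \Card} \geq \frac{\OptLB}{(1-\Error) 3 \Card}$, which after cancelling $\frac{\OptLB}{\Card}$ from both sides and rearranging simplifies to $(1-\Error)^{\IterIndex} \geq \frac{\ApproxParam}{3}$.

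Finally, since $0 < 1 - \Error < 1$ by assumption, taking logarithms with base $1-\Error$ reverses the inequality and yields $\IterIndex \leq \log_{1-\Error}\!\bigl(\frac{\ApproxParam}{3}\bigr)$. Because $\IterIndex$ starts at $0$ and is incremented by exactly one per executed iteration (\cref{line:PGS i update}), this bound on the final value of $\IterIndex$ is also the claimed bound on the total number of while-loop iterations.

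There is no real technical obstacle here: the argument is essentially bookkeeping on a geometric decay. The only points requiring care are the slight index shift between ``we are about to enter iteration $\IterIndex$'' (so the while-check is evaluated with $\Threshold_{\IterIndex-1}$) versus ``iteration $\IterIndex$ has just completed'' (so $\Threshold_{\IterIndex}$ is the freshly computed value), and the sign flip when dividing by $\log(1-\Error) < 0$.
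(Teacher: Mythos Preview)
Your proposal is correct and follows essentially the same approach as the paper: both observe that entering iteration $\IterIndex$ requires $\Threshold_{\IterIndex-1} \geq \frac{\OptLB}{(1-\Error)3\Card}$, substitute the closed form $\Threshold_{\IterIndex-1} = (1-\Error)^{\IterIndex-1}\frac{\OptLB}{\ApproxParam\Card}$, and solve for $\IterIndex$. Your write-up is slightly more explicit about the index shift and the sign reversal when taking the logarithm, but the argument is the same.
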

	
	\begin{proof}
		Let $ \SampleIter $ be the index of the final iteration of the \cref{line:PGS while} loop, which must also be the number of loop iterations. To upper-bound $ \SampleIter $, observe that upon entering iteration $ \SampleIter $, $\Threshold_{\SampleIter-1} \geq \frac{\OptLB}{(1-\Error)3\Card}$ must hold due to the loop condition. Further, we have that $\Threshold_{\SampleIter-1} = (1-\Error)^{\SampleIter-1} \frac{\OptLB}{\ApproxParam \Card}$ as assigned in \cref{line:PGS tau update} of the previous iteration. Thus, solving for $\SampleIter$ in $(1-\Error)^{\SampleIter-1} \frac{\OptLB}{\ApproxParam \Card} \geq \frac{\OptLB}{(1-\Error) 3 \Card}$ gives $\SampleIter \leq \log_{1-\Error}( \frac{\ApproxParam}{3} )$. 
	\end{proof}
	
	\begin{claim} \label{claim:num good elems upper-bound}
		In \PGSa{} (\cref{alg:PGS}), for every iteration $ \IterIndex $ of the \cref{line:PGS while} loop, it holds that if \cref{line:PGS thresholding} is run so that \TBSa{} (\cref{alg:TBS}) is called with input set $ \GoodElems_{\IterIndex} $, then $ |\GoodElems_{\IterIndex}| < \MinSizeForSample_{\IterIndex} \leq \MinSizeForSampleMax $, where $\MinSizeForSampleMax = \frac{3 \PSep \Card}{\ApproxParam \Error / 2} + \Card - 1 $ as in \cref{line:PGS initialisation} of \PGSa{}.
	\end{claim}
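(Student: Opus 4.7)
The plan is to split the claim into its two inequalities and handle them separately. The strict inequality $|\GoodElems_{\IterIndex}| < \MinSizeForSample_{\IterIndex}$ is essentially a tautology from the control flow of \cref{alg:PGS}: \TBSa{} is invoked on \cref{line:PGS thresholding} only in the \textbf{else} branch of the \cref{line:PGS if then sample} if-statement, so the hypothesis of the if-condition, $|\GoodElems_{\IterIndex}| \geq \MinSizeForSample_{\IterIndex}$, must have failed. I would state this in one sentence.

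The substantive part is the bound $\MinSizeForSample_{\IterIndex} \leq \MinSizeForSampleMax$. Plugging in the values from \cref{line:PGS m update,line:PGS initialisation}, the required inequality is
\begin{align}
\frac{\PSep \Card}{(1-\Error)^{\IterIndex}\, \Error/2} + \Card - |\SetS_{\IterIndex-1}| - 1 \;\leq\; \frac{3\PSep \Card}{\ApproxParam\, \Error/2} + \Card - 1\,. \notag
\end{align}
Since $|\SetS_{\IterIndex-1}| \geq 0$, it suffices to show that $(1-\Error)^{\IterIndex} \geq \tfrac{\ApproxParam}{3}$, so that the first term on the left is bounded by the first term on the right. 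For this I will invoke \cref{claim:num PGS iters}, which gives $\IterIndex \leq \log_{1-\Error}(\tfrac{\ApproxParam}{3})$; rearranging (and noting that $0<1-\Error<1$ reverses the inequality when exponentiating) yields exactly $(1-\Error)^{\IterIndex} \geq \tfrac{\ApproxParam}{3}$, as required.

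The only subtle point, and what I would flag as the main obstacle, is making sure the direction of the inequality is handled correctly when translating the logarithmic bound of \cref{claim:num PGS iters} into the exponential bound on $(1-\Error)^{\IterIndex}$: the base $1-\Error$ lies in $(0,1)$, so the function $x \mapsto (1-\Error)^{x}$ is strictly decreasing. Everything else is a direct substitution and a non-negativity bound on $|\SetS_{\IterIndex-1}|$, so the proof should fit in a few lines once this monotonicity is made explicit.
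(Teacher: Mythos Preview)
Your proposal is correct and mirrors the paper's proof essentially line for line: the paper also reads off $|\GoodElems_{\IterIndex}| < \MinSizeForSample_{\IterIndex}$ from the failed if-condition, then bounds $\MinSizeForSample_{\IterIndex}$ by dropping the nonnegative $|\SetS_{\IterIndex-1}|$ term and using \cref{claim:num PGS iters} to replace $(1-\Error)^{\IterIndex}$ by $\tfrac{\ApproxParam}{3}$. Your explicit remark about the decreasing base $1-\Error$ is exactly the justification the paper leaves implicit in its inequality chain.
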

	
	\begin{proof}
		For an arbitrary iteration $ \IterIndex $ of the \cref{line:PGS while} loop, in order for \cref{line:PGS thresholding} to run, $ |\GoodElems_{\IterIndex}| < \MinSizeForSample_{\IterIndex} $ must hold due to the \cref{line:PGS if then sample} if-condition. Below, we further show that $\MinSizeForSample_{\IterIndex} \leq \MinSizeForSampleMax$ by bounding $\IterIndex$, proving the claim.
		\begin{align}
			|\GoodElems_{\IterIndex}| &< \MinSizeForSample_{\IterIndex}, \notag \\
			&\leq \frac{\PSep \Card}{(1-\Error)^\IterIndex \Error / 2} + \Card - 1 &\text{value of $\MinSizeForSample_{\IterIndex}$ (\cref{line:PGS m update})} \notag \\
			&\leq \frac{\PSep \Card}{(1-\Error)^{\log_{1-\Error}\left( \frac{\ApproxParam}{3} \right)} \Error / 2} + \Card - 1 &\text{\cref{claim:num PGS iters}} \notag \\
			&= \frac{3 \PSep \Card}{\ApproxParam \Error / 2} + \Card - 1 \notag \\
			&= \MinSizeForSampleMax\,. &\text{\cref{line:PGS initialisation}} \notag
		\end{align} 
	\end{proof}	
	
	\paragraph{Adaptive Complexity of \PGSa{}.}
	\begin{lemma} \label{lemma:PGS adaptive complexity}
		The adaptive complexity of \PGSa{} (\cref{alg:PGS}) is $ O( \frac{\log( \ApproxParam^{-1} )}{\Error^2} \log( \frac{\PSep \Card \log( \ApproxParam^{-1} ) }{\ApproxParam \Error} ) ) $.
	\end{lemma}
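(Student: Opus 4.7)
The plan is to account for the adaptive rounds by iteration type. By \cref{claim:num PGS iters}, the \cref{line:PGS while} loop runs for at most $\log_{1-\Error}(\ApproxParam/3) = O(\log(\ApproxParam^{-1})/\Error)$ iterations. Outside the loop, the only adaptive cost is the parallel evaluation of $\Funcf(\Elem)$ for all $\Elem \in \GroundSet$ at \cref{line:PGS assign value}, which is a single adaptive round. Within an iteration, \cref{line:PGS tau update,line:PGS G update,line:PGS m update,line:PGS solution update} require only cached values of $\ElemGain_\Elem$ and involve no oracle queries, and \cref{line:PGS T sample}, when executed, is a purely non-adaptive uniform-random sampling step. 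Thus the only adaptive cost per iteration arises when \cref{line:PGS thresholding} is executed, i.e., when \TBSa{} is invoked.

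Next I would bound the adaptive complexity of a single \TBSa{} call. By \cref{claim:num good elems upper-bound}, whenever \TBSa{} runs its input set satisfies $|\GoodElems_{\IterIndex}| < \MinSizeForSampleMax = O(\PSep\Card/(\ApproxParam\Error))$, and from \cref{line:PGS thresholding} the size parameter passed in is $\min\{\MinSizeForSampleMax, \GroundSetSize\}$, the error parameter is $\Error/3$, and the failure probability parameter is $\ProbFail = 1/\log_{1-\Error}(\ApproxParam/3) = \Omega(\Error/\log(\ApproxParam^{-1}))$ as set in \cref{line:PGS initialisation}. Substituting these into the $O(\frac{1}{\ErrorA}\log(\MinSizeForSample/\ProbFail))$ adaptive bound guaranteed by \cref{theorem:TBS results} yields an adaptivity per call of $O(\frac{1}{\Error}\log(\frac{\PSep\Card\log(\ApproxParam^{-1})}{\ApproxParam\Error^2}))$.

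Finally, I would multiply the per-iteration adaptive cost by the iteration count from \cref{claim:num PGS iters} (plus the $O(1)$ round of preprocessing, which is absorbed) to obtain the total $O(\frac{\log(\ApproxParam^{-1})}{\Error^2}\log(\frac{\PSep\Card\log(\ApproxParam^{-1})}{\ApproxParam\Error^2}))$. The main obstacle, such as it is, is collapsing the $\Error^{-2}$ inside the logarithm to $\Error^{-1}$ to match the stated bound; this follows because $\PSep\Card\log(\ApproxParam^{-1})/\ApproxParam \geq 1$ in the relevant regime, so $\log(\frac{\PSep\Card\log(\ApproxParam^{-1})}{\ApproxParam\Error^2}) \leq 2\log(\frac{\PSep\Card\log(\ApproxParam^{-1})}{\ApproxParam\Error})$, and the extra factor is hidden by the $O(\cdot)$. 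Otherwise the argument is a straightforward combination of \cref{claim:num PGS iters,claim:num good elems upper-bound} with \cref{theorem:TBS results}, exploiting the fact that the uniform sampling branch contributes zero adaptive rounds—which is precisely why $\PSep$-superseparability pays off here.
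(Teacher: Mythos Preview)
Your proposal is correct and follows essentially the same approach as the paper: bound the number of while-loop iterations via \cref{claim:num PGS iters}, observe that the sampling branch costs zero adaptive rounds while a \TBSa{} call costs $O(\frac{1}{\Error}\log(\frac{\MinSizeForSampleMax}{\ProbFail}))$ by \cref{theorem:TBS results}, substitute the values of $\MinSizeForSampleMax$ and $\ProbFail$ from \cref{line:PGS initialisation}, and multiply. Your explicit handling of the $\Error^{-2}$ versus $\Error^{-1}$ inside the logarithm is actually more careful than the paper, which silently absorbs the extra $\Error^{-1}$ coming from $1/\ProbFail = \log_{1-\Error}(\ApproxParam/3) = \Theta(\log(\ApproxParam^{-1})/\Error)$ into the big-$O$.
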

	
	\begin{proof}		
		First, we prove \cref{claim:PGS adaptive complexity iter} below.
		\begin{claim}\label{claim:PGS adaptive complexity iter}
			In \PGSa{} (\cref{alg:PGS}), the adaptive complexity of each iteration of the \cref{line:PGS while} loop is $ O( \frac{1}{\Error}\log( \frac{\PSep \Card \log( \ApproxParam^{-1} ) }{\ApproxParam \Error} ) ) $.
		\end{claim}
		
		\begin{proof}
			Each loop iteration requires $ O( \frac{1}{\Error}\log (\frac{\MinSizeForSampleMax}{\ProbFail} ) ) $ adaptive rounds since running \TBSa{} (\cref{line:PGS thresholding}) uses this many adaptive rounds by \cref{theorem:TBS results}. Crucially, observe that if the sampling step in \cref{line:PGS T sample} is run instead of \cref{line:PGS thresholding}, it would require no adaptive rounds. By simplifying this bound, we prove \cref{claim:PGS adaptive complexity iter} below.
			\begin{align}
				&\text{Adaptive complexity of one iteration of \textsc{\PGSa}} \notag \\
				&= O\left( \frac{1}{\Error}\log\left( \frac{\MinSizeForSampleMax}{\ProbFail} \right) \right) \notag \\
				&= O\left( \frac{1}{\Error} \log\left( \frac{\PSep \Card}{\ApproxParam \Error \ProbFail} \right) \right) &\text{$\MinSizeForSampleMax = \frac{3 \PSep \Card}{\ApproxParam \Error / 2} + \Card - 1$ (\cref{line:PGS initialisation})} \notag \\
				&= O\left( \frac{1}{\Error}\log\left( \frac{\PSep \Card \log( \ApproxParam^{-1} ) }{\ApproxParam \Error} \right) \right)\,. &\text{$ \ProbFail = \left( \log_{1-\Error}\left(\frac{\ApproxParam}{3}\right) \right)^{-1} $ (\cref{line:PGS initialisation})} \notag
			\end{align} 
		\end{proof}
		
		To bound the overall adaptive complexity of \cref{alg:PGS}, there is initially 1 round of queries to assign each $\ElemGain_{\Elem} \gets \Funcf(\Elem)$ (\cref{line:PGS assign value loop,line:PGS assign value}). Then there are at most $ \log_{1-\Error}( \frac{\ApproxParam}{3} ) = O( \frac{\log( \ApproxParam^{-1} )}{\Error} ) $ sequential iterations of the \cref{line:PGS while} loop by \cref{claim:num PGS iters}, and each such iteration has adaptive complexity $O( \frac{1}{\Error}\log( \frac{\PSep \Card \log( \ApproxParam^{-1} ) }{\ApproxParam \Error} ) )$ by \cref{claim:PGS adaptive complexity iter}. Multiplying these bounds gives the required adaptive complexity. 
	\end{proof}
	
	\paragraph{Query Complexity of \PGSa{}.}
	\begin{lemma} \label{lemma:PGS query complexity}
		The expected query complexity of \PGSa{} (\cref{alg:PGS}) is $ O( \GroundSetSize + \frac{(1-\ApproxParam)\PSep \Card}{\Error^2} + \frac{\log(\ApproxParam^{-1})}{\Error^2} \log( \frac{\PSep \Card}{\Error} ) ) $.
	\end{lemma}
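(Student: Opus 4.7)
The plan is to decompose the expected query cost of \PGSa{} into three contributions tracked through the pseudocode of \cref{alg:PGS}, matching the three terms of the claimed bound. First, the pre-loop assignments on \cref{line:PGS assign value loop,line:PGS assign value} contribute exactly $\GroundSetSize$ oracle queries, accounting for the $\GroundSetSize$ term. Second, the uniform-at-random sampling branch on \cref{line:PGS T sample} requires no oracle queries, and whenever it executes it fills $\SetS_{\IterIndex}$ to cardinality exactly $\Card$, immediately terminating the \cref{line:PGS while} loop; hence sampling contributes no queries overall. Third, all remaining queries come from the invocations of \TBSa{} on \cref{line:PGS thresholding}, and the bulk of the argument will be devoted to bounding their expected total.

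For each \TBSa{} call, I would apply \cref{theorem:TBS results} with error parameter $\Error/3$ to obtain an expected query cost of $O(|\GoodElems_{\IterIndex}| + \frac{\log|\GoodElems_{\IterIndex}|}{\Error})$ per invocation. By \cref{claim:num good elems upper-bound}, whenever \TBSa{} is invoked we have $|\GoodElems_{\IterIndex}| < \MinSizeForSample_{\IterIndex} \leq \MinSizeForSampleMax = O(\PSep\Card/(\ApproxParam\Error))$, and by \cref{claim:num PGS iters} the \cref{line:PGS while} loop runs for at most $\SampleIter = O(\log(\ApproxParam^{-1})/\Error)$ iterations. Summing the $\log|\GoodElems_{\IterIndex}|/\Error$ term over all iterations then yields $O(\frac{\log(\ApproxParam^{-1})}{\Error^2}\log(\PSep\Card/\Error))$, which is the third term of the claimed bound, mirroring the analogous step in \cref{lemma:PGS adaptive complexity} but replacing the per-iteration adaptivity of \TBSa{} by its per-iteration query cost.

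The $|\GoodElems_{\IterIndex}|$ contribution is handled by a geometric-sum calculation. Plugging the explicit formula for $\MinSizeForSample_{\IterIndex}$ from \cref{line:PGS m update}, the dominant piece is $\sum_{\IterIndex=1}^{\SampleIter} \frac{\PSep\Card}{(1-\Error)^{\IterIndex}\Error/2}$, which has closed form $\frac{2\PSep\Card}{\Error^2}((1-\Error)^{-\SampleIter}-1)$. Applying the bound $(1-\Error)^{-\SampleIter} \leq 3/\ApproxParam$ that follows from the cutoff established in \cref{claim:num PGS iters} then yields the second term of the claimed bound, while the residual additive contribution $\sum_{\IterIndex}(\Card - |\SetS_{\IterIndex-1}| - 1) \leq \SampleIter\cdot\Card$ is absorbed into the other terms since $\Card \leq \MinSizeForSampleMax$.

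The main obstacle will be manipulating the $\ApproxParam$-dependent constants in this geometric-sum step carefully enough to extract the stated $(1-\ApproxParam)$ factor rather than producing a weaker $1/\ApproxParam$-type dependence; the remaining accounting is essentially routine. A minor bookkeeping point to watch is that $\SampleIter$ is an integer while $\log_{1-\Error}(\ApproxParam/3)$ need not be, but this only affects absolute constants and does not perturb the asymptotic form of any of the three terms.
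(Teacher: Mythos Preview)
Your decomposition and per-iteration accounting are exactly the paper's approach: it too proves a per-iteration bound of $O\bigl(\frac{\PSep\Card}{(1-\Error)^{\IterIndex}\Error} + \frac{1}{\Error}\log(\frac{\PSep\Card}{\Error})\bigr)$ via \cref{theorem:TBS results} and \cref{claim:num good elems upper-bound}, then sums over the $O(\log_{1-\Error}(\ApproxParam))$ iterations from \cref{claim:num PGS iters}, handling the first summand as a geometric series and the second by direct multiplication.

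Your instinct about the ``main obstacle'' is sharper than you may realize. The geometric sum you write down, $\frac{2\PSep\Card}{\Error^2}\bigl((1-\Error)^{-\SampleIter}-1\bigr)$ with $(1-\Error)^{-\SampleIter}\le 3/\ApproxParam$, genuinely yields $O\bigl(\frac{\PSep\Card}{\ApproxParam\Error^2}\bigr)$, not $O\bigl(\frac{(1-\ApproxParam)\PSep\Card}{\Error^2}\bigr)$, and there is no further manipulation that extracts a $(1-\ApproxParam)$ factor from this argument. The paper's own proof slips here: its geometric-series line records $\frac{1-(1-\Error)^{\lceil N\rceil-1}}{\Error}$ where the correct closed form of $\sum_{\IterIndex=1}^{N}(1-\Error)^{-\IterIndex}$ is $\frac{(1-\Error)^{-N}-1}{\Error}$, and it then passes from $1-\ApproxParam/3$ to $1-\ApproxParam$, which is not an upper bound either. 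So the $(1-\ApproxParam)$ dependence in the lemma statement is not actually established; what the argument delivers is $O\bigl(\GroundSetSize + \frac{\PSep\Card}{\ApproxParam\Error^2} + \frac{\log(\ApproxParam^{-1})}{\Error^2}\log(\frac{\PSep\Card}{\Error})\bigr)$. This discrepancy is immaterial for \cref{theorem:PGS results} and \cref{theorem:LSTPGS results}, where $\ApproxParam$ is a fixed constant and both $1-\ApproxParam$ and $1/\ApproxParam$ are $\Theta(1)$, but you should not expect to recover the stated $(1-\ApproxParam)$ factor by ``manipulating constants carefully'' --- it is not there.
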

	
	\begin{proof}
		First, we prove \cref{claim:PGS query complexity iter} below.
		
		\begin{claim}\label{claim:PGS query complexity iter}
			In \PGSa{} (\cref{alg:PGS}), the expected query complexity of the $\IterIndex$th iteration of the \cref{line:PGS while} loop is $ O( \frac{\PSep \Card}{(1-\Error)^{\IterIndex} \Error} + \frac{1}{\Error} \log( \frac{\PSep \Card}{\Error} ) ) $.
		\end{claim}
		
		\begin{proof}
			Each iteration $\IterIndex$ of the \cref{line:PGS while} loop uses $ O(|\GoodElems_{\IterIndex}| + \frac{\log(|\GoodElems_{\IterIndex}|)}{\Error} ) $ queries in expectation since running \TBSa{} (\cref{line:PGS thresholding}) uses this many queries in expectation by \cref{lemma:TBS query complexity}. Note that if the sampling step in \cref{line:PGS T sample} is run instead of \cref{line:PGS thresholding}, it would require no additional queries. By simplifying this bound, we prove \cref{claim:PGS query complexity iter} below.
			\begin{align}
				&\mathbb{E}[\text{Query complexity of $ \IterIndex $th iteration of \textsc{\PGSa}}] \notag \\
				&= O\left(|\GoodElems_{\IterIndex}| + \frac{\log(|\GoodElems_{\IterIndex}|)}{\Error} \right) \notag \\
				&= O\left(\MinSizeForSample_{\IterIndex} + \frac{\log(\MinSizeForSampleMax)}{\Error} \right) &\text{\cref{claim:num good elems upper-bound}} \notag \\
				&= O\left(\frac{\PSep \Card}{(1-\Error)^{\IterIndex} \Error} + \frac{1}{\Error}\log\left( \frac{\PSep \Card}{\Error} \right) \right)\,. &\text{\cref{line:PGS initialisation,line:PGS m update}} \notag
			\end{align} 
		\end{proof}
		
		To bound the overall expected query complexity of \PGSa{}, there are initially $\GroundSetSize$ queries to assign each $\ElemGain_{\Elem} \gets \Funcf(\Elem)$ (\cref{line:PGS assign value loop,line:PGS assign value}). Then we sum the expected query complexities over all iterations $\IterIndex$ of the \cref{line:PGS while} loop. Each iteration $\IterIndex$ has expected query complexity $O( \frac{\PSep \Card}{(1-\Error)^{\IterIndex} \Error} + \frac{1}{\Error}\log( \frac{\PSep \Card}{\Error} ) )$ by \cref{claim:PGS query complexity iter}. Further, there are at most $ \log_{1-\Error}( \frac{\ApproxParam}{3} ) $ iterations $\IterIndex$ by \cref{claim:num PGS iters}. Thus, we prove \cref{lemma:PGS query complexity} below, where \cref{eqn:query geometric series} follows from the geometric series.
		\begin{align}
			&\mathbb{E}[\text{Query complexity of \textsc{\PGSa}}] \notag \\
			&\leq \GroundSetSize + \sum_{\IterIndex=1}^{ \left\lceil \log_{1-\Error}\left( \frac{\ApproxParam}{3} \right) \right\rceil } O\left( \frac{\PSep \Card}{(1-\Error)^{\IterIndex} \Error} + \frac{1}{\Error}\log\left( \frac{\PSep \Card}{\Error} \right) \right) \notag \\
			&\leq O\left( \GroundSetSize + \frac{\PSep \Card}{\Error} \left( \frac{1 - (1-\Error)^{  \left\lceil \log_{1-\Error}\left( \frac{\ApproxParam}{3} \right) \right\rceil - 1 }}{1 - (1-\Error)} \right) + \frac{\log_{1-\Error}\left( \ApproxParam \right)}{\Error}\log\left( \frac{\PSep \Card}{\Error} \right) \right) \label{eqn:query geometric series} \\ 
			&\leq O\left( \GroundSetSize + \frac{\PSep \Card}{\Error} \left( \frac{1 - (1-\Error)^{\log_{1-\Error}\left( \frac{\ApproxParam}{3} \right)}}{\Error} \right) + \frac{\log(\ApproxParam^{-1})}{\Error^2} \log\left( \frac{\PSep \Card}{\Error} \right) \right) \notag \\
			&= O\left( \GroundSetSize + \frac{(1-\ApproxParam)\PSep \Card}{\Error^2} + \frac{\log(\ApproxParam^{-1})}{\Error^2} \log\left( \frac{\PSep \Card}{\Error} \right) \right)\,. \notag
		\end{align} 
	\end{proof}
	
	\paragraph{Success Probability of \PGSa{}.}
	\begin{lemma} \label{lemma:PGS success probability}
		\PGSa{} (\cref{alg:PGS}) terminates successfully with probability $1-\frac{\ApproxParam\Error}{6 \PSep \Card}$.
	\end{lemma}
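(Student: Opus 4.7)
The plan is to identify the only stochastic source that can cause \PGSa{} to fail, bound how often that source can fire, and conclude by a union bound. Inspecting \cref{alg:PGS}, the only randomised subroutine whose failure could void the guarantees asserted in \cref{theorem:PGS results} is \TBSa{} in \cref{line:PGS thresholding}; by \cref{theorem:TBS results}, each invocation of \TBSa{} with failure-probability parameter $\ProbFail$ achieves its stated guarantees with probability at least $1 - \ProbFail$. The uniform-at-random sampling in \cref{line:PGS T sample} contributes no additional ``failure'' event in the sense relevant here, since its randomness is already absorbed into the expected marginal-gain bound of \cref{lemma:expected gain of sample} rather than a high-probability event. All remaining steps of \PGSa{}, in particular the pre-loop oracle calls in \cref{line:PGS assign value}, the threshold updates in \cref{line:PGS tau update}, the filter in \cref{line:PGS G update}, and the solution update in \cref{line:PGS solution update}, are deterministic and therefore cannot contribute failure.

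Next, I would use \cref{claim:num PGS iters} to bound the number of iterations of the \cref{line:PGS while} loop, and hence the number of invocations of \TBSa{}, by $\log_{1-\Error}(\ApproxParam/3)$. Since \TBSa{} is called at most once per iteration (and only in the ``else'' branch of \cref{line:PGS if then sample}), the total number of potentially-failing random events is at most this quantity.

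Finally, I would apply a union bound over these at most $\log_{1-\Error}(\ApproxParam/3)$ \TBSa{} calls to conclude that the probability that at least one of them fails is at most $\log_{1-\Error}(\ApproxParam/3) \cdot \ProbFail$. Substituting the value of $\ProbFail$ fixed in \cref{line:PGS initialisation}, passed to \TBSa{} as its failure-probability argument in \cref{line:PGS thresholding}, this product simplifies to the required $\frac{\ApproxParam\Error}{6\PSep\Card}$ bound, and the claim follows.

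The main obstacle is essentially bookkeeping rather than anything conceptual: I need to verify that $\ProbFail$ has been chosen at initialisation precisely so that its product with the iteration bound from \cref{claim:num PGS iters} evaluates to $\frac{\ApproxParam\Error}{6\PSep\Card}$, and to confirm that no step I have classified as ``deterministic'' or ``already accounted for'' in fact hides an additional failure mode that the union bound would need to cover. Once both are checked, the success probability bound is immediate from the union bound.
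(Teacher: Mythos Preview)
Your overall structure is right—identify \TBSa{} calls as the only failure source, bound their number by \cref{claim:num PGS iters}, and take a union bound—but the per-call failure probability you use is wrong, and with it the final arithmetic collapses. You assert that each \TBSa{} invocation fails with probability at most $\ProbFail$, but \cref{theorem:TBS results} (and \cref{lemma:TBS success probability}) actually give a failure probability of $\frac{\ProbFail}{\MinSizeForSample}$, where $\MinSizeForSample$ is the third argument passed to \TBSa{}. With your reading, the union bound yields $\log_{1-\Error}(\ApproxParam/3)\cdot\ProbFail$, and since \cref{line:PGS initialisation} sets $\ProbFail = \bigl(\log_{1-\Error}(\ApproxParam/3)\bigr)^{-1}$, this product equals $1$, not $\frac{\ApproxParam\Error}{6\PSep\Card}$.

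The missing ingredient is precisely the $\MinSizeForSample$ in the denominator. In \cref{line:PGS thresholding} the call passes $\MinSizeForSample = \min\{\MinSizeForSampleMax,\GroundSetSize\}$ with $\MinSizeForSampleMax = \frac{3\PSep\Card}{\ApproxParam\Error/2} + \Card - 1$, and \cref{claim:num good elems upper-bound} ensures the hypothesis $|\GoodElems_i|\leq\MinSizeForSample$ of \cref{theorem:TBS results} is met. The union bound then gives
\[
\log_{1-\Error}\!\left(\tfrac{\ApproxParam}{3}\right)\cdot\frac{\ProbFail}{\MinSizeForSampleMax}
\;=\;\frac{1}{\MinSizeForSampleMax}
\;\leq\;\frac{\ApproxParam\Error/2}{3\PSep\Card}
\;=\;\frac{\ApproxParam\Error}{6\PSep\Card},
\]
which is the paper's computation. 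So the ``bookkeeping obstacle'' you flagged is real but not where you placed it: you must track $\MinSizeForSampleMax$ through the bound, not just $\ProbFail$.
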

	
	\begin{proof}
		We define a `\TBSFailure{}' as the event where \TBSa{} (\cref{alg:TBS}) returns \textit{failure}. Further, we define a `\PGSFailure{}' as the event where, over an execution of \PGSa{}, a call to \TBSa{} in \cref{line:PGS thresholding} results in a \TBSFailure{}. Each call to \TBSa{} passes $\GoodElems_{\IterIndex}$, $\MinSizeForSampleMax$, and $\ProbFail$, and $|\GoodElems_{\IterIndex}| < \MinSizeForSampleMax$ always holds by \cref{claim:num good elems upper-bound}. Hence, $\Pr[\textnormal{\TBSFailure}] \leq \frac{\ProbFail}{\MinSizeForSampleMax}$ holds by \cref{lemma:TBS success probability}.
		
		The call to \TBSa{} could be made in every iteration of the \cref{line:PGS while} loop, and there are at most $\log_{1-\Error}( \frac{\ApproxParam}{3} ) $ such iterations by \cref{claim:num PGS iters}. Thus, we bound $\Pr[\textnormal{\PGSFailure}]$ below.
		\begin{align}
			\Pr[\textnormal{\PGSFailure}] &\leq \log_{1-\Error}\left( \frac{\ApproxParam}{3} \right) \Pr[\textnormal{\TBSFailure}] \notag \\
			&\leq \log_{1-\Error}\left( \frac{\ApproxParam}{3} \right) \frac{\ProbFail}{\MinSizeForSampleMax} &\text{\cref{lemma:TBS success probability}} \notag \\
			&\leq \log_{1-\Error}\left( \frac{\ApproxParam}{3} \right) \left(\log_{1-\Error}\left( \frac{\ApproxParam}{e} \right)\right)^{-1} \frac{\ApproxParam \Error/2}{3 \PSep \Card} &\text{\cref{line:PGS initialisation}} \notag \\
			&= \frac{\ApproxParam \Error}{6 \PSep \Card}\,. \notag
		\end{align} 
	\end{proof}
	
	\paragraph{Performance Guarantees of \LSTPGSa{} for General \SMCC{}.}
	Here, we give the performance guarantees of \LSTPGSa{} when it behaves as an algorithm for general \SMCC{}, as stated in \cref{theorem:LSTPGS general results}. This occurs when one or more of its input parameters are too large (or when $\Error$ is too small) for \LSTPGSa{} to gain any performance advantage from the $\PSep$-superseparability of $\Funcf$; the theorem gives precise conditions in terms of the parameters for this to occur. In this case, \LSTPGSa{} behaves the same way as \LSPGBa{} \cite{chen2021best} except for its use of the subroutine \TBS{} instead of \TS{}.
	
	\begin{theorem} \label{theorem:LSTPGS general results}
		If $\GroundSetSize \leq \lceil \frac{\PSep \Card}{1-\ApproxParamA} + \Card \rceil$ and $\GroundSetSize \leq \frac{\PSep \Card}{(1-\Error) \Error / 2} $, then \LSTPGSa{} behaves as an algorithm for general \SMCC{} and, with probability $1 - \frac{2}{\GroundSetSize}$, achieves:
		\begin{itemize}
			\item a solution $\SetS$ satisfying $|\SetS| \leq \Card$ and $\Funcf(\SetS) \geq ( 1-\frac{1}{e}-\Error ) \OptValue$,
			\item an adaptive complexity of $O( \frac{1}{\Error^2} \log(\frac{\GroundSetSize}{\Error}) )$, and
			\item an expected query complexity of $O( \frac{\GroundSetSize}{\Error} + \frac{\log \GroundSetSize}{\Error^2} )$.
		\end{itemize}
	\end{theorem}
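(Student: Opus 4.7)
The plan is to show that under the two stated conditions on $\GroundSetSize$, \LSTPGSa{} effectively degenerates into an algorithm that never exploits $\PSep$-superseparability through the random-sampling shortcut, so its guarantees reduce to the ones enjoyed by \LSPGBa{} (with \TBSa{} standing in for \TSa{}). First I would observe that the condition $\GroundSetSize \leq \lceil \frac{\PSep\Card}{1-\ApproxParamA} + \Card \rceil$ forces $\TopElem = \GroundSet$ in \cref{line:LSTPGS assign TopElem}, so the check in \cref{alg:LSTPGS} picks the default value of $\ApproxParam$ in \cref{line:LSTPGS assign def ApproxParam}. In particular, since $\ErrorB$ is a constant bounded away from $\tfrac{1}{2}$, $\ApproxParam$ is a constant of the form $(4 + O(\ErrorB))^{-1}$, and \LS{} is invoked on the full ground set, inheriting the guarantees of Chen et al.~\cite{chen2021best}: $\OptLB \leq \OptValue \leq \OptLB/\ApproxParam$ with probability at least $1-1/\GroundSetSize$, using $O(\log \GroundSetSize)$ adaptive rounds and $O(\GroundSetSize)$ expected queries.

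Next I would use the second condition $\GroundSetSize \leq \frac{\PSep\Card}{(1-\Error)\Error/2}$ to show that within \PGSa{} (\cref{alg:PGS}) the sampling branch in \cref{line:PGS T sample} is never triggered. Indeed, for any iteration $\IterIndex \geq 1$, $\MinSizeForSample_{\IterIndex} \geq \frac{\PSep\Card}{(1-\Error)\Error/2} \geq \GroundSetSize \geq |\GoodElems_{\IterIndex}|$, so the \cref{line:PGS if then sample} if-condition always fails and \TBSa{} is always used. Because of this, the approximation-factor analysis in \cref{lemma:PGS approx factor} reduces to the \cref{eqn:marginal gain Ti}/\cref{eqn:opt Si upper-bound} chain without ever invoking \cref{lemma:expected gain of sample}, yielding the deterministic guarantee $\Funcf(\SetS) \geq (1-\tfrac{1}{e}-\Error)\OptValue$ (conditional on no \TBSa{} failure), rather than merely an expectation bound.

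For the adaptive and query complexities I would plug $\ApproxParam = \Theta(1)$ into \cref{lemma:PGS adaptive complexity,lemma:PGS query complexity} and use $\PSep\Card \leq \GroundSetSize \cdot \tfrac{(1-\Error)\Error}{2}$, so that $\log(\PSep\Card/\Error) = O(\log(\GroundSetSize/\Error))$ and $\PSep\Card/\Error^2 = O(\GroundSetSize/\Error)$. Combined with \LS{}'s $O(\tfrac{1}{\ErrorB^3}\log\GroundSetSize) = O(\log\GroundSetSize)$ adaptive rounds and $O(\GroundSetSize)$ expected queries (for constant $\ErrorB$), this yields the claimed $O(\tfrac{1}{\Error^2}\log(\GroundSetSize/\Error))$ adaptivity and $O(\tfrac{\GroundSetSize}{\Error} + \tfrac{\log\GroundSetSize}{\Error^2})$ expected queries after the two procedures are composed sequentially.

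Finally, for the success probability I would union-bound the \LS{} failure event (probability $\leq 1/\GroundSetSize$) with the \PGSa{} failure event. By \cref{lemma:PGS success probability} the latter has probability at most $\frac{\ApproxParam\Error}{6\PSep\Card}$, and the lower bound $\PSep\Card \geq \frac{(1-\Error)\Error}{2}\GroundSetSize$ from the second hypothesis converts this into $O(1/\GroundSetSize)$; choosing the constants so that it is bounded by $1/\GroundSetSize$ gives the overall $1 - 2/\GroundSetSize$ bound. The main obstacle in the proof is keeping the approximation deterministic rather than merely in expectation, which hinges entirely on the second hypothesis precluding the sampling branch; once that structural observation is in place, the rest is a bookkeeping exercise combining the complexity lemmas for \LS{}, \PGSa{}, and \TBSa{}.
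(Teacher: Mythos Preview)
Your overall structure matches the paper's: show that the first hypothesis makes $\TopElem=\GroundSet$ (so \LS{} runs on all of $\GroundSet$ and $\ApproxParam$ defaults to a constant), show that the second hypothesis disables the sampling branch in \PGSa{}, and then read off the guarantees. That part is fine, and your observation that the approximation becomes deterministic once \cref{line:PGS T sample} is never taken is correct.

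The gap is in your complexity argument. You write ``use $\PSep\Card \leq \GroundSetSize \cdot \tfrac{(1-\Error)\Error}{2}$'', but the second hypothesis says $\GroundSetSize \leq \frac{\PSep\Card}{(1-\Error)\Error/2}$, which rearranges to $\PSep\Card \geq \GroundSetSize\cdot\tfrac{(1-\Error)\Error}{2}$, the \emph{opposite} direction. Neither hypothesis gives you an upper bound on $\PSep\Card$ in terms of $\GroundSetSize$, so you cannot simply plug into \cref{lemma:PGS adaptive complexity,lemma:PGS query complexity} and simplify $\PSep\Card/\Error^2$ to $O(\GroundSetSize/\Error)$ or $\log(\PSep\Card/\Error)$ to $O(\log(\GroundSetSize/\Error))$. (For the adaptive bound you could rescue things via the crude $\PSep\Card\leq\GroundSetSize^2$, but for the query bound that route gives only $O(\GroundSetSize^2/\Error^2)$.)

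The paper avoids this by arguing directly at the level of each \TBSa{} call rather than through the $\PSep\Card$-parameterised lemmas: in the degenerate regime, \TBSa{} is invoked on $\GoodElems_\IterIndex$ with $|\GoodElems_\IterIndex|\leq\GroundSetSize$ and with size parameter $\min\{\MinSizeForSampleMax,\GroundSetSize\}\leq\GroundSetSize$, so by \cref{theorem:TBS results} each call costs $O(\GroundSetSize+\tfrac{\log\GroundSetSize}{\Error})$ expected queries and $O(\tfrac{1}{\Error}\log(\tfrac{\GroundSetSize}{\ProbFail}))$ adaptive rounds. Multiplying by the $O(1/\Error)$ iterations of the while-loop then gives the stated bounds. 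Your success-probability paragraph, by contrast, is fine precisely because there you use the inequality in the correct direction ($\PSep\Card \geq \tfrac{(1-\Error)\Error}{2}\GroundSetSize$).
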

	
	\begin{proof}
		If we assume the condition $\GroundSetSize \leq \lceil \frac{\PSep \Card}{1-\ApproxParamA} + \Card \rceil$ holds, then \LSTPGSa{} will run \LS{} on all of $\GroundSet$ as a pre-processing step.
		
		Now assume the condition $\GroundSetSize \leq \frac{\PSep \Card}{(1-\Error) \Error / 2} $ also holds. When \LSTPGSa{} runs \PGSa{}, \cref{line:PGS thresholding} of \PGSa{} will always be executed since, for every iteration $\IterIndex$, $|\GoodElems_{\IterIndex}| \leq \GroundSetSize \leq \frac{\PSep \Card}{(1-\Error) \Error / 2} < \MinSizeForSample_{\IterIndex}$. This execution closely follows that of \PGBa{} with the only significant difference being that \cref{line:PGS thresholding} of \PGSa{} calls \TBSa{} instead of \TSa{}. This is only relevant to the query complexity.
		
		Since the query complexity of \TBSa{} is $O( \GroundSetSize + \frac{\log \GroundSetSize}{\Error} )$ and there are $O(\log_{1-\Error}(\ApproxParam) ) = O( \frac{1}{\Error} )$ iterations of the \cref{line:PGS while} loop in \PGSa{} (\cref{claim:num PGS iters}), the overall expected query complexity of \PGSa{} is $O( \frac{\GroundSetSize}{\Error} + \frac{\log \GroundSetSize}{\Error^2} )$. The remaining guarantees of \PGSa{} are the same as \PGBa{}. Combining these with the guarantees of running \LS{} on $\GroundSet$ proves \cref{theorem:LSTPGS general results}. 
	\end{proof}
	
	\section{Parallel Thresholding Procedure for \SMCC{}}\label{sec:thresholdblockseq}
	In this section, we propose \TBS{} (\TBSa), with pseudocode given in \cref{alg:TBS}. This is the subroutine used by \PGS{} in \cref{line:PGS thresholding}. We formally state the performance guarantees of \TBSa{} in \cref{theorem:TBS results} below and prove them in \cref{sec:TBS analysis}, where we also state related Chernoff Bounds and probability lemmas.
	
	\begin{theorem} \label{theorem:TBS results}
		Suppose \TBSa{} (\cref{alg:TBS}) is run such that $\Funcg$ is monotone submodular, $\GroundSubsetSize = |\GoodElems| \leq \MinSizeForSample$, $0 < \ErrorA < 1$, and $0 < \ProbFail < 1$. Then, with probability $1-\frac{\ProbFail}{\MinSizeForSample}$, \TBSa{} achieves:
		\begin{itemize}
			\item an adaptive complexity of $O ( \frac{1}{\ErrorA} \log( \frac{\MinSizeForSample}{\ProbFail} ) ) $,
			\item an expected query complexity of $O( \GroundSubsetSize + \frac{\log \GroundSubsetSize}{\ErrorA} )$,
			\item an output set $\SetT$ satisfying $|\SetT| \leq \CardA$ and $\Funcg(\SetT \mid \varnothing) \geq \frac{1-\ErrorA}{1+\ErrorA} \Threshold |\SetT|$, and
			\item in case $|\SetT| < \CardA$, for all $ \Elem \in \GoodElems \colon \Funcg(\Elem \mid \SetT) < \Threshold $.
		\end{itemize}
	\end{theorem}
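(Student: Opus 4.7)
}
The plan is to mirror the established analysis of \TS{} (Theorem 2 of \cite{chen2021best}), while arguing that the block-sampling filtering step faithfully simulates the effect of a full filtering pass over $\GoodElems$ with high probability. I would organize the proof around four independent claims: (i)~approximate threshold guarantee and cardinality of $\SetT$; (ii)~the ``no-element-above-threshold'' condition when $|\SetT|<\CardA$; (iii)~adaptive complexity; and (iv)~expected query complexity. A union bound over the failure events in the first two claims then yields the overall $1-\ProbFail/\MinSizeForSample$ success probability.

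For (i), I would follow the adaptive-sequencing template: in each outer iteration, \TBSa{} builds a random permutation of the surviving elements, identifies a prefix whose average marginal gain is approximately $\Threshold$, and appends it to $\SetT$. As in \cite{chen2021best,breuer2020fast}, choosing the prefix length via a prefix test of the form ``each prefix element has marginal gain $\geq (1-\ErrorA)\Threshold$ in expectation'' gives the per-append guarantee $\Funcg(t\mid\SetT_{\text{prev}})\geq \frac{1-\ErrorA}{1+\ErrorA}\Threshold$, and the corresponding telescoping over the appended elements yields the total lower bound $\Funcg(\SetT\mid\varnothing)\geq \frac{1-\ErrorA}{1+\ErrorA}\Threshold|\SetT|$. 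The cardinality bound $|\SetT|\leq\CardA$ is enforced by truncating each prefix so the total does not exceed $\CardA$.

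For (ii), I would show that when the outer loop terminates with $|\SetT|<\CardA$, every surviving element in $\GoodElems$ has been examined and failed the threshold test. The novelty is that filtering is performed on a random block~$\Block$ rather than on all remaining elements, so I would argue by a Chernoff / sampling-without-replacement bound that if a constant fraction of $\GoodElems$ still had marginal gain $\geq \Threshold$, then $\Block$ would contain such an element with probability at least $1-\ProbFail/\MinSizeForSample^{O(1)}$ per iteration; taking a union bound over the $O(\frac{1}{\ErrorA}\log\frac{\MinSizeForSample}{\ProbFail})$ iterations gives the required global failure probability. Hence, upon termination, the fraction of ``large-gain'' survivors is below this constant, and the final explicit filtering pass (performed once, at low query cost) removes the last stragglers, ensuring $\Funcg(\Elem\mid\SetT)<\Threshold$ for all $\Elem\in\GoodElems$.

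For (iii), each outer iteration contributes a constant number of adaptive rounds (one for the permutation, one for the prefix marginal-gain queries, one for the block filtering), and the number of outer iterations is bounded by $O(\frac{1}{\ErrorA}\log\frac{\MinSizeForSample}{\ProbFail})$, because each successful iteration either fills a $(1-\ErrorA)$-fraction of the remaining capacity or removes a constant fraction of $\GoodElems$. For (iv), the key improvement is that filtering queries are made only on blocks of size $O(\frac{1}{\ErrorA}\log\frac{\MinSizeForSample}{\ProbFail})$ rather than on $|\GoodElems|$, so the filtering contributes $O(\frac{\log\GroundSubsetSize}{\ErrorA})$ in expectation, while the $O(\GroundSubsetSize)$ term comes from the single explicit full scan (for example, the initial read of $\GoodElems$ and at most one final verification pass). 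I expect the main obstacle to be step~(ii): carefully arguing that block sampling preserves the invariant ``few large-gain survivors remain'' throughout all iterations, so that the approximate filtering never lets a high-gain element persist unnoticed, while simultaneously keeping the expected query count down to $O(\GroundSubsetSize+\frac{\log\GroundSubsetSize}{\ErrorA})$. Getting the constants in the block size to simultaneously satisfy the Chernoff tail bound, the union bound over iterations, and the final query budget will require the most care.
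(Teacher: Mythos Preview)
Your proposal reflects a misreading of how \TBSa{} actually works, and this propagates into genuine gaps in parts (ii), (iii), and (iv).

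\textbf{Algorithm structure.} \TBSa{} has a \emph{nested} loop: there are $\NumIters=O(\log(\MinSizeForSample/\ProbFail))$ \emph{outer} iterations, and each outer iteration begins with a \emph{full} filtering pass over the surviving set $\RemElems_{\OuterIterIndex-1}$ (\cref{line:TBS filtering step}) and then runs $\NumBlocks=O(1/\ErrorA)$ \emph{inner} iterations. In each inner iteration a block $\Block$ of size $\lceil \frac{\ErrorA}{4(1-4\ReduProp)}|\RemElems_{\OuterIterIndex}| \rceil = O(\ErrorA\,|\RemElems_{\OuterIterIndex}|)$ is sampled. Your proposal instead treats the whole thing as a single flat loop of $O(\frac{1}{\ErrorA}\log\frac{\MinSizeForSample}{\ProbFail})$ iterations with block size $O(\frac{1}{\ErrorA}\log\frac{\MinSizeForSample}{\ProbFail})$, and with at most one full scan. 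Both the block size and the scan count are wrong.

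\textbf{Claim (ii).} Because a full filter is performed at the start of every outer iteration, this claim is trivial in the paper (\cref{lemma:TBS rem elements}): if the algorithm returns with $|\SetT|<\CardA$, it did so via the check $|\RemElems_{\OuterIterIndex}|=0$, which deterministically means every $\Elem\in\GoodElems$ has already failed the threshold. No sampling/Chernoff argument is needed here, and your proposed probabilistic argument is both unnecessary and, as stated, would not establish the conclusion for \emph{all} elements.

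\textbf{Success probability.} The $1-\ProbFail/\MinSizeForSample$ bound is \emph{not} a union bound over claims (i) and (ii); those hold deterministically whenever \TBSa{} does not return ``failure''. The probabilistic content is entirely in showing that $\NumIters$ outer iterations suffice to drive $|\RemElems_{\OuterIterIndex}|$ to $0$. The paper does this via a hierarchy of events (\sampleFailure, \prefixFailure{} $\Rightarrow$ \innerFailure{} $\Rightarrow$ \outerFailure), showing that each outer iteration shrinks $|\RemElems_{\OuterIterIndex}|$ by a constant factor $\ReduProp$ with probability $\ge 1/2$, and then a Chernoff bound over the outer iterations gives the $\ProbFail/\MinSizeForSample$ tail.

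\textbf{Query complexity.} The mechanism is different from what you describe. Each outer iteration costs $O(|\RemElems_{\OuterIterIndex-1}|)+\NumBlocks\cdot O(|\Block|)=O(|\RemElems_{\OuterIterIndex-1}|+\tfrac{1}{\ErrorA})$, since the $\ErrorA$ in $|\Block|$ cancels the $1/\ErrorA$ in $\NumBlocks$. The $O(\GroundSubsetSize)$ term arises because the $|\RemElems_{\OuterIterIndex}|$'s decrease geometrically (in expectation), so the full scans sum to $O(\GroundSubsetSize)$; the $O(\frac{\log\GroundSubsetSize}{\ErrorA})$ term comes from the additive $O(1/\ErrorA)$ per outer iteration over $O(\log\GroundSubsetSize)$ effective iterations. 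Your accounting, with a single scan and fixed-size blocks, would not yield these bounds.
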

	
	Given a value oracle $ \Funcg $, an input set $\GoodElems$, a value $\MinSizeForSample \geq |\GoodElems|$, an error term $ \ErrorA $, and a probability term $ \ProbFail $, the purpose of \TBSa{} is to return a set $ \SetT \subseteq \GoodElems $ satisfying $ \Funcg(\SetT \mid \varnothing) \geq \frac{1-\ErrorA}{1+\ErrorA}\Threshold |\SetT| $ in $ O ( \frac{1}{\ErrorA} \log(\frac{\MinSizeForSample}{\ProbFail}) ) $ adaptive rounds. The task of finding a set whose average marginal gain is above some threshold is common in many algorithms for submodular maximisation; in fact, \textsc{\TBSa} is an improved version of the \TS{} (\TSa{}) procedure by Chen et al. \cite{chen2021best} for performing this task and can serve to replace it. The main feature of \TBSa{} is its query complexity of $O( \GroundSubsetSize + \frac{\log \GroundSubsetSize}{\ErrorA} )$, which has an improved dependence on $\ErrorA^{-1}$ over that of \TSa{}.
	
	For the purpose of comparison, we briefly describe \TSa{} and point out its main inefficiency, which leads to its $ O( \frac{\GroundSubsetSize}{\ErrorA} ) $ query complexity. After this, we describe the steps in \TSa{}, and finally explain how these steps work to achieve its improved query complexity of $O( \GroundSubsetSize + \frac{\log \GroundSubsetSize}{\ErrorA} )$.
	
	\paragraph{Description of \TS{}.} \TSa{} works by updating a solution $\SetT$ over a loop. Each loop iteration uses an improved \emph{adaptive sequencing} technique to update $\SetT$. Specifically, each loop iteration (1) queries $ \Funcg $ over all previously remaining elements $\Elem$ to filter out those $\Elem$ with $\Funcg(\Elem \mid \SetT) < \Threshold$, (2) uniformly-at-random permutes the remaining elements, and then (3) adds an appropriate prefix of the remaining elements to $\SetT$. By adding this prefix, at least $ \frac{\ErrorA}{2} $ proportion of the remaining elements now have $\Funcg(\Elem \mid \SetT) < \Threshold$ (with probability $\geq \frac{1}{2}$) and are, thus, filtered out in the next iteration.
	
	\paragraph{Query Complexity of \TS{}.} The main inefficiency in \TSa{} is due to performing filtering queries over \emph{all} remaining elements when only~$\frac{\ErrorA}{2}$ of these elements are likely to be filtered out; in other words, $1-\frac{\ErrorA}{2}$ proportion of elements that \TSa{} queries will \emph{not} be filtered out and, thus, will appear in the next iteration to be queried again. So over all iterations, the expected query complexity due to filtering steps is essentially $O\left( \sum_{\SampleIndex=0}^{\infty} \GroundSubsetSize \left(1-\frac{\ErrorA}{2}\right)^\SampleIndex \right) = O\left(\frac{\GroundSubsetSize}{\ErrorA}\right)$.
	
	\paragraph{Description of \TBS{}.}
	\TBSa{} works by updating a solution $\SetT_{\OuterIterIndex, \InnerIterIndex}$ over an \emph{outer} loop (\cref{line:TBS outer iterations}) that contains a nested \emph{inner} loop (\cref{line:TBS inner iterations}). Each outer iteration $\OuterIterIndex$ updates the set of remaining elements $\RemElemsDefer_{\OuterIterIndex}$ by filtering out those $\Elem \in \RemElems_{\OuterIterIndex-1}$ with $\Funcg(\Elem \mid \SetT_{\OuterIterIndex-1, \InnerIterIndex}) < \Threshold$ (\cref{line:TBS filtering step}). Each inner iteration $\InnerIterIndex$ uniformly-at-random samples a ``block'' $\Block$ of size $O( \lceil \ErrorA |\RemElems_{\IterIndex}| \rceil )$ from $\RemElems_{\OuterIterIndex}$ (\cref{line:TBS sample block}), and filters out those $\Elem\in\Block$ with $\Funcg(\Elem \mid \SetT_{\OuterIterIndex, \InnerIterIndex-1}) < \Threshold$ to give $\BlockFiltered$ (\cref{line:TBS block filtered}); that is, $\BlockFiltered$ is obtained by rejection sampling. Then the inner iteration adds an appropriate prefix $\Prefix_{\PrefixIndexBest} \subseteq \BlockFiltered$ to $\SetT_{\OuterIterIndex, \InnerIterIndex-1}$, giving $\SetT_{\OuterIterIndex, \InnerIterIndex}$ (\crefrange{line:TBS filtered block permutation}{line:TBS T update}).
	
	\paragraph{Achieving the Query Complexity of \TBS{}.}
	At a high level, \TBSa{} uses the same adaptive sequencing technique as \TSa{}, but improves the query complexity's dependence on $ \ErrorA^{-1} $ essentially because each \emph{outer} iteration (which performs ``filtering'' queries over all remaining $\Elem$) is only executed when a \emph{constant} $\ReduProp$ proportion of $\Elem$ are likely to be filtered out, i.e., satisfy $\Funcg(\Elem \mid \SetT_{\OuterIterIndex-1, \InnerIterIndex}) < \Threshold$. The fact that $\ReduProp$ proportion of $\Elem$ are likely to satisfy this is achieved by the \emph{inner} loop.
	
	Below, we give a simplified explanation of why \TBSa{} has an expected query complexity of only $ O( \GroundSubsetSize + \frac{\log \GroundSubsetSize}{\ErrorA} ) $, with details in \cref{sec:TBS analysis}. Note that \TBSa{} performs $\NumIters \in O( \log(\frac{\MinSizeForSample}{\ProbFail}) )$ outer iterations and $\NumBlocks \in O( \frac{1}{\ErrorA} )$ inner iterations (\cref{line:TBS initialisation}), the latter being important to our explanation.
	\begin{itemize}
		\item In each \emph{inner} iteration $\InnerIterIndex$, adding the prefix $\Prefix_{\PrefixIndexBest}$ causes $ \geq \frac{\ErrorA}{4} $ proportion of $ \Elem \in \RemElems_{\OuterIterIndex} $ with $ \Funcg(\Elem \mid \SetT_{\OuterIterIndex, \InnerIterIndex-1}) \geq \Threshold $ to have $ \Funcg(\Elem \mid \SetT_{\OuterIterIndex, \InnerIterIndex}) < \Threshold $ (with probability $ \geq \frac{1}{2} $). Further, each \emph{inner} iteration uses $O(|\Block|) = O( \lceil \ErrorA |\RemElems_{\OuterIterIndex}| \rceil ) $ queries since queries are only made on $ \Elem \in \Block $ or on prefixes of $\Block^*$, the filtered subset of $\Block$.
		\item In each \emph{outer} iteration $\IterIndex$, with probability $ \geq \frac{1}{2} $, $O( \frac{1}{\ErrorA} )$ inner iterations will successfully cause $ \geq \frac{\ErrorA}{4} $ proportion of $ \Elem \in \RemElems_{\OuterIterIndex} $ with $ \Funcg(\Elem \mid \SetT_{\OuterIterIndex, \InnerIterIndex-1}) \geq \Threshold $ to have $ \Funcg(\Elem \mid \SetT_{\OuterIterIndex, \InnerIterIndex}) < \Threshold $. Thus, by the start of the next \emph{outer} iteration $\IterIndex+1$, at least $ 1 - ( 1- \frac{\ErrorA}{4} )^{O(\frac{1}{\ErrorA} )} $ proportion of $ \Elem \in \RemElems_{\OuterIterIndex} $ has $ \Funcg(\Elem \mid \SetT_{\OuterIterIndex, \InnerIterIndex}) < \Threshold $ (with probability $ \geq \frac{1}{2} $). This proportion is at least a constant $ \ReduProp $ for all $ \ErrorA > 0 $, so this next \emph{outer} iteration will filter out a constant $\ReduProp$ proportion of $\Elem \in \RemElems_{\OuterIterIndex}$.
		\item Furthermore, over a single \emph{outer} iteration $\IterIndex$, only $O(\NumBlocks \, |\Block| )$ queries are made, which evaluates to $ O(\frac{1}{\ErrorA} \lceil \ErrorA |\RemElems_{\OuterIterIndex}| \rceil ) = O( |\RemElems_{\OuterIterIndex}| + \frac{1}{\ErrorA} ) $ queries.
		
		\item Initially $ |\RemElems_0| = \GroundSubsetSize $. So due to the reduction of constant $\ReduProp$ proportion of $|\RemElems_{\OuterIterIndex}|$ in each outer iteration, the overall expected query complexity of \textsc{\TBSa} is essentially $O\left( \sum_{\SampleIndex=0}^{\infty} \GroundSubsetSize \left(1-\ReduProp\right)^\SampleIndex + \frac{1}{\ErrorA} \log_{\frac{1}{1-\ReduProp}}(\GroundSubsetSize) \right) = O\left( \GroundSubsetSize + \frac{\log \GroundSubsetSize}{\ErrorA} \right)$.
	\end{itemize}
	
	\begin{algorithm}
		\caption{} \label{alg:TBS}
		\begin{algorithmic}[1]
			\Procedure{\TBS}{$\Funcg, \GoodElems, \MinSizeForSample, \CardA, \ErrorA, \ProbFail, \Threshold$}
			\Input{value oracle $ \Funcg \colon 2^{\GoodElems} \rightarrow \mathbb{R}_{\geq 0} $, set of elements $ \GoodElems $, quantity $ \MinSizeForSample $ such that $ |\GoodElems| \leq \MinSizeForSample $, cardinality constraint $ \CardA $, error term $ \ErrorA $, failure probability term $ \ProbFail $, marginal gain threshold $ \Threshold $}
			\Output{set $ \SetT \subseteq \GoodElems $ satisfying $ \Funcg(\SetT \mid \varnothing) \geq \frac{1-\ErrorA}{1+\ErrorA}\Threshold |\SetT| $}
			\StateInd{$ \OuterIterIndex \gets 0 $, $ \InnerIterIndex \gets 0 $, $ \SetT_{0, 0} \gets \varnothing $, $ \RemElems_{0} \gets \GoodElems $, $ \ReduProp \gets \ReduPropVal $, $ \NumIters \gets \left\lceil 4 \left( 1 + \frac{1}{\ReduProp} \right) \log \left( \frac{\MinSizeForSample}{\ProbFail} \right) \right\rceil $, $\NumBlocks \gets \left\lceil 4 \left( 1 + \frac{4}{\ErrorA} \right) \log\left(\frac{2}{1-\ReduProp}\right) \right\rceil $} \label{line:TBS initialisation}
			
			\For{$ \NumIters $ \textnormal{iterations}}\label{line:TBS outer iterations}
			\State $ \OuterIterIndex \gets \OuterIterIndex + 1 $ \label{line:TBS outer iter index update}
			\State $ \RemElems_{\OuterIterIndex} \gets \{ \Elem \in \RemElems_{\OuterIterIndex-1} : \Funcg( \Elem \mid \SetT_{\OuterIterIndex-1, \InnerIterIndex}) \geq \tau \} $ \label{line:TBS filtering step}
			\If{$ |\RemElems_{\OuterIterIndex}| = 0 $}\label{line:TBS check rem elems}
			\State \Return{$ \SetT_{\OuterIterIndex-1,\InnerIterIndex} $}
			\EndIf
			\State $ \SetT_{\OuterIterIndex,0} \gets \SetT_{\OuterIterIndex-1, \InnerIterIndex} $
			\State $ \InnerIterIndex \gets 0 $
			\For{$\NumBlocks$ \textnormal{iterations}}\label{line:TBS inner iterations}
			\State $ \InnerIterIndex \gets \InnerIterIndex + 1 $ \label{line:TBS inner iter index update}
			\State $ \Block \gets \text{uniform-at-random sample of } {\left\lceil \frac{\ErrorA}{4(1-4\ReduProp)} |\RemElems_{\OuterIterIndex}| \right\rceil} \text{ elements from } \RemElems_{\OuterIterIndex} $ \label{line:TBS sample block}
			\State $ \BlockFiltered \gets \{ \Elem \in \Block : \Funcg( \Elem \mid \SetT_{\OuterIterIndex,\InnerIterIndex-1}) \geq \Threshold \} $ \label{line:TBS block filtered}
			\If{$|\BlockFiltered| = 0$}
			\State \textbf{continue to next iteration}
			\EndIf
			\State $ \{ \BlockElem_1, \dots, \BlockElem_{ |\BlockFiltered| } \} \gets \text{uniform-at-random permutation of } \BlockFiltered $ \label{line:TBS filtered block permutation}
			\State $ \MaxPrefixLength \gets \min\{ \CardA-|\SetT_{\OuterIterIndex,\InnerIterIndex-1}|, |\BlockFiltered| \} $ \label{line:TBS maxindex}
			\State $ \PrefixIndices \gets \{ \lfloor (1+\ErrorA)^\GeoExp \rfloor : 1 \leq \lfloor (1+\ErrorA)^\GeoExp \rfloor \leq \MaxPrefixLength, \GeoExp \in \mathbb{N} \} \cup \{ \MaxPrefixLength \} $ \label{line:TBS prefixes}
			
			\For{$ \PrefixIndex $ \textnormal{in} $ \PrefixIndices $}
			\State $ \Prefix_\PrefixIndex \gets \{\BlockElem_1, \dots, \BlockElem_\PrefixIndex \} $ \label{line:TBS define prefixes}
			\EndFor
			\State $ \GoodPrefixIndices \gets {\left\{ \PrefixIndex \in \PrefixIndices : \Funcg(\Prefix_\PrefixIndex \, \mid \, \SetT_{\OuterIterIndex,\InnerIterIndex-1}) \geq (1-\ErrorA) \Threshold |\Prefix_\PrefixIndex| \right\}} $ \label{line:TBS good prefixes}
			\If{$ \max \GoodPrefixIndices < \MaxPrefixLength $} \label{line:TBS check max index}
			\State $ \PrefixIndexBest \gets \min\{ \PrefixIndex \in \PrefixIndices : \forall \PrefixIndexGood \in \GoodPrefixIndices, \PrefixIndex > \PrefixIndexGood \} $ \label{line:TBS assign ibest succ}
			\Else
			\State $ \PrefixIndexBest \gets \MaxPrefixLength $ \label{line:TBS assign ibest max}
			\EndIf
			\State $ \SetT_{\OuterIterIndex,\InnerIterIndex} \gets \SetT_{\OuterIterIndex,\InnerIterIndex-1} \cup \Prefix_{\PrefixIndexBest} $ \label{line:TBS T update}
			\If{$ |\SetT_{\OuterIterIndex,\InnerIterIndex}| = \CardA $}
			\State \Return{$ \SetT_{\OuterIterIndex,\InnerIterIndex} $} \label{line:TBS return set tight card}
			\EndIf
			\EndFor
			\EndFor
			\State \Return{failure} \label{line:TBS return failure}
			\EndProcedure
		\end{algorithmic}
	\end{algorithm}
	
	\subsection{Analysis of \TBS{}} \label{sec:TBS analysis}
	\paragraph{Chernoff Bounds and Probability Lemmas for \TBSa{}.}
	In this section, we state Chernoff bounds in \cref{lemma:chernoff bounds}. We also state \cref{lemma:1st lemma for bernoulli trials,lemma:2nd lemma for bernoulli trials} for replacing dependent Bernoulli trials with independent Bernoulli trials; the proof of these two lemmas are given by Chen et al. \cite{chen2021best}. The lemmas are used in analyzing \TBSa{}, particularly its query complexity and its success probability.
	
	\begin{lemma}[Chernoff bounds \cite{Mitzenmacher2017}] \label{lemma:chernoff bounds}
		Let $ \IndepTrial_1, \dots, \IndepTrial_\NumRandVar $ be independent binary random variables such that $ \Pr[\IndepTrial_{\RandVarIndex} = 1] = \RandVarProb_{\RandVarIndex} $. Let $ \Mean = \sum_{\RandVarIndex}^{\NumRandVar} \RandVarProb_{\RandVarIndex} $. Then for every $ \Dev \geq 0 $:
		\begin{align}
			\Pr\left[ \sum_{\RandVarIndex=1}^{\NumRandVar} \IndepTrial_{\RandVarIndex} \geq (1+\Dev)\Mean \right] \leq e^{-\frac{\Dev^2 \Mean}{2+\Dev}}\,. \label{eqn:chernoff bound greater}
		\end{align}
		Moreover, for every $ 0 \leq \Dev \leq 1 $:
		\begin{align}
			\Pr\left[ \sum_{\RandVarIndex=1}^{\NumRandVar} \IndepTrial_{\RandVarIndex} \leq (1-\Dev)\Mean \right] \leq e^{-\frac{\Dev^2 \Mean}{2}}\,. \label{eqn:chernoff bound less}
		\end{align}
	\end{lemma}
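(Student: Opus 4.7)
The plan is to derive both tail bounds by the classical moment-generating function (MGF) technique. For the upper tail~\eqref{eqn:chernoff bound greater}, I would fix a free parameter $t > 0$ and apply Markov's inequality to $e^{t \sum_{\RandVarIndex} \IndepTrial_{\RandVarIndex}}$, giving $\Pr[\sum_\RandVarIndex \IndepTrial_\RandVarIndex \geq (1+\Dev)\Mean] \leq \mathbb{E}[e^{t \sum_\RandVarIndex \IndepTrial_\RandVarIndex}] \, e^{-t(1+\Dev)\Mean}$. Using independence of the $\IndepTrial_\RandVarIndex$, the MGF factors as $\prod_{\RandVarIndex}(1+\RandVarProb_{\RandVarIndex}(e^t-1))$, and $1+x \leq e^x$ then yields the uniform bound $e^{\Mean(e^t-1)}$. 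Optimizing by setting $t = \log(1+\Dev)$ gives the classical form $\bigl(\tfrac{e^{\Dev}}{(1+\Dev)^{1+\Dev}}\bigr)^{\Mean}$.

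Next, I would reduce this expression to the stated form $e^{-\Dev^2 \Mean/(2+\Dev)}$ by verifying the scalar inequality $(1+\Dev)\log(1+\Dev) - \Dev \geq \Dev^2/(2+\Dev)$ for all $\Dev \geq 0$. This is a routine calculus check: both sides vanish at $\Dev = 0$, and clearing denominators and differentiating reduces the problem to a nonnegativity statement on $[0,\infty)$ that can be established by one further differentiation.

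For the lower tail~\eqref{eqn:chernoff bound less}, I would run the same argument with a negative parameter, i.e.\ apply Markov's inequality to $e^{-t\sum_\RandVarIndex \IndepTrial_\RandVarIndex}$ on the event $\{\sum_\RandVarIndex \IndepTrial_\RandVarIndex \leq (1-\Dev)\Mean\}$. The MGF factorization and the $1+x \leq e^x$ bound carry over unchanged, and optimizing at $t = -\log(1-\Dev)$ produces $\bigl(\tfrac{e^{-\Dev}}{(1-\Dev)^{1-\Dev}}\bigr)^{\Mean}$. To obtain the cleaner form $e^{-\Dev^2 \Mean/2}$, I would invoke the inequality $(1-\Dev)\log(1-\Dev) \geq -\Dev + \Dev^2/2$, valid on $[0,1]$ by a Taylor-remainder estimate on $-\log(1-\Dev) = \sum_{k \geq 1}\Dev^k/k$.

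The main obstacle here is purely bookkeeping rather than mathematics: the derivation is textbook (matching the Mitzenmacher--Upfal reference already cited), and the only nontrivial work is the two scalar inequalities comparing $(1\pm\Dev)\log(1\pm\Dev)$ to their second-order Taylor surrogates. Since the lemma is invoked in the paper solely as a standard fact, a citation in place of a full derivation is entirely appropriate and I would not reproduce the calculations in the body of the paper.
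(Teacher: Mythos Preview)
Your proposal is correct and matches the standard textbook derivation; the paper itself does not prove this lemma at all but simply states it with a citation to Mitzenmacher--Upfal, exactly as you anticipate in your final paragraph.
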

	
	\begin{lemma}[1st Lemma for Bernoulli trials \cite{chen2021best}] \label{lemma:1st lemma for bernoulli trials}
		Let $ \DepTrial_{1}, \dots, \DepTrial_{\NumRandVar} $ be a sequence of $ \NumRandVar $ Bernoulli trials where the probability of $ \DepTrial_{\RandVarIndex} = 1 $ depends on the results of the previous trials $ \DepTrial_{1}, \dots, \DepTrial_{\RandVarIndex-1} $. Suppose that for some constant $ 0 < \ProbTrialSucc \leq 1 $ and every $ \depTrial_{1},\dots,\depTrial_{\RandVarIndex-1} \in \{0, 1\} $, we have 
		\begin{align}
			\Pr[\DepTrial_{\RandVarIndex}=1 \mid \DepTrial_{1}=\depTrial_{1}, \dots, \DepTrial_{\RandVarIndex-1}=\depTrial_{\RandVarIndex-1}] \geq \ProbTrialSucc.
		\end{align}
		Now let $ \IndepTrial_{1}, \dots, \IndepTrial_{\NumRandVar} $ be a sequence of independent Bernoulli trials such that for all $ 1 \leq \RandVarIndex \leq \NumRandVar : \Pr[\IndepTrial_{\RandVarIndex}] = \ProbTrialSucc $. Then for an arbitrary integer $ \NumSucc $,
		\begin{align}
			\Pr\left[ \sum_{\RandVarIndex=1}^{\NumRandVar} \DepTrial_{\RandVarIndex} \leq \NumSucc \right] \leq \Pr\left[ \sum_{\RandVarIndex=1}^{\NumRandVar} \IndepTrial_{\RandVarIndex} \leq \NumSucc \right]\,.
		\end{align}
		Moreover, let $ \FirstSucc $ be the first index $ \RandVarIndex $ such that $ \DepTrial_\RandVarIndex = 1 $. Then,
		\begin{align}
			\mathbb{E}[\FirstSucc] \leq \frac{1}{\ProbTrialSucc}\,. \label{eqn:expected number of trials}
		\end{align}
	\end{lemma}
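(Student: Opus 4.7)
The plan is to establish the two assertions via a monotone coupling (for the stochastic dominance) and a telescoping bound on conditional probabilities (for the expectation bound).

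For the dominance statement, I would construct an explicit coupling on a common probability space. Draw $U_1, \ldots, U_\NumRandVar$ iid uniform on $(0,1)$, set $\IndepTrial_\RandVarIndex \coloneqq \mathbf{1}[U_\RandVarIndex \leq \ProbTrialSucc]$, and sequentially define $\DepTrial_\RandVarIndex \coloneqq \mathbf{1}[U_\RandVarIndex \leq p_\RandVarIndex]$, where $p_\RandVarIndex$ denotes the realized conditional probability $\Pr[\DepTrial_\RandVarIndex = 1 \mid \DepTrial_1, \ldots, \DepTrial_{\RandVarIndex-1}]$ that the hypothesis lower-bounds by $\ProbTrialSucc$. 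The resulting joint law of $(\DepTrial_1, \ldots, \DepTrial_\NumRandVar)$ matches the one specified in the lemma, the $\IndepTrial_\RandVarIndex$ are iid Bernoulli($\ProbTrialSucc$), and $p_\RandVarIndex \geq \ProbTrialSucc$ forces $\DepTrial_\RandVarIndex \geq \IndepTrial_\RandVarIndex$ pointwise. Summing yields $\sum_\RandVarIndex \DepTrial_\RandVarIndex \geq \sum_\RandVarIndex \IndepTrial_\RandVarIndex$ almost surely, so $\Pr[\sum_\RandVarIndex \DepTrial_\RandVarIndex \leq \NumSucc] \leq \Pr[\sum_\RandVarIndex \IndepTrial_\RandVarIndex \leq \NumSucc]$ for every integer $\NumSucc$.

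For the bound on $\mathbb{E}[\FirstSucc]$, I would work directly with the tail-sum identity $\mathbb{E}[\FirstSucc] = \sum_{t \geq 0} \Pr[\FirstSucc > t]$. Since $\Pr[\FirstSucc > t] = \Pr[\DepTrial_1 = 0, \ldots, \DepTrial_t = 0]$, I telescope this as $\prod_{\RandVarIndex=1}^{t} \Pr[\DepTrial_\RandVarIndex = 0 \mid \DepTrial_1 = \cdots = \DepTrial_{\RandVarIndex-1} = 0]$, and the hypothesis bounds each factor above by $1 - \ProbTrialSucc$. Hence $\Pr[\FirstSucc > t] \leq (1-\ProbTrialSucc)^t$, and summing the geometric series gives $\mathbb{E}[\FirstSucc] \leq 1/\ProbTrialSucc$.

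The only mild subtlety is the convention for $\FirstSucc$ when no success occurs within the $\NumRandVar$ trials. One natural fix is to extend the sequence with auxiliary iid Bernoulli($\ProbTrialSucc$) trials to guarantee a success almost surely, which does not affect the bound; alternatively, setting $\FirstSucc = \NumRandVar + 1$ on the exceptional event and truncating the tail sum at $t = \NumRandVar$ still yields $\mathbb{E}[\FirstSucc] \leq 1/\ProbTrialSucc$ via the same geometric bound. I anticipate no real obstacle beyond this bookkeeping.
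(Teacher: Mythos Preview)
Your argument is correct. The paper itself does not give a proof of this lemma; it merely states the result and defers the proof to Chen et al.\ (the reference \texttt{chen2021best}). Your explicit monotone coupling via shared uniforms for the stochastic-dominance part, and the tail-sum plus geometric bound for $\mathbb{E}[\FirstSucc]$, are the standard and clean way to establish such statements, and your handling of the edge case where no success occurs among the $\NumRandVar$ trials is fine.
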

	
	\begin{lemma}[2nd Lemma for Bernoulli trials \cite{chen2021best}] \label{lemma:2nd lemma for bernoulli trials}
		Let $ \DepTrial_{1}, \dots, \DepTrial_{\NumRandVar+1} $ be a sequence of $ \NumRandVar+1 $ Bernoulli trials where the probability of $ \DepTrial_{\RandVarIndex} = 1 $ depends on the results of the previous trials $ \DepTrial_{1}, \dots, \DepTrial_{\RandVarIndex-1} $, and it decreases from $ 1 $ to $ 0 $. Let $ \RandVarMaxIndex $ be a random variable dependent on the outcomes of the $ \NumRandVar+1 $ Bernoulli trials. Suppose that for some constant $ 0 < \ProbTrialSucc < 1 $ and every $ \depTrial_{1},\dots,\depTrial_{\RandVarIndex-1} \in \{0, 1\} $, we have
		\begin{align}
			\Pr[\DepTrial_{\RandVarIndex}=1 \mid \DepTrial_{1}=\depTrial_{1}, \dots, \DepTrial_{\RandVarIndex-1}=\depTrial_{\RandVarIndex-1}, \RandVarIndex \leq \RandVarMaxIndex] \geq \ProbTrialSucc\,.
		\end{align}
		Now let $ \IndepTrial_{1}, \dots, \IndepTrial_{\NumRandVar} $ be a sequence of independent Bernoulli trials such that for all $ 1 \leq \RandVarIndex \leq \NumRandVar : \Pr[\IndepTrial_{\RandVarIndex}] = \ProbTrialSucc $. Then for an arbitrary proportion $ 0 \leq \NumSucc \leq 1 $,
		\begin{align}
			\Pr\left[ \sum_{\RandVarIndex=1}^{\RandVarMaxIndex} \DepTrial_{\RandVarIndex} \leq \NumSucc \RandVarMaxIndex \right] \leq \Pr\left[ \sum_{\RandVarIndex=1}^{\RandVarMaxIndex} \IndepTrial_{\RandVarIndex} \leq \NumSucc \RandVarMaxIndex \right]\,.
		\end{align}
	\end{lemma}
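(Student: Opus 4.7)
The plan is to prove the lemma via a \emph{monotone coupling} of the dependent sequence $(\DepTrial_\RandVarIndex)$ and the independent sequence $(\IndepTrial_\RandVarIndex)$ on a common probability space, so that $\DepTrial_\RandVarIndex \geq \IndepTrial_\RandVarIndex$ pointwise whenever $\RandVarIndex \leq \RandVarMaxIndex$. Once such a coupling is established, the inequality $\sum_{\RandVarIndex=1}^{\RandVarMaxIndex} \IndepTrial_\RandVarIndex \leq \sum_{\RandVarIndex=1}^{\RandVarMaxIndex} \DepTrial_\RandVarIndex$ holds almost surely, and the desired probability inequality follows immediately from the fact that $\{A \leq a\} \subseteq \{B \leq a\}$ whenever $A \geq B$. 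The virtue of this approach is that it sidesteps any direct manipulation of joint distributions over dependent trials.

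Concretely, I would draw i.i.d.\ uniform random variables $U_1,\dots,U_{\NumRandVar+1}$ on $[0,1]$ and define $\DepTrial_\RandVarIndex \coloneqq \mathbb{1}[U_\RandVarIndex \leq p_\RandVarIndex]$, where $p_\RandVarIndex \coloneqq \Pr[\DepTrial_\RandVarIndex = 1 \mid \DepTrial_1,\dots,\DepTrial_{\RandVarIndex-1}]$ is the conditional success probability from the hypothesis. This recovers the correct joint distribution of $(\DepTrial_1,\dots,\DepTrial_{\NumRandVar+1})$, so the random variable $\RandVarMaxIndex$, being a function of these trials, is well-defined on the coupling space. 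Using the \emph{same} uniforms, I would then set $\IndepTrial_\RandVarIndex \coloneqq \mathbb{1}[U_\RandVarIndex \leq \ProbTrialSucc]$; since the $U_\RandVarIndex$ are independent, the $\IndepTrial_\RandVarIndex$ form an independent Bernoulli$(\ProbTrialSucc)$ sequence as required by the statement.

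Next I would verify the pointwise comparison. By hypothesis, $p_\RandVarIndex \geq \ProbTrialSucc$ whenever $\RandVarIndex \leq \RandVarMaxIndex$, so on that event $U_\RandVarIndex \leq \ProbTrialSucc$ implies $U_\RandVarIndex \leq p_\RandVarIndex$, i.e., $\IndepTrial_\RandVarIndex = 1 \Rightarrow \DepTrial_\RandVarIndex = 1$. Summing over $\RandVarIndex = 1,\dots,\RandVarMaxIndex$ gives $\sum_{\RandVarIndex=1}^{\RandVarMaxIndex} \IndepTrial_\RandVarIndex \leq \sum_{\RandVarIndex=1}^{\RandVarMaxIndex} \DepTrial_\RandVarIndex$ almost surely, and hence
\begin{align}
	\left\{ \sum_{\RandVarIndex=1}^{\RandVarMaxIndex} \DepTrial_\RandVarIndex \leq \NumSucc \RandVarMaxIndex \right\} \subseteq \left\{ \sum_{\RandVarIndex=1}^{\RandVarMaxIndex} \IndepTrial_\RandVarIndex \leq \NumSucc \RandVarMaxIndex \right\}. \notag
\end{align}
Taking probabilities on both sides yields the stated inequality.

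The main obstacle will be a subtle measurability issue: $\RandVarMaxIndex$ is itself random and defined in terms of the $\DepTrial_\RandVarIndex$'s, yet the hypothesis only supplies the bound $p_\RandVarIndex \geq \ProbTrialSucc$ \emph{conditional on} $\{\RandVarIndex \leq \RandVarMaxIndex\}$. The assumption that the conditional probabilities decrease monotonically from $1$ to $0$ is what resolves this: $\RandVarMaxIndex$ can be identified with (a function of) the largest index at which $p_\RandVarIndex$ still exceeds $\ProbTrialSucc$, so the event $\{\RandVarIndex \leq \RandVarMaxIndex\}$ is determined by the history up to step $\RandVarIndex-1$ together with a deterministic threshold condition on $p_\RandVarIndex$. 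Once this is spelled out, the implication $U_\RandVarIndex \leq \ProbTrialSucc \Rightarrow U_\RandVarIndex \leq p_\RandVarIndex$ is unambiguous on the relevant event, and the coupling argument closes cleanly.
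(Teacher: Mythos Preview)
The paper does not actually prove this lemma: it merely states it and cites Chen et~al.~\cite{chen2021best} for the proof (see the sentence immediately preceding \cref{lemma:chernoff bounds} in the paper). So there is no in-paper proof to compare against.

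That said, your monotone-coupling argument is the standard route to stochastic-domination statements of this kind, and it is essentially correct. Building $\DepTrial_\RandVarIndex = \mathbb{1}[U_\RandVarIndex \le p_\RandVarIndex]$ and $\IndepTrial_\RandVarIndex = \mathbb{1}[U_\RandVarIndex \le \ProbTrialSucc]$ from common uniforms, then reading off $\IndepTrial_\RandVarIndex \le \DepTrial_\RandVarIndex$ on $\{\RandVarIndex \le \RandVarMaxIndex\}$, is exactly the right idea, and the inclusion of events you derive yields the conclusion immediately.

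You correctly flag the one genuine subtlety: the hypothesis bounds $\Pr[\DepTrial_\RandVarIndex=1 \mid \DepTrial_1,\dots,\DepTrial_{\RandVarIndex-1},\,\RandVarIndex\le\RandVarMaxIndex]$, not $p_\RandVarIndex = \Pr[\DepTrial_\RandVarIndex=1 \mid \DepTrial_1,\dots,\DepTrial_{\RandVarIndex-1}]$, and these coincide only when $\{\RandVarIndex\le\RandVarMaxIndex\}$ is measurable with respect to the history $\sigma(\DepTrial_1,\dots,\DepTrial_{\RandVarIndex-1})$---i.e., when $\RandVarMaxIndex$ is (at least) a stopping time for the natural filtration. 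Your resolution via the ``decreases from $1$ to $0$'' clause is the intended one: in the application (\cref{claim:prob ub 1/4}), $\RandVarMaxIndex=\PrefixIndexMax$ is determined by where the conditional success probability first drops below a threshold along the realized path, so $\{\RandVarIndex\le\RandVarMaxIndex\}$ is indeed adapted. I would tighten that paragraph to state explicitly that the argument requires $\{\RandVarMaxIndex\ge\RandVarIndex\}\in\sigma(\DepTrial_1,\dots,\DepTrial_{\RandVarIndex-1})$, rather than framing it as ``$\RandVarMaxIndex$ can be identified with'' a particular index---the lemma as stated allows $\RandVarMaxIndex$ to be any such adapted random variable, not only the threshold index itself.
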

	
	\paragraph{Adaptive Complexity of \TBSa{}.}
	\begin{lemma} \label{lemma:TBS adaptive complexity}
		The adaptive complexity of \TBSa{} (\cref{alg:TBS}) is $ O ( \frac{1}{\ErrorA}\log ( \frac{\MinSizeForSample}{\ProbFail} ) ) $.
	\end{lemma}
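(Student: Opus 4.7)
The plan is to directly count adaptive rounds by walking through \cref{alg:TBS} and multiplying the per-iteration cost by the number of iterations. The two loop-bound constants from \cref{line:TBS initialisation} drive everything: the outer loop runs $\NumIters = O(\log(\MinSizeForSample/\ProbFail))$ times and the inner loop runs $\NumBlocks = O(1/\ErrorA)$ times, so if I can show each iteration of either loop costs only $O(1)$ adaptive rounds, then the overall bound $O(\frac{1}{\ErrorA}\log(\frac{\MinSizeForSample}{\ProbFail}))$ drops out by multiplication.

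First I would argue that a single inner iteration uses $O(1)$ adaptive rounds. Inspecting the body of the inner loop (\crefrange{line:TBS inner iter index update}{line:TBS T update}), the only steps that issue oracle queries are: (i) the construction of $\BlockFiltered$ in \cref{line:TBS block filtered}, which needs one adaptive round because the queries $\Funcg(\Elem \mid \SetT_{\OuterIterIndex,\InnerIterIndex-1})$ for $\Elem\in\Block$ are independent and can be issued in parallel; and (ii) the evaluation of $\Funcg(\Prefix_\PrefixIndex \mid \SetT_{\OuterIterIndex,\InnerIterIndex-1})$ for every $\PrefixIndex\in\PrefixIndices$ used in \cref{line:TBS good prefixes}, which likewise can be issued in parallel and costs one adaptive round. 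Sampling, permuting, assembling $\PrefixIndices$, picking $\PrefixIndexBest$, and updating $\SetT_{\OuterIterIndex,\InnerIterIndex}$ are all oracle-free, so each inner iteration uses at most two adaptive rounds.

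Next I would bound a single outer iteration. Beyond the nested inner loop, the only additional oracle work is the filtering in \cref{line:TBS filtering step}, whose queries $\Funcg(\Elem \mid \SetT_{\OuterIterIndex-1,\InnerIterIndex})$ are again independent across $\Elem\in\RemElems_{\OuterIterIndex-1}$ and hence occupy one adaptive round. Combining with the inner-loop cost, an outer iteration uses $O(1) + \NumBlocks \cdot O(1) = O(1/\ErrorA)$ adaptive rounds. Multiplying by $\NumIters$ outer iterations and substituting the values from \cref{line:TBS initialisation} yields
\begin{align*}
\NumIters \cdot O(1/\ErrorA) \;=\; O\!\left(\log(\MinSizeForSample/\ProbFail)\right) \cdot O(1/\ErrorA) \;=\; O\!\left(\tfrac{1}{\ErrorA}\log(\MinSizeForSample/\ProbFail)\right),
\end{align*}
as required. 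The whole argument is a structural round-counting exercise; the only point that takes care is to verify that the prefix-evaluation in \cref{line:TBS good prefixes} can indeed be batched into a single round (each $\Prefix_\PrefixIndex$ is determined before any oracle call is made, so the $|\PrefixIndices|$ queries are non-adaptive), which is the one place where a sloppy reading could suggest a $\log$-factor blow-up per inner iteration.
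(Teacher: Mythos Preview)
Your proposal is correct and follows essentially the same approach as the paper: count two adaptive rounds per inner iteration (one for \cref{line:TBS block filtered}, one for \cref{line:TBS good prefixes}), one additional round per outer iteration for \cref{line:TBS filtering step}, and then multiply $\NumIters \cdot (1 + 2\,\NumBlocks)$ using the explicit values from \cref{line:TBS initialisation}. The paper writes out the exact constants rather than staying in $O(\cdot)$ notation, but the structure of the argument is identical.
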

	
	\begin{proof}
		We call an iteration of the \cref{line:TBS outer iterations} loop an \emph{outer iteration} and an iteration of the \cref{line:TBS inner iterations} loop an \emph{inner iteration}.
		
		Observe that each inner iteration requires 2 adaptive rounds: \cref{line:TBS block filtered} uses 1 round of $ |\Block| + 1 $ queries, and \cref{line:TBS good prefixes} uses 1 round of at most $ |\Block| $ queries. Then observe that each outer iteration requires $ 1 + 2\cdot\NumBlocks $ adaptive rounds: \cref{line:TBS filtering step} uses 1 round of $ |\RemElems_{\OuterIterIndex}| + 1 $ queries, and there are $ \NumBlocks $ sequential inner iterations each requiring 2 adaptive rounds. Finally, \TBSa{} overall requires $ \NumIters \cdot (1+2 \cdot \NumBlocks) $ adaptive rounds: there are $ \NumIters $ sequential outer iterations, each requiring $ (1+2 \cdot \NumBlocks) $ adaptive rounds. Thus, we upper-bound the adaptive complexity of \TBSa{}.
		\begin{align}
			&\text{Adaptive complexity of \textsc{\TBSa}} \notag \\
			&= \NumIters \cdot (1+2 \cdot \NumBlocks) \notag \\
			&=  \left\lceil 4 \left( 1 + \frac{1}{\ReduProp} \right) \log \left( \frac{\MinSizeForSample}{\ProbFail} \right) \right\rceil \cdot \left( 1 + 2 \cdot \left\lceil 4 \left( 1 + \frac{4}{\ErrorA} \right) \log\left(\frac{2}{1-\ReduProp}\right) \right\rceil \right) &\text{\cref{line:TBS initialisation}} \notag \\
			&= O \left( \frac{1}{\ErrorA}\log \left( \frac{\MinSizeForSample}{\ProbFail} \right) \right)\,. &\text{$\ReduProp$ is constant} \notag
		\end{align} 
	\end{proof}
	
	\paragraph{Query Complexity of \TBSa{}.}
	\begin{lemma} \label{lemma:TBS query complexity}
		Suppose \TBSa{} (\cref{alg:TBS}) is run such that $0 < \ErrorA < 1$, and let $\GroundSubsetSize=|\GoodElems|$. Then the expected query complexity of \TBSa{} is $ O(\GroundSubsetSize + \frac{\log \GroundSubsetSize}{\ErrorA}) $.
	\end{lemma}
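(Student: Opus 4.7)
The plan is to split the expected query count into two contributions arising naturally from each outer iteration of \cref{alg:TBS}: a ``size'' term proportional to $|\RemElems_\OuterIterIndex|$, which telescopes via a geometric-decay argument to $O(\GroundSubsetSize)$, and an ``overhead'' term of order $1/\ErrorA$ per executed outer iteration, which a stopping-time argument will bound by $O(\log\GroundSubsetSize / \ErrorA)$. Summing these yields the claimed $O(\GroundSubsetSize + \log \GroundSubsetSize / \ErrorA)$.

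First I would pin down a per-outer-iteration query budget. In outer iteration $\OuterIterIndex$, the filtering step on \cref{line:TBS filtering step} uses $|\RemElems_{\OuterIterIndex-1}|$ oracle calls. Each of its $\NumBlocks = O(1/\ErrorA)$ inner iterations then costs at most $|\Block|+1$ calls on \cref{line:TBS block filtered} and $|\PrefixIndices|$ calls on \cref{line:TBS good prefixes}; since the geometric prefix indices satisfy $|\PrefixIndices| \leq \MaxPrefixLength + 1 \leq |\Block|+1$, each inner iteration costs $O(|\Block|) = O(\ErrorA|\RemElems_\OuterIterIndex|+1)$ queries. Thus outer iteration $\OuterIterIndex$ uses at most $O\bigl(|\RemElems_{\OuterIterIndex-1}| + \NumBlocks\cdot(\ErrorA|\RemElems_\OuterIterIndex|+1)\bigr) = O(|\RemElems_{\OuterIterIndex-1}| + |\RemElems_\OuterIterIndex| + 1/\ErrorA)$ queries.

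The heart of the analysis is the geometric shrinkage of $|\RemElems_\OuterIterIndex|$. Following the high-level explanation sketched in the paper body, I would establish as a supporting claim that, conditioned on the history up to iteration $\OuterIterIndex-1$, $\Pr[|\RemElems_\OuterIterIndex| \leq (1-\ReduProp)|\RemElems_{\OuterIterIndex-1}|] \geq 1/2$. Combined with the deterministic bound $|\RemElems_\OuterIterIndex| \leq |\RemElems_{\OuterIterIndex-1}|$, the tower property then yields $E[|\RemElems_\OuterIterIndex| \mid \RemElems_{\OuterIterIndex-1}] \leq (1-\ReduProp/2)|\RemElems_{\OuterIterIndex-1}|$, so iterating gives $E[|\RemElems_\OuterIterIndex|] \leq (1-\ReduProp/2)^{\OuterIterIndex} \GroundSubsetSize$. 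Since $\ReduProp$ is a constant, summing the geometric series produces $E\bigl[\sum_\OuterIterIndex |\RemElems_\OuterIterIndex|\bigr] = O(\GroundSubsetSize)$.

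For the overhead contribution, let $I$ denote the actual number of outer iterations executed. Because \cref{line:TBS check rem elems} triggers early return as soon as $|\RemElems_\OuterIterIndex|=0$, we have $I \leq T := \min\{\OuterIterIndex : |\RemElems_\OuterIterIndex|=0\}$, and $|\RemElems_\OuterIterIndex|=0$ is forced after $k = O(\log \GroundSubsetSize)$ successful reductions, because $\GroundSubsetSize(1-\ReduProp)^k < 1$ implies $|\RemElems_\OuterIterIndex|=0$ (integrality). Treating each outer iteration as a Bernoulli trial with conditional success probability at least $1/2$, \cref{lemma:1st lemma for bernoulli trials} bounds the expected number of trials needed for $k$ successes by $2k$, giving $E[I] = O(\log\GroundSubsetSize)$, and hence the overhead totals $E[I]\cdot O(1/\ErrorA) = O(\log \GroundSubsetSize /\ErrorA)$. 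The main obstacle is cleanly formalizing the $\Pr \geq 1/2$ claim driving the geometric decay: it requires a careful inner-loop analysis showing that with probability at least $1/2$ over the $\NumBlocks$ block samples, a constant $\ReduProp$-fraction of surviving elements $\Elem$ ends up with $\Funcg(\Elem \mid \SetT_{\OuterIterIndex,\NumBlocks}) < \Threshold$, which will rest on per-inner-iteration success events (via \cref{lemma:chernoff bounds}) and a dependent-to-independent trial reduction (via \cref{lemma:1st lemma for bernoulli trials}); once that claim is in hand, the split above assembles into the stated bound.
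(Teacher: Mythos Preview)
Your proposal is correct and rests on the same key ingredient as the paper's proof: the claim that each outer iteration reduces $|\RemElems_\OuterIterIndex|$ by a factor $(1-\ReduProp)$ with probability at least $1/2$, which the paper states and proves separately as \cref{lemma:TBS outer failure}, and which you correctly flag as the main obstacle. Your per-iteration budget matches the paper's \cref{claim:TBS query complexity outer iter}. The only real difference is in how the size contribution is summed: you bound $\mathbb{E}[|\RemElems_\OuterIterIndex|]$ directly via the tower property to get a geometrically decaying series, whereas the paper buckets iterations by which range $(1-\ReduProp)^{\PropExp+1}|\GoodElems| < |\RemElems_\OuterIterIndex| \leq (1-\ReduProp)^\PropExp|\GoodElems|$ they fall into and shows the expected count in each bucket is at most $2$ (\cref{claim:num outer iters ub}). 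Both are standard ways to turn a ``halve with probability $1/2$'' guarantee into a geometric sum; your direct-expectation route is arguably cleaner for the size term, while the paper's bucketing handles the size and overhead terms in a single unified summation over $\PropExp$.
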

	
	\begin{proof}
		We call an iteration of the \cref{line:TBS outer iterations} loop an \emph{outer iteration} and an iteration of the \cref{line:TBS inner iterations} loop an \emph{inner iteration}. We first prove \cref{claim:TBS query complexity outer iter} below.
		
		\begin{claim}\label{claim:TBS query complexity outer iter}
			The query complexity of the $\OuterIterIndex$th outer iteration of \TBSa{} is at most $( 1 + \frac{ 10 \log(\frac{2}{1-\ReduProp}) + \frac{1}{2} }{1-4\ReduProp} ) |\RemElems_{\OuterIterIndex-1}| + \frac{60}{\ErrorA} \log(\frac{2}{1-\ReduProp}) + 4$.
		\end{claim}
		
		\begin{proof}
			In the $ \OuterIterIndex $th outer iteration, the \cref{line:TBS filtering step} filtering step performs $ |\RemElems_{\OuterIterIndex-1}| $ queries plus $ 1 $ query to compute $ \Funcf(\SetT_{\OuterIterIndex-1, \InnerIterIndex}) $ for all marginal gains. Then observe that each inner iteration performs $ |\Block| = \lceil \frac{\ErrorA}{4(1-4\ReduProp)} |\RemElems_{\OuterIterIndex}| \rceil $ queries in \cref{line:TBS block filtered} and at most $ |\Block| = \lceil \frac{\ErrorA}{4(1-4\ReduProp)} |\RemElems_\OuterIterIndex| \rceil $ queries in \cref{line:TBS good prefixes}, plus $ 1 $ query to compute $ \Funcf(\SetT_{\OuterIterIndex, \InnerIterIndex-1}) $ for all marginal gains in \cref{line:TBS block filtered,line:TBS good prefixes}. Further, for each outer iteration, there are at most $ \NumBlocks = \lceil 4 ( 1 + \frac{4}{\ErrorA} ) \log(\frac{2}{1-\ReduProp}) \rceil $ inner iterations.
			
			Thus, we bound the overall query complexity of the $ \OuterIterIndex $th outer iteration below. \cref{eqn:TBS query complexity outer iter} follows since $ \ErrorA < 1 $, and $|\RemElems_\OuterIterIndex| \leq |\RemElems_{\OuterIterIndex-1}| $ holds by submodularity. It is worth noting that, in \cref{eqn:TBS divide out error}, we are able cancel out $\frac{1}{\ErrorA}$ by our assignments of $|\Block|$ and $\NumBlocks$ so that it does not appear as a factor of $|\RemElems_{\OuterIterIndex-1}|$.
			\begin{align}
				&\text{Query complexity of $ \OuterIterIndex $th outer iteration} \notag \\
				&\leq |\RemElems_{\OuterIterIndex-1}| + 1 + \NumBlocks \cdot \left( 2|\Block| + 1 \right) \notag \\
				&\leq |\RemElems_{\OuterIterIndex-1}| + 1 + \left\lceil 4 \left( 1 + \frac{4}{\ErrorA} \right) \log\left(\frac{2}{1-\ReduProp}\right) \right\rceil \cdot \left( 2\left\lceil \frac{\ErrorA}{4(1-4\ReduProp)} |\RemElems_\OuterIterIndex| \right\rceil + 1 \right) \notag \\
				&\leq |\RemElems_{\OuterIterIndex-1}| + 1 + \left( 4 \left(\frac{5}{\ErrorA}\right) \log\left(\frac{2}{1-\ReduProp}\right) + 1 \right) \cdot \left( 2\left( \frac{\ErrorA}{4(1-4\ReduProp)} |\RemElems_\OuterIterIndex| + 1 \right) + 1 \right) \notag \\
				&= |\RemElems_{\OuterIterIndex-1}| + 1 + \left( \frac{20}{\ErrorA} \log\left(\frac{2}{1-\ReduProp}\right) + 1 \right) \cdot \left( \frac{\ErrorA}{2(1-4\ReduProp)} |\RemElems_\OuterIterIndex| + 3 \right) \notag \\
				&= |\RemElems_{\OuterIterIndex-1}| + 1 + \frac{\ErrorA\cdot \left( \frac{20}{\ErrorA} \log\left(\frac{2}{1-\ReduProp}\right) + 1 \right) }{2(1-4\ReduProp)} |\RemElems_\OuterIterIndex| + \frac{60}{\ErrorA} \log\left(\frac{2}{1-\ReduProp}\right) + 3 \label{eqn:TBS divide out error} \\
				&= |\RemElems_{\OuterIterIndex-1}| + \frac{ 10 \log\left(\frac{2}{1-\ReduProp}\right) + \frac{\ErrorA}{2} }{1-4\ReduProp} |\RemElems_\OuterIterIndex| + \frac{60}{\ErrorA} \log\left(\frac{2}{1-\ReduProp}\right) + 4 \notag \\
				&\leq \left( 1 + \frac{ 10 \log\left(\frac{2}{1-\ReduProp}\right) + \frac{1}{2} }{1-4\ReduProp} \right) |\RemElems_{\OuterIterIndex-1}| + \frac{60}{\ErrorA} \log\left(\frac{2}{1-\ReduProp}\right) + 4\,. \label{eqn:TBS query complexity outer iter}
			\end{align} 
		\end{proof}
		
		Before we bound the overall expected query complexity of \TBSa{}, we define the random variable $ \NumRemElemsInter_\PropExp $, for every $ \PropExp \geq 0 $, as the number of outer iterations $ \OuterIterIndex $ for which $ (1-\ReduProp)^{\PropExp} |\GoodElems| \geq |\RemElems_{\OuterIterIndex}| > (1-\ReduProp)^{\PropExp+1} |\GoodElems| $. Importantly, we have the following claim.
		
		\begin{claim} \label{claim:num outer iters ub}
			For every $\PropExp \geq 1$, we have that $ \mathbb{E}[\NumRemElemsInter_\PropExp] \leq 2 $. 
		\end{claim}
		
		\begin{proof}
			An $\OuterIterIndex$th outer iteration successfully results in $ |\RemElems_{\OuterIterIndex+1}| \leq (1-\ReduProp) |\RemElems_{\OuterIterIndex}| $ with probability at least $ \frac{1}{2} $ by \cref{lemma:TBS outer failure}. Then, the expected number $\FirstSucc$ of outer iterations until the next successful iteration is at most $ 2 $ by \cref{lemma:1st lemma for bernoulli trials}, \cref{eqn:expected number of trials}. Therefore, in expectation, there are at most $2$ outer iterations for which $|\RemElems_{\OuterIterIndex}|$ falls inside $ (1-\ReduProp)^{\PropExp} |\GoodElems| \geq |\RemElems_{\OuterIterIndex}| > (1-\ReduProp)^{\PropExp+1} |\GoodElems| $, meaning $\mathbb{E}[\NumRemElemsInter_\PropExp ] \leq 2$. 
		\end{proof}
		
		Now we upper-bound the overall expected query complexity of \TBSa{}. Below, \cref{eqn:sum query complexity outer iter} follows from \cref{claim:TBS query complexity outer iter}, \cref{eqn:num outer iters ub} follows from \cref{claim:num outer iters ub}, \cref{eqn:geometric series good elems} follows from the geometric series, and \cref{eqn:TBS final query complexity} follows because $\ReduProp$ is constant.
		\begin{align}
			&\mathbb{E}[\text{Query complexity of \textsc{\TBSa}}] \notag \\
			&= \mathbb{E} \left[ \sum_{\OuterIterIndex \colon |\RemElems_{\OuterIterIndex-1}| \geq 1} \left( \text{Query complexity of $ \OuterIterIndex $th outer iteration} \right) \right] \notag \\
			& \leq \mathbb{E} \left[ \sum_{\OuterIterIndex \colon |\RemElems_{\OuterIterIndex-1}| \geq 1} \left( \left( 1 + \frac{ 10 \log\left(\frac{2}{1-\ReduProp}\right) + \frac{1}{2} }{1-4\ReduProp} \right) |\RemElems_{\OuterIterIndex-1}| + \frac{60}{\ErrorA} \log\left(\frac{2}{1-\ReduProp}\right) + 4 \right) \right] \label{eqn:sum query complexity outer iter} \\
			&\leq \sum_{|\GoodElems| (1-\ReduProp)^{\PropExp} \geq 1} \mathbb{E} \left[\NumRemElemsInter_\PropExp\right] \left( \left( 1 + \frac{ 10 \log\left(\frac{2}{1-\ReduProp}\right) + \frac{1}{2} }{1-4\ReduProp} \right) (1-\ReduProp)^{\PropExp} |\GoodElems| + \frac{60}{\ErrorA} \log\left(\frac{2}{1-\ReduProp}\right) + 4 \right) \notag \\
			&\leq \left( 2 + \frac{ 20 \log\left(\frac{2}{1-\ReduProp}\right) + 1 }{1-4\ReduProp} \right) |\GoodElems| \sum_{\PropExp \geq 0}^{\infty} (1-\ReduProp)^{\PropExp} + \sum_{|\GoodElems| (1-\ReduProp)^{\PropExp} \geq 1} \left( \frac{120}{\ErrorA} \log\left(\frac{2}{1-\ReduProp}\right) + 8 \right) \label{eqn:num outer iters ub} \\
			&\leq \left( 2 + \frac{ 20 \log\left(\frac{2}{1-\ReduProp}\right) + 1 }{1-4\ReduProp} \right) \frac{|\GoodElems|}{\ReduProp} + \left(1 + \log_{\frac{1}{1-\ReduProp}}(|\GoodElems|) \right) \left( \frac{120}{\ErrorA} \log\left(\frac{2}{1-\ReduProp}\right) + 8 \right) \label{eqn:geometric series good elems} \\
			&= O\left(\GroundSubsetSize + \frac{\log \GroundSubsetSize}{\ErrorA}\right)\,. \label{eqn:TBS final query complexity}
		\end{align} 
	\end{proof}
	
	\paragraph{Explanation of the Value of $\ReduProp$.} \label{sec:explanation of reduprop}
	\TBSa{} (\cref{alg:TBS}) assigns $\ReduProp$ so as to minimize the factor of $|\GoodElems|=\GroundSubsetSize$ in \cref{eqn:geometric series good elems}, subject to the constraint $0 < \ReduProp < \frac{1}{4}$; this constraint ultimately comes from $|\Block|=\lceil \frac{\ErrorA}{4(1-4\ReduProp)} |\RemElems_{\OuterIterIndex}| \rceil$ in \cref{line:TBS sample block} of \TBSa{}, which is relevant to proving \cref{lemma:TBS sample failure}. The minimizing value of $\ReduProp$ can be verified as $\ReduProp\approx\ReduPropVal$, making the factor $\approx 295.8$.
	
	\paragraph{Success Probability of \TBSa{}.}
	In this section, we prove the success probability of \TBSa{} (\cref{alg:TBS}). Our proof follows the same basic style as that of \TS{}'s success probability \cite{chen2021best}, but is more involved due to the ``block sampling'' and the nested loop structure in \TBSa{}. We state the required result in \cref{lemma:TBS success probability}.
	
	We introduce some special notation that is used throughout this section, in addition to the notation taken from the pseudocode of \TBSa{}. Refer to \cref{sec:thresholdblockseq} for an overview of \TBSa{}.
	\begin{itemize}
		\item the $ \OuterIterIndex $th \emph{outer iteration} is the iteration of the \cref{line:TBS outer iterations} loop indexed by $ \OuterIterIndex $ (\cref{line:TBS outer iter index update}).
		\item the $ \InnerIterIndex $th \emph{inner iteration} is the iteration of the \cref{line:TBS inner iterations} loop indexed by $ \InnerIterIndex $ (\cref{line:TBS inner iter index update}).
		\item $ \RemElemsDefer_{\OuterIterIndex, 0} = \RemElems_{\OuterIterIndex} $.
		\item $ \RemElemsDefer_{\OuterIterIndex, \InnerIterIndex} = \{ \Elem \in \RemElems_{\OuterIterIndex} : \Funcg(\Elem \mid \SetT_{\OuterIterIndex, \InnerIterIndex-1} ) \geq \Threshold \} $, which is the set of elements \emph{implicitly} remaining at the beginning of the $\InnerIterIndex$th inner iteration of the $\OuterIterIndex$th outer iteration.
		\item $ \RemElemsDefer_{\OuterIterIndex, \InnerIterIndex}(\Prefix) = \{ \Elem \in \RemElems_{\OuterIterIndex} : \Funcg(\Elem \mid \SetT_{\OuterIterIndex, \InnerIterIndex-1} \cup \Prefix ) \geq \Threshold \} $, which is the set of elements \emph{implicitly} remaining after adding prefix $\Prefix$ to $\SetT_{\OuterIterIndex, \InnerIterIndex-1}$. Note that $\RemElemsDefer_{\OuterIterIndex, \InnerIterIndex+1} = \RemElemsDefer_{\OuterIterIndex, \InnerIterIndex}(\Prefix_{\PrefixIndexBest})$ since $ \SetT_{\OuterIterIndex,\InnerIterIndex} = \SetT_{\OuterIterIndex,\InnerIterIndex-1} \cup \Prefix_{\PrefixIndexBest} $.
	\end{itemize}
	
	Also, in the $ \OuterIterIndex $th outer iteration and $ \InnerIterIndex $th inner iteration, a prefix index $ \PrefixIndex $ (which gives the prefix $ \Prefix_{\PrefixIndex} = \{ \BlockElem_1, \dots, \BlockElem_{ \PrefixIndex } \} $) is
	\begin{itemize}
		\item \emph{bad} if it satisfies $ \PrefixIndex \notin \GoodPrefixIndices $, i.e., $ \Funcg(\Prefix_\PrefixIndex \, \mid \, \SetT_{\OuterIterIndex,\InnerIterIndex-1}) < (1-\ErrorA)\Threshold |\Prefix_\PrefixIndex| $.
		\item \emph{good} if it satisfies $ \PrefixIndex \in \GoodPrefixIndices $, i.e., $ \Funcg(\Prefix_\PrefixIndex \, \mid \, \SetT_{\OuterIterIndex,\InnerIterIndex-1}) \geq (1-\ErrorA)\Threshold |\Prefix_\PrefixIndex| $.
	\end{itemize}
	
	Finally, letting $\Prefix_{\PrefixElemIndex-1}=\{\BlockElem_1, \dots, \BlockElem_{\PrefixElemIndex-1} \}$, an element $ \BlockElem_{\PrefixElemIndex} \in \Prefix_{\PrefixIndex} = \{\BlockElem_1, \dots, \BlockElem_{\PrefixIndex} \} $ is
	\begin{itemize}
		\item \emph{bad} if it satisfies $ \Funcg( \BlockElem_{\PrefixElemIndex} \mid \SetT_{\OuterIterIndex, \InnerIterIndex-1} \cup \Prefix_{\PrefixElemIndex-1} ) < \Threshold $.
		\item \emph{good} if it satisfies $ \Funcg( \BlockElem_{\PrefixElemIndex} \mid \SetT_{\OuterIterIndex, \InnerIterIndex-1} \cup \Prefix_{\PrefixElemIndex-1} ) \geq \Threshold $.
	\end{itemize}
	
	To aid our analysis, we define three types of `failure' events that can occur in the execution of \textsc{\TBSa}. Also, to simplify our analysis, assume that for every $ \OuterIterIndex $th outer iteration and every $ \InnerIterIndex $th inner iteration we have $ |\SetT_{\OuterIterIndex, \InnerIterIndex}| < \Card $ since otherwise $ \SetT_{\OuterIterIndex, \InnerIterIndex} $ would be returned immediately (\cref{line:TBS return set tight card}).
	
	\begin{description}
		\item[\TBSFailure] This event occurs when \TBSa{} returns \textit{failure}. For this to occur, it is necessary that $ |\RemElems_{\OuterIterIndex}| > 0 $ by the final check of the \cref{line:TBS check rem elems} if-condition $|\RemElems_{\OuterIterIndex}| = 0$.
		
		\item[\OuterFailure] For the $ \OuterIterIndex $th outer iteration, this event occurs when $ |\RemElems_{\OuterIterIndex+1}| >  (1-\ReduProp)|\RemElems_{\OuterIterIndex}| $. This means that the $ \OuterIterIndex $th iteration failed to cause $ \ReduProp $ proportion of the elements in $ \RemElems_{\OuterIterIndex} $ to be filtered out of $ \RemElems_{\OuterIterIndex+1} $.
		
		\item[\InnerFailure] For the $ \OuterIterIndex $th outer iteration and the $ \InnerIterIndex $th inner iteration, this event occurs when one of the following two (disjoint) events occurs:
		\begin{description}
			\item[\SampleFailure] This event occurs when $ |\RemElemsDefer_{\OuterIterIndex, \InnerIterIndex}| > (1-\ReduProp)|\RemElems_{\OuterIterIndex}| $ and $ |\BlockFiltered| < \frac{\ErrorA}{4}|\RemElems_{\OuterIterIndex}| $. This means that an insufficient number of $\Elem \in \RemElemsDefer_{\OuterIterIndex}$ satisfying $\Funcg(\Elem \mid \SetT_{\OuterIterIndex, \InnerIterIndex-1}) \geq \Threshold$ are sampled in $\Block$. To simplify our analysis, this event occurs regardless of whether $ |\RemElemsDefer_{\OuterIterIndex, \InnerIterIndex}(\Prefix_{\PrefixIndexBest})| \leq (1-\frac{\ErrorA}{4})|\RemElemsDefer_{\OuterIterIndex, \InnerIterIndex}| $ or $  |\RemElemsDefer_{\OuterIterIndex, \InnerIterIndex}(\Prefix_{\PrefixIndexBest})| > (1-\frac{\ErrorA}{4})|\RemElemsDefer_{\OuterIterIndex, \InnerIterIndex}| $ occurs. That is, this event ignores whether or not adding the prefix $\Prefix_{\PrefixIndexBest}$ to $\SetT_{\OuterIterIndex, \InnerIterIndex - 1}$ causes $\frac{\ErrorA}{4}$ proportion of $\RemElemsDefer_{\OuterIterIndex, \InnerIterIndex}$ to be implicitly filtered out of $\RemElemsDefer_{\OuterIterIndex, \InnerIterIndex+1}$.
			
			\item[\PrefixFailure] This event occurs when $ |\RemElemsDefer_{\OuterIterIndex, \InnerIterIndex}| > (1-\ReduProp)|\RemElems_{\OuterIterIndex}| $, $ |\BlockFiltered| \geq \frac{\ErrorA}{4}|\RemElems_{\OuterIterIndex}| $ and $  |\RemElemsDefer_{\OuterIterIndex, \InnerIterIndex}(\Prefix_{\PrefixIndexBest})| > (1-\frac{\ErrorA}{4})|\RemElemsDefer_{\OuterIterIndex, \InnerIterIndex}| $. This means that a sufficient number of $\Elem \in \RemElemsDefer_{\OuterIterIndex}$ with $\Funcg(\Elem \mid \SetT_{\OuterIterIndex, \InnerIterIndex-1}) \geq \Threshold$ are sampled in $\Block$; nonetheless adding the prefix $\Prefix_{\PrefixIndexBest}$ to $\SetT_{\OuterIterIndex, \InnerIterIndex - 1}$ fails to cause $\frac{\ErrorA}{4}$ proportion of $\RemElemsDefer_{\OuterIterIndex, \InnerIterIndex}$ to be implicitly filtered out of $\RemElemsDefer_{\OuterIterIndex, \InnerIterIndex+1}$.
		\end{description}
		
		The \sampleFailure{} and \prefixFailure{} events require $ |\RemElemsDefer_{\OuterIterIndex, \InnerIterIndex}| > (1-\ReduProp)|\RemElems_{\OuterIterIndex}| $ since, if some $ \InnerIterIndex $th inner iteration had $ |\RemElemsDefer_{\OuterIterIndex, \InnerIterIndex}| \leq (1-\ReduProp)|\RemElems_{\OuterIterIndex}| $, then it is guaranteed that an \emph{\outerFailure} will \emph{not} occur since, by submodularity, $ \RemElems_{\OuterIterIndex+1} \subseteq \RemElemsDefer_{\OuterIterIndex, \InnerIterIndex} $ and so $ |\RemElems_{\OuterIterIndex+1}| \leq |\RemElemsDefer_{\OuterIterIndex, \InnerIterIndex}| \leq (1-\ReduProp)|\RemElems_{\OuterIterIndex}| $. Thus, it would no longer matter what happens in the remaining inner iterations within the $ \OuterIterIndex $th outer iteration.
	\end{description}
	
	We will prove $\Pr[\text{\SampleFailure}] \leq \frac{1}{4}$ in \cref{lemma:TBS sample failure} and $\Pr[\text{\PrefixFailure}] \leq \frac{1}{4}$ in \cref{lemma:TBS prefix failure}. It will then follow, by a union bound, that $ \Pr[\text{\InnerFailure}] \leq \frac{1}{2} $ in \cref{lemma:TBS inner failure}. We will use this to prove $ \Pr[\text{\OuterFailure}] \leq \frac{1}{2} $ in \cref{lemma:TBS outer failure}. In turn, we will finally use this to prove $ \Pr[\text{\TBSFailure}] \leq \frac{\ProbFail}{\MinSizeForSample} $ in \cref{lemma:TBS success probability}.
	
	\begin{lemma} \label{lemma:TBS sample failure}
		For every $ \OuterIterIndex $th outer iteration and $ \InnerIterIndex $th inner iteration, $ \Pr[\textnormal{\SampleFailure}] \leq \frac{1}{4} $.
	\end{lemma}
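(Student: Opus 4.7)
The plan is a short Markov's-inequality argument on the number of ``bad'' elements sampled into $\Block$, exploiting the slack built into the block size $|\Block| = \lceil \frac{\ErrorA}{4(1-4\ReduProp)}|\RemElems_{\OuterIterIndex}|\rceil$. Intuitively, under the conditioning in the definition of \SampleFailure{}, namely $|\RemElemsDefer_{\OuterIterIndex, \InnerIterIndex}| > (1-\ReduProp)|\RemElems_{\OuterIterIndex}|$, only a $\ReduProp$-fraction of $\RemElems_{\OuterIterIndex}$ consists of bad elements (those in $\RemElems_{\OuterIterIndex} \setminus \RemElemsDefer_{\OuterIterIndex, \InnerIterIndex}$), whereas the \SampleFailure{} threshold $\frac{\ErrorA}{4}|\RemElems_{\OuterIterIndex}|$ corresponds to allowing up to a $4\ReduProp$-fraction of $|\Block|$ to be bad. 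A factor-of-four gap between expectation and threshold is exactly what Markov needs to give $1/4$.

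First, I will translate \SampleFailure{} into a condition on the number of bad elements. Define $B = |\Block| - |\BlockFiltered| = |\Block \setminus \RemElemsDefer_{\OuterIterIndex, \InnerIterIndex}|$. Since the block size satisfies $|\Block| \geq \frac{\ErrorA}{4(1-4\ReduProp)}|\RemElems_{\OuterIterIndex}|$, a short rearrangement gives $\frac{\ErrorA}{4}|\RemElems_{\OuterIterIndex}| \leq (1-4\ReduProp)|\Block|$, so the \SampleFailure{} condition $|\BlockFiltered| < \frac{\ErrorA}{4}|\RemElems_{\OuterIterIndex}|$ implies $B > 4\ReduProp|\Block|$. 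Hence $\Pr[\textnormal{\SampleFailure}] \leq \Pr[B > 4\ReduProp|\Block|]$.

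Next, I will compute $\Expect[B]$ under the conditioning $|\RemElemsDefer_{\OuterIterIndex, \InnerIterIndex}| > (1-\ReduProp)|\RemElems_{\OuterIterIndex}|$. This conditioning is measurable with respect to quantities determined before $\Block$ is sampled, so given those quantities, $\Block$ is uniform over $|\Block|$-subsets of $\RemElems_{\OuterIterIndex}$ drawn without replacement. Each element of $\RemElems_{\OuterIterIndex}$ then lies in $\Block$ with probability $|\Block|/|\RemElems_{\OuterIterIndex}|$, so linearity of expectation over indicators for bad elements yields
\[
\Expect[B] \;=\; |\Block| \cdot \frac{|\RemElems_{\OuterIterIndex}| - |\RemElemsDefer_{\OuterIterIndex, \InnerIterIndex}|}{|\RemElems_{\OuterIterIndex}|} \;<\; \ReduProp |\Block|\,.
\]
Markov's inequality then gives $\Pr[B > 4\ReduProp|\Block|] \leq \Expect[B]/(4\ReduProp|\Block|) < 1/4$, which finishes the proof.

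There is no real technical obstacle here; the main structural point is that the constants in the definition of $|\Block|$ (namely the $\frac{1}{4(1-4\ReduProp)}$ factor) and in the \SampleFailure{} threshold ($\frac{\ErrorA}{4}$) are matched so that the gap between $\Expect[B]$ and the Markov threshold is exactly a factor of four. Consequently, a plain Markov bound suffices in place of the heavier Chernoff machinery (\cref{lemma:chernoff bounds}) together with the stochastic-domination reductions (\cref{lemma:1st lemma for bernoulli trials,lemma:2nd lemma for bernoulli trials}) that the companion claim $\Pr[\textnormal{\PrefixFailure}] \leq \frac{1}{4}$ will require.
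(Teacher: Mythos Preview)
Your proof is correct and is essentially the same argument as the paper's: both translate the \SampleFailure{} condition $|\BlockFiltered| < \frac{\ErrorA}{4}|\RemElems_{\OuterIterIndex}|$ into the statement that more than a $4\ReduProp$-fraction of $\Block$ is bad, note that under the conditioning $|\RemElemsDefer_{\OuterIterIndex, \InnerIterIndex}| > (1-\ReduProp)|\RemElems_{\OuterIterIndex}|$ the expected bad fraction is below $\ReduProp$, and apply Markov's inequality. The only difference is cosmetic: the paper applies Markov to the fraction $1 - |\BlockFiltered|/|\Block|$, whereas you apply it to the count $B = |\Block| - |\BlockFiltered|$. (As a minor aside, your closing remark that the \prefixFailure{} bound will need Chernoff is slightly off; the paper proves \cref{lemma:TBS prefix failure} via \cref{lemma:2nd lemma for bernoulli trials} together with Markov, reserving the Chernoff bounds for \cref{lemma:TBS outer failure,lemma:TBS success probability}.)
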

	
	\begin{proof}
		First, $ |\RemElemsDefer_{\OuterIterIndex, \InnerIterIndex}| > (1-\ReduProp)|\RemElems_{\OuterIterIndex}| $ is a condition for a \sampleFailure{}. We rewrite this into \cref{eqn:prop bad elements} below. \cref{eqn:prop bad elements} states that the proportion of $ \Elem \in \RemElems_{\OuterIterIndex} $ with $ \Funcg(\Elem \mid \SetT_{\OuterIterIndex, \InnerIterIndex-1}) < \Threshold $ is less than $\ReduProp$.
		\begin{align}
			1-\frac{|\RemElemsDefer_{\OuterIterIndex, \InnerIterIndex}|}{|\RemElems_{\OuterIterIndex}|} &< \ReduProp\,. \label{eqn:prop bad elements}
		\end{align}
		
		$ |\BlockFiltered| < \frac{\ErrorA}{4}|\RemElems_{\OuterIterIndex}| $ is also a condition for a \sampleFailure{}. We rewrite this into \cref{eqn:ratio filtered block} below. \cref{eqn:ratio filtered block} states that the proportion of $\Elem \in \Block$ with $ \Funcg(\Elem \mid \SetT_{\OuterIterIndex, \InnerIterIndex-1}) < \Threshold $ is more than $4\ReduProp$. Note that $4\ReduProp$ is a valid proportion since $\ReduProp=\ReduPropVal$ as assigned in \cref{line:TBS initialisation} of \TBSa{}.
		\begin{align}
			|\BlockFiltered| &< \frac{\ErrorA}{4}|\RemElems_{\OuterIterIndex}|\,, \notag \\
			\frac{|\BlockFiltered|}{|\Block|} &< \frac{\ErrorA}{4} \frac{|\RemElems_{\OuterIterIndex}|}{|\Block|} \notag \\
			&= \frac{\ErrorA |\RemElems_{\OuterIterIndex}|}{4 \left\lceil \frac{\ErrorA}{4(1-4\ReduProp)} |\RemElems_{\OuterIterIndex}| \right\rceil} &\text{size of $\Block$ (\cref{line:TBS sample block})} \notag \\
			&\leq \frac{\ErrorA |\RemElems_{\OuterIterIndex}|}{4 \cdot \frac{\ErrorA}{4(1-4\ReduProp)} |\RemElems_{\OuterIterIndex}|} \notag \\
			&= 1-4\ReduProp\,, \notag \\
			1 - \frac{|\BlockFiltered|}{|\Block|} &> 4\ReduProp\,. \label{eqn:ratio filtered block}
		\end{align}
		
		Since \cref{eqn:prop bad elements} and \cref{eqn:ratio filtered block} are both necessary conditions for a \sampleFailure{}, we can prove $\Pr[\text{\SampleFailure}] \leq \frac{1}{4}$ by assuming \cref{eqn:prop bad elements} holds and then proving
		\begin{align}
			\Pr\left[ 1 - \frac{|\BlockFiltered|}{|\Block|} > 4\ReduProp \right] \leq \frac{1}{4}\,. \notag
		\end{align}
		We prove this using Markov's inequality below, and the fact that $ \mathbb{E}[ 1 - \frac{|\BlockFiltered|}{|\Block|} ] = 1-\frac{|\RemElemsDefer_{\OuterIterIndex, \InnerIterIndex}|}{|\RemElems_{\OuterIterIndex}|} $ since $ \Block $ is a uniform-at-random sample of $ \RemElems_{\OuterIterIndex} $ (\cref{line:TBS sample block}).
		\begin{align}
			\Pr\left[ 1 - \frac{|\BlockFiltered|}{|\Block|} > 4\ReduProp \right] &< \frac{1}{4\ReduProp} \mathbb{E}\left[ 1 - \frac{|\BlockFiltered|}{|\Block|} \right] &\text{Markov's inequality} \notag \\
			&= \frac{1}{4\ReduProp} \left( 1-\frac{|\RemElemsDefer_{\OuterIterIndex, \InnerIterIndex}|}{|\RemElems_{\OuterIterIndex}|} \right) \notag \\
			&< \frac{1}{4\ReduProp} \ReduProp = \frac{1}{4}\,. &\text{\cref{eqn:prop bad elements}} \notag
		\end{align}
		Thus, $\Pr[ 1 - \frac{|\BlockFiltered|}{|\Block|} > 4\ReduProp ] \leq \frac{1}{4}$ and so $\Pr[\text{\SampleFailure}] \leq \frac{1}{4}$. 
	\end{proof}
	
	\begin{lemma} \label{lemma:TBS prefix failure}
		For every $ \OuterIterIndex $th outer iteration and $ \InnerIterIndex $th inner iteration, $ \Pr[\textnormal{\PrefixFailure}] \leq \frac{1}{4} $.
	\end{lemma}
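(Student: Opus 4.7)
The plan is to condition on the block sample $\Block$ (and hence on $\BlockFiltered$), leaving only the uniform random permutation of $\BlockFiltered$ as randomness. The core observation is that whenever at most an $\ErrorA$ fraction of $\{\BlockElem_1,\dots,\BlockElem_{\PrefixIndexBest}\}$ are bad elements, the telescoping identity
\begin{align*}
\Funcg(\Prefix_{\PrefixIndexBest} \mid \SetT_{\OuterIterIndex,\InnerIterIndex-1}) = \sum_{\PrefixElemIndex=1}^{\PrefixIndexBest} \Funcg\bigl(\BlockElem_\PrefixElemIndex \mid \SetT_{\OuterIterIndex,\InnerIterIndex-1} \cup \{\BlockElem_1,\dots,\BlockElem_{\PrefixElemIndex-1}\}\bigr),
\end{align*}
combined with non-negativity at bad positions and the $\Threshold$-lower bound at good ones, forces $\Funcg(\Prefix_{\PrefixIndexBest} \mid \SetT_{\OuterIterIndex,\InnerIterIndex-1}) \geq (1-\ErrorA)\Threshold|\Prefix_{\PrefixIndexBest}|$, placing $\PrefixIndexBest \in \GoodPrefixIndices$. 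This contradicts the if-branch assignment in \cref{line:TBS assign ibest succ}, where $\PrefixIndexBest$ is chosen to lie strictly above every good index; the else-branch in \cref{line:TBS assign ibest max} instead sets $\PrefixIndexBest=\MaxPrefixLength$ (automatically good) and, under the standing assumption $|\SetT_{\OuterIterIndex,\InnerIterIndex}|<\CardA$, cannot trigger \prefixFailure{}. Hence it suffices to show the bad-fraction tail event has probability at most $\tfrac{1}{4}$.

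For the tail bound, I would first propagate the \prefixFailure{} assumption $|\RemElemsDefer_{\OuterIterIndex,\InnerIterIndex}(\Prefix_{\PrefixIndexBest})| > (1-\tfrac{\ErrorA}{4})|\RemElemsDefer_{\OuterIterIndex,\InnerIterIndex}|$ via submodularity to every earlier prefix, yielding $|\RemElemsDefer_{\OuterIterIndex,\InnerIterIndex}(\Prefix_{\PrefixElemIndex-1})| \geq (1-\tfrac{\ErrorA}{4})|\RemElemsDefer_{\OuterIterIndex,\InnerIterIndex}|$ for all $\PrefixElemIndex\leq\PrefixIndexBest$. The non-surviving elements of $\BlockFiltered$ at step $\PrefixElemIndex$ must lie in $\RemElemsDefer_{\OuterIterIndex,\InnerIterIndex}\setminus\RemElemsDefer_{\OuterIterIndex,\InnerIterIndex}(\Prefix_{\PrefixElemIndex-1})$, a set of size at most $\tfrac{\ErrorA}{4}|\RemElemsDefer_{\OuterIterIndex,\InnerIterIndex}|$, and together with the other \prefixFailure{} condition $|\BlockFiltered|\geq\tfrac{\ErrorA}{4}|\RemElems_{\OuterIterIndex}|$ a direct counting argument shows that the history-conditional probability of $\BlockElem_\PrefixElemIndex$ being bad is uniformly $O(\ErrorA)$. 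I would then invoke \cref{lemma:2nd lemma for bernoulli trials} with $\PrefixIndexBest$ in the role of the random stopping index $\RandVarMaxIndex$ to swap the dependent indicators for an independent Bernoulli sequence of the same per-step success probability, and finally apply the Chernoff bound \eqref{eqn:chernoff bound greater}, calibrated via the constants in \cref{line:TBS initialisation}, to obtain the required $\tfrac{1}{4}$.

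The hard part will be the uniform bound on the conditional bad-position probability: both the numerator (non-surviving elements still in $\BlockFiltered \setminus \{\BlockElem_1,\dots,\BlockElem_{\PrefixElemIndex-1}\}$) and the denominator (number of unused positions in the permutation) vary with the history, so care is needed to show an $O(\ErrorA)$ bound that holds throughout the entire random length $\PrefixIndexBest$. The geometric spacing of $\PrefixIndices$ in \cref{line:TBS prefixes} confines $\PrefixIndexBest$ to within a $(1+\ErrorA)$-factor of the last good index, which keeps $\PrefixIndexBest$ safely below $|\BlockFiltered|$ and lets the denominator remain comparable to $|\BlockFiltered|$ throughout; threading this together with the constants in the block size formula of \cref{line:TBS sample block} to close the Chernoff bound to exactly $\tfrac{1}{4}$ is the principal technical step.
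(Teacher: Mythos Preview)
Your reduction in the first paragraph is sound and matches the paper's logic: a \prefixFailure{} forces (under the standing assumption $|\SetT_{\OuterIterIndex,\InnerIterIndex}|<\CardA$) the if-branch, where $\PrefixIndexBest\notin\GoodPrefixIndices$, and hence more than an $\ErrorA$ fraction of $\Prefix_{\PrefixIndexBest}$ must be bad. The gap is in how you propose to bound that tail event. A Chernoff bound cannot close the argument to $\tfrac14$: the random prefix length $\PrefixIndexBest$ may be any constant (nothing in the algorithm forces it to be $\Omega(1/\ErrorA)$), and with per-step bad probability $\Theta(\ErrorA)$ the Chernoff exponent is $\Theta(\ErrorA\cdot\PrefixIndexBest)$, which can be arbitrarily small. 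The paper instead applies Markov's inequality after the coupling of \cref{lemma:2nd lemma for bernoulli trials}: with i.i.d.\ $\IndepTrial_\PrefixElemIndex\sim\mathrm{Bernoulli}(1-\tfrac{\ErrorA}{4})$ one has $\mathbb{E}[\#\text{bad}]=\tfrac{\ErrorA}{4}\,\PrefixIndex$, so $\Pr[\#\text{bad}>\ErrorA\,\PrefixIndex]\le\tfrac14$ for \emph{every} value of $\PrefixIndex$. The final step must be Markov, not Chernoff.

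There is a second issue with taking $\PrefixIndexBest$ as the stopping index $\RandVarMaxIndex$ in \cref{lemma:2nd lemma for bernoulli trials}. You obtain the bound $|\RemElemsDefer_{\OuterIterIndex,\InnerIterIndex}(\Prefix_{\PrefixElemIndex-1})|>(1-\tfrac{\ErrorA}{4})|\RemElemsDefer_{\OuterIterIndex,\InnerIterIndex}|$ by propagating the \prefixFailure{} condition backward via submodularity, but this holds only on the failure event, not on the event $\{\PrefixElemIndex\le\PrefixIndexBest\}$ as the lemma's hypothesis requires. The paper sidesteps both this and your ``hard part'' by introducing $\PrefixIndexReq=\min\{\PrefixIndex:|\RemElemsDefer_{\OuterIterIndex,\InnerIterIndex}(\Prefix_\PrefixIndex)|\le(1-\tfrac{\ErrorA}{4})|\RemElemsDefer_{\OuterIterIndex,\InnerIterIndex}|\}$ and using $\PrefixIndexMax=\max\{\PrefixIndex\in\PrefixIndices:\PrefixIndex<\PrefixIndexReq\}$ as the stopping index; then $\PrefixElemIndex\le\PrefixIndexMax$ \emph{deterministically} guarantees the proportion bound on $\Prefix_{\PrefixElemIndex-1}$. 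Moreover, the paper does not condition on $\BlockFiltered$: it treats each $\BlockElem_\PrefixElemIndex$ as a fresh uniform draw from $\RemElemsDefer_{\OuterIterIndex,\InnerIterIndex}\setminus\Prefix_{\PrefixElemIndex-1}$ (the combined sample--filter--permute process), so the conditional good-probability is exactly the surviving proportion in $\RemElemsDefer_{\OuterIterIndex,\InnerIterIndex}$, which exceeds $1-\tfrac{\ErrorA}{4}$ by definition of $\PrefixIndexReq$---no delicate numerator/denominator bookkeeping within $\BlockFiltered$ is needed.
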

	
	\begin{proof}
		Since $ |\BlockFiltered| \geq \frac{\ErrorA}{4}|\RemElems_{\OuterIterIndex}| $ and $  |\RemElemsDefer_{\OuterIterIndex, \InnerIterIndex}(\Prefix_{\PrefixIndexBest})| > (1-\frac{\ErrorA}{4})|\RemElemsDefer_{\OuterIterIndex, \InnerIterIndex}| $ are both necessary conditions for a \prefixFailure{}, it suffices to prove $ \Pr[\text{\PrefixFailure}] \leq \frac{1}{4}$ by assuming $ |\BlockFiltered| \geq \frac{\ErrorA}{4}|\RemElems_{\OuterIterIndex}| $ and then proving
		\begin{align}
			\Pr\left[ |\RemElemsDefer_{\OuterIterIndex, \InnerIterIndex}(\Prefix_{\PrefixIndexBest})| > \left(1-\frac{\ErrorA}{4}\right)|\RemElemsDefer_{\OuterIterIndex, \InnerIterIndex}| \right] \leq \frac{1}{4}\,. \notag
		\end{align}
		
		We begin by proving the following claim.
		\begin{claim} \label{claim:prob ub num good elements}
			Assuming $ |\BlockFiltered| \geq \frac{\ErrorA}{4}|\RemElems_{\OuterIterIndex}| $, it holds that
			\begin{align}
				&\Pr\left[ |\RemElemsDefer_{\OuterIterIndex, \InnerIterIndex}(\Prefix_{\PrefixIndexBest})| > \left(1-\frac{\ErrorA}{4}\right)|\RemElemsDefer_{\OuterIterIndex, \InnerIterIndex}| \right] \notag \\
				&\leq \Pr\left[\text{$ < (1-\ErrorA) |\Prefix_{\PrefixIndexMax}| $ elements in $ \Prefix_{\PrefixIndexMax} $ are good}\right]\,. \notag
			\end{align}
		\end{claim}
		
		\begin{proof}
			First, let $ \PrefixIndexReq = \min \{ \PrefixIndex \in \mathbb{N} : |\RemElemsDefer_{\OuterIterIndex, \InnerIterIndex}(\Prefix_{\PrefixIndex})| \leq (1-\frac{\ErrorA}{4})|\RemElemsDefer_{\OuterIterIndex, \InnerIterIndex}| \} $. $ \PrefixIndexReq $ must exist since we can always select $\BlockFiltered$ as a prefix (\cref{line:TBS maxindex,line:TBS prefixes}) and we assumed that $ |\BlockFiltered| \geq \frac{\ErrorA}{4}|\RemElems_{\OuterIterIndex}| $. This means that $ |\BlockFiltered| \geq \frac{\ErrorA}{4}|\RemElems_{\OuterIterIndex}| \geq \frac{\ErrorA}{4}|\RemElems_{\OuterIterIndex, \InnerIterIndex}| $, ensuring that $ |\RemElemsDefer_{\OuterIterIndex, \InnerIterIndex}(\BlockFiltered)| \leq (1-\frac{\ErrorA}{4})|\RemElemsDefer_{\OuterIterIndex, \InnerIterIndex}| $.
			
			Observe that $ |\RemElemsDefer_{\OuterIterIndex, \InnerIterIndex}(\Prefix_{\PrefixIndexBest})| > (1-\frac{\ErrorA}{4})|\RemElemsDefer_{\OuterIterIndex, \InnerIterIndex}| $ implies that $ \PrefixIndexBest < \PrefixIndexReq $ since otherwise, for $ \PrefixIndex \geq \PrefixIndexReq $, we would have
			\begin{align}
				|\RemElemsDefer_{\OuterIterIndex, \InnerIterIndex}(\Prefix_{\PrefixIndex})| &\leq |\RemElemsDefer_{\OuterIterIndex, \InnerIterIndex}(\Prefix_{\PrefixIndexReq})| &\text{submodularity and $\Prefix_{\PrefixIndexReq} \subseteq \Prefix_{\PrefixIndex}$} \notag \\
				&\leq \left(1-\frac{\ErrorA}{4}\right)|\RemElemsDefer_{\OuterIterIndex, \InnerIterIndex}|\,. &\text{definition of $\PrefixIndexReq$} \notag
			\end{align}		
				
			Now let $ \PrefixIndexMax = \max\{ \PrefixIndex \in \PrefixIndices : \PrefixIndex < \PrefixIndexReq \} $, where $ \PrefixIndices $ is the set of all available prefix indices (\cref{line:TBS prefixes}). $ \PrefixIndexBest < \PrefixIndexReq $ implies that $ \PrefixIndexMax $ is \emph{bad}, i.e., the prefix $ \Prefix_{\PrefixIndexMax} $ satisfies
			\begin{align}
				\Funcg( \Prefix_{\PrefixIndexMax} \mid \SetT_{\OuterIterIndex, \InnerIterIndex-1} ) < (1-\ErrorA)\Threshold|\Prefix_{\PrefixIndexMax}|\,. \notag
			\end{align}
			This is because if $ \PrefixIndexMax $ was \emph{good}, then either \cref{line:TBS assign ibest succ} would assign the successor of $ \PrefixIndexMax $ to $ \PrefixIndexBest $, or \cref{line:TBS assign ibest max} would assign $ \MaxPrefixLength $ to $ \PrefixIndexBest $ where $\MaxPrefixLength = |\BlockFiltered| \geq \frac{\ErrorA}{4}|\RemElemsDefer_{\OuterIterIndex, \InnerIterIndex}| $, both of which would make $ \PrefixIndexBest \geq \PrefixIndexReq $.
			
			Moreover, $ \PrefixIndexMax $ being bad implies that  $<(1-\ErrorA) |\Prefix_{\PrefixIndexMax}|$ elements in $\Prefix_{\PrefixIndexMax}$ are \emph{good}, as otherwise $\Funcg( \Prefix_{\PrefixIndexMax} \mid \SetT_{\OuterIterIndex, \InnerIterIndex-1} ) < (1-\ErrorA)\Threshold|\Prefix_{\PrefixIndexMax}|$ would not hold.
			
			So from the above discussion, we can conclude that
			\begin{align}
				&\Pr\left[ |\RemElemsDefer_{\OuterIterIndex, \InnerIterIndex}(\Prefix_{\PrefixIndexBest})| > \left(1-\frac{\ErrorA}{4}\right)|\RemElemsDefer_{\OuterIterIndex, \InnerIterIndex}| \right] \notag \\
				&\leq \Pr[\PrefixIndexBest < \PrefixIndexReq] \notag \\
				&\leq \Pr[\text{$ \PrefixIndexMax $ is bad}] \notag \\
				&\leq \Pr\left[\text{$ < (1-\ErrorA) |\Prefix_{\PrefixIndexMax}| $ elements in $ \Prefix_{\PrefixIndexMax} $ are good}\right]\,. \notag
			\end{align} 
		\end{proof}
		
		We will show that $ \Pr[\text{$ < (1-\ErrorA) |\Prefix_{\PrefixIndexMax}| $ elements in $ \Prefix_{\PrefixIndexMax} $ are good}] \leq \frac{1}{4} $ in \cref{claim:prob ub 1/4}. Before this, we make an observation: the process of obtaining each $\BlockElem_{\PrefixElemIndex} \in \Prefix_{\PrefixIndexMax} = \{\BlockElem_1, \dots, \BlockElem_{\PrefixIndexMax} \} $ can be treated as sampling $\BlockElem_{\PrefixElemIndex}$ uniformly at random from $\RemElemsDefer_{\OuterIterIndex, \InnerIterIndex} \setminus \Prefix_{\PrefixElemIndex-1}$. This is because each $\BlockElem_{ \PrefixElemIndex }$ belongs in $\BlockFiltered = \{ \Elem \in \Block : \Funcg( \Elem \mid \SetT_{\OuterIterIndex,\InnerIterIndex-1}) \geq \Threshold \}$, which is a uniform-at-random sample of $\RemElemsDefer_{\OuterIterIndex} \setminus \Prefix_{\PrefixElemIndex-1}$ with rejection (\crefrange{line:TBS sample block}{line:TBS filtered block permutation}). This means $\BlockElem_{\PrefixElemIndex}$ is a uniform-at-random sample from the subset of $\RemElemsDefer_{\OuterIterIndex} \setminus \Prefix_{\PrefixElemIndex-1}$ that satisfies $\Funcg(\Elem \mid \SetT_{\OuterIterIndex, \InnerIterIndex-1}) \geq \Threshold$, which is precisely the set $\RemElemsDefer_{\OuterIterIndex, \InnerIterIndex} \setminus \Prefix_{\PrefixElemIndex-1}$.
		
		\begin{claim} \label{claim:prob ub 1/4}
			$ \Pr[\text{$ < (1-\ErrorA) |\Prefix_{\PrefixIndexMax}| $ elements in $ \Prefix_{\PrefixIndexMax} $ are good}] \leq \frac{1}{4} $.
		\end{claim}
		
		\begin{proof}
			We denote the process of sampling each $ \BlockElem_{\PrefixElemIndex} $ from $ \RemElemsDefer_{\OuterIterIndex, \InnerIterIndex} \setminus \{\BlockElem_1, \dots, \BlockElem_{\PrefixElemIndex-1} \} $ as a Bernoulli trial $ \DepTrial_{\PrefixElemIndex} $ \emph{dependent} on the outcomes of the previous trials, where $ \DepTrial_{\PrefixElemIndex} = 0 $ denotes $ \BlockElem_{\PrefixElemIndex} $ is \emph{bad} and $ \DepTrial_{\PrefixElemIndex} = 1 $ denotes $ \BlockElem_{\PrefixElemIndex} $ is \emph{good}.
			
			Observe that since $ \PrefixElemIndex \leq \PrefixIndexMax < \PrefixIndexReq $, we have that $ > (1-\frac{\ErrorA}{4}) $ proportion of $\BlockElem_{\PrefixElemIndex} \in \RemElemsDefer_{\OuterIterIndex,\InnerIterIndex} \setminus \Prefix_{\PrefixElemIndex-1} $ are \emph{good}. Thus, in terms of the Bernoulli trials $ \DepTrial_{\PrefixElemIndex} $, we have
			\begin{align}
				\Pr[\DepTrial_{\PrefixElemIndex} = 1 \mid \DepTrial_{1} = \depTrial_{1}, \dots, \DepTrial_{\PrefixElemIndex-1} = \depTrial_{\PrefixElemIndex-1}, \PrefixElemIndex \leq \PrefixIndexMax ] > 1 - \frac{\ErrorA}{4}\,. \notag
			\end{align}
			
			Now let $ \IndepTrial_{\PrefixElemIndex} $ be an \emph{independent} Bernoulli Trial, where $ \Pr[\IndepTrial_{\PrefixElemIndex} = 1] = 1 - \frac{\ErrorA}{4} $ and $ \Pr[\IndepTrial_{\PrefixElemIndex} = 0] = \frac{\ErrorA}{4} $. Below, we use \cref{lemma:2nd lemma for bernoulli trials} to relate the dependent and independent Bernoulli trials. From there, we prove \cref{claim:prob ub 1/4} via Markov's inequality.
			\begin{align}
				&\Pr\left[\text{$ < (1-\ErrorA) |\Prefix_{\PrefixIndexMax}| $ elements in $ \Prefix_{\PrefixIndexMax} $ are good}\right] \notag \\
				&= \Pr\left[ \sum_{\PrefixElemIndex=1}^{\PrefixIndexMax} \DepTrial_{\PrefixElemIndex} < (1-\ErrorA)\PrefixIndexMax \right] \notag \\
				&\leq \Pr\left[ \sum_{\PrefixElemIndex=1}^{\PrefixIndexMax} \IndepTrial_{\PrefixElemIndex} < (1-\ErrorA)\PrefixIndexMax \right] &\text{\cref{lemma:2nd lemma for bernoulli trials}} \notag \\
				&= \Pr\left[ \PrefixIndexMax - \sum_{\PrefixElemIndex=1}^{\PrefixIndexMax} \IndepTrial_{\PrefixElemIndex} > \ErrorA\PrefixIndexMax \right] \notag \\
				&= \sum_{\PrefixIndex} \Pr[\PrefixIndexMax = \PrefixIndex] \cdot \Pr\left[ \PrefixIndex - \sum_{\PrefixElemIndex=1}^{\PrefixIndex} \IndepTrial_{\PrefixElemIndex} > \ErrorA\PrefixIndex \mid \PrefixIndexMax = \PrefixIndex \right] &\text{Law of Total Probability} \notag \\
				&\leq \sum_{\PrefixIndex} \Pr[\PrefixIndexMax = \PrefixIndex] \cdot \frac{1}{\ErrorA\PrefixIndex}\mathbb{E}\left[ \PrefixIndex - \sum_{\PrefixElemIndex=1}^{\PrefixIndex} \IndepTrial_{\PrefixElemIndex} \right] &\text{Markov's inequality} \notag \\
				&= \sum_{\PrefixIndex} \Pr[\PrefixIndexMax = \PrefixIndex] \cdot \frac{1}{\ErrorA\PrefixIndex}\left(\PrefixIndex - \left(1-\frac{\ErrorA}{4}\right)\PrefixIndex \right) &\text{$ \mathbb{E}[\IndepTrial_{\PrefixElemIndex}] = 1 - \frac{\ErrorA}{4} $} \notag \\
				&= \sum_{\PrefixIndex} \Pr[\PrefixIndexMax = \PrefixIndex] \cdot \frac{1}{4} = \frac{1}{4}\,. \notag
			\end{align} 
		\end{proof}
		
		By linking \cref{claim:prob ub num good elements,claim:prob ub 1/4}, we have $\Pr[ |\RemElemsDefer_{\OuterIterIndex, \InnerIterIndex}(\Prefix_{\PrefixIndexBest})| > (1-\frac{\ErrorA}{4})|\RemElemsDefer_{\OuterIterIndex, \InnerIterIndex}| ] \leq \frac{1}{4}$, and so $\Pr[\textnormal{\PrefixFailure}] \leq \frac{1}{4}$. 
	\end{proof}
	
	\begin{lemma} \label{lemma:TBS inner failure}
		For every $ \OuterIterIndex $th outer iteration and $ \InnerIterIndex $th inner iteration, $ \Pr[\textnormal{\InnerFailure}] \leq \frac{1}{2} $.
	\end{lemma}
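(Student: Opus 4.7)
The plan is a short union-bound argument that leverages the two preceding lemmas. By definition, an \InnerFailure{} occurs exactly when either a \SampleFailure{} or a \PrefixFailure{} occurs, and these two events are already specified as disjoint in the setup (the \SampleFailure{} covers the regime $|\BlockFiltered| < \frac{\ErrorA}{4}|\RemElems_{\OuterIterIndex}|$, while the \PrefixFailure{} covers the regime $|\BlockFiltered| \geq \frac{\ErrorA}{4}|\RemElems_{\OuterIterIndex}|$ with the additional failure on the prefix side). So I would simply write $\Pr[\text{\InnerFailure}] = \Pr[\text{\SampleFailure}] + \Pr[\text{\PrefixFailure}]$.

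Then I would invoke \cref{lemma:TBS sample failure} to bound $\Pr[\text{\SampleFailure}] \leq \frac{1}{4}$ and \cref{lemma:TBS prefix failure} to bound $\Pr[\text{\PrefixFailure}] \leq \frac{1}{4}$, giving the desired $\Pr[\text{\InnerFailure}] \leq \frac{1}{2}$. Even if one prefers to treat the events as not strictly disjoint (e.g., because both definitions share the common precondition $|\RemElemsDefer_{\OuterIterIndex, \InnerIterIndex}| > (1-\ReduProp)|\RemElems_{\OuterIterIndex}|$), the standard union bound $\Pr[A \cup B] \leq \Pr[A] + \Pr[B]$ yields the same numerical bound, so the argument is robust either way.

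There is no real obstacle here, since all the technical work was already done in \cref{lemma:TBS sample failure} and \cref{lemma:TBS prefix failure}. The only thing to be careful about is to note that the definition of \InnerFailure{} explicitly presents the two constituent events as disjoint, which justifies the clean union-bound sum; I would briefly remark on this disjointness (tracking the dichotomy on $|\BlockFiltered|$) before summing the two $\frac{1}{4}$ bounds to reach $\frac{1}{2}$.
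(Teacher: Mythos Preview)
Your proposal is correct and matches the paper's proof essentially verbatim: the paper also observes that an \InnerFailure{} requires either a \SampleFailure{} or a \PrefixFailure{}, applies the union bound, and invokes \cref{lemma:TBS sample failure,lemma:TBS prefix failure} to sum $\frac{1}{4}+\frac{1}{4}=\frac{1}{2}$.
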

	
	\begin{proof}
		Recall that a necessary condition for an \innerFailure{} is that either a \sampleFailure{} or a \prefixFailure{} occurs. Thus, by a union bound we have
		\begin{align}
			\Pr[\text{\InnerFailure}] & \leq \Pr[\text{\SampleFailure}] + \Pr[\text{\PrefixFailure}] \notag \\
			& \leq \frac{1}{4} + \frac{1}{4} = \frac{1}{2}\,. &\text{\cref{lemma:TBS sample failure,lemma:TBS prefix failure}} \notag
		\end{align} 
	\end{proof}
	
	\begin{lemma} \label{lemma:TBS outer failure}
		For every $ \OuterIterIndex $th outer iteration, $ \Pr[\textnormal{\OuterFailure}] \leq \frac{1}{2} $.
	\end{lemma}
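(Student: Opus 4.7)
The plan is to reduce the \outerFailure{} bound to a concentration argument over the $\NumBlocks$ inner iterations within a fixed $\OuterIterIndex$th outer iteration. The starting observation is a dichotomy: whenever no \innerFailure{} occurs in the $\InnerIterIndex$th inner iteration, one of two things holds. Either (i) $|\RemElemsDefer_{\OuterIterIndex, \InnerIterIndex}| \leq (1-\ReduProp)|\RemElems_{\OuterIterIndex}|$ already, in which case submodularity implies $|\RemElems_{\OuterIterIndex+1}| \leq |\RemElemsDefer_{\OuterIterIndex, \NumBlocks+1}| \leq |\RemElemsDefer_{\OuterIterIndex, \InnerIterIndex}| \leq (1-\ReduProp)|\RemElems_{\OuterIterIndex}|$ and the \outerFailure{} event is ruled out; or (ii) since neither a \sampleFailure{} nor a \prefixFailure{} occurred, $|\RemElemsDefer_{\OuterIterIndex, \InnerIterIndex+1}| = |\RemElemsDefer_{\OuterIterIndex, \InnerIterIndex}(\Prefix_{\PrefixIndexBest})| \leq (1-\tfrac{\ErrorA}{4})|\RemElemsDefer_{\OuterIterIndex, \InnerIterIndex}|$. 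In particular, when \OuterFailure{} occurs, case (i) must fail at every inner iteration, so every such \innerFailure{} event is well-defined in the sense required by its definition.

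Next, I would define Bernoulli trials $\DepTrial_\InnerIterIndex$, with $\DepTrial_\InnerIterIndex = 1$ exactly when no \innerFailure{} occurs at the $\InnerIterIndex$th inner iteration; by \cref{lemma:TBS inner failure} we have $\Pr[\DepTrial_\InnerIterIndex = 1 \mid \DepTrial_1 = \depTrial_1, \dots, \DepTrial_{\InnerIterIndex-1} = \depTrial_{\InnerIterIndex-1}] \geq \tfrac{1}{2}$ for every history. Chaining the contractions from case (ii) over the successful trials, together with the monotonicity $|\RemElemsDefer_{\OuterIterIndex, \InnerIterIndex+1}| \leq |\RemElemsDefer_{\OuterIterIndex, \InnerIterIndex}|$ (submodularity) across the failed trials, shows that if $t$ of the $\NumBlocks$ trials succeed, then $|\RemElemsDefer_{\OuterIterIndex, \NumBlocks+1}| \leq (1-\tfrac{\ErrorA}{4})^t |\RemElems_{\OuterIterIndex}|$. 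Using $\log(1-x) \leq -x$, this drops below $(1-\ReduProp)|\RemElems_{\OuterIterIndex}|$ as soon as $t \geq t^{*} \coloneqq \bigl\lceil \tfrac{4}{\ErrorA}\log\bigl(\tfrac{1}{1-\ReduProp}\bigr)\bigr\rceil$, which again contradicts \OuterFailure{}. Hence \OuterFailure{} forces $\sum_{\InnerIterIndex=1}^{\NumBlocks} \DepTrial_\InnerIterIndex \leq t^{*} - 1$.

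Finally, I would invoke \cref{lemma:1st lemma for bernoulli trials} to pass to independent Bernoulli$(\tfrac{1}{2})$ trials $\IndepTrial_\InnerIterIndex$, so
\[
\Pr[\textnormal{\OuterFailure}] \;\leq\; \Pr\!\left[\sum_{\InnerIterIndex=1}^{\NumBlocks} \IndepTrial_\InnerIterIndex \leq t^{*} - 1\right],
\]
and bound the right-hand side by the Chernoff inequality \eqref{eqn:chernoff bound less} with $\Mean = \NumBlocks/2$ and $\Dev = 1 - (2t^{*}-2)/\NumBlocks$. The main obstacle is the constant bookkeeping: the value $\NumBlocks = \lceil 4(1+\tfrac{4}{\ErrorA})\log(\tfrac{2}{1-\ReduProp})\rceil$ is calibrated so that the gap $\Mean - t^{*}$ is a constant fraction of $\Mean$ and the exponent $-\Dev^2 \Mean/2$ is at most $-\log 2$, yielding the desired $\Pr[\textnormal{\OuterFailure}] \leq \tfrac{1}{2}$. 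Verifying this numerical bound cleanly---in particular, accounting for the ceilings and the factor-of-two difference between the $\log(\tfrac{2}{1-\ReduProp})$ in $\NumBlocks$ and the $\log(\tfrac{1}{1-\ReduProp})$ in $t^{*}$, and using $\ErrorA < 1$---is the only genuinely computational step; everything else is immediate from the dichotomy and the reduction to independent trials.
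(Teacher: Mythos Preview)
Your proposal is correct and follows essentially the same approach as the paper's proof: define per-inner-iteration Bernoulli trials whose success forces a $(1-\tfrac{\ErrorA}{4})$ contraction of the implicitly remaining set, argue that an \outerFailure{} deterministically caps the number of successes, then pass to independent $\mathrm{Bernoulli}(\tfrac{1}{2})$ trials via \cref{lemma:1st lemma for bernoulli trials} and finish with the Chernoff bound~\eqref{eqn:chernoff bound less}. The only cosmetic differences are that the paper defines $\DepTrial_\InnerIterIndex = 1$ directly via the contraction condition $|\RemElemsDefer_{\OuterIterIndex,\InnerIterIndex}(\Prefix_{\PrefixIndexBest})| \leq (1-\tfrac{\ErrorA}{4})|\RemElemsDefer_{\OuterIterIndex,\InnerIterIndex}|$ (or $|\RemElemsDefer_{\OuterIterIndex,\InnerIterIndex}| \leq (1-\ReduProp)|\RemElems_{\OuterIterIndex}|$) rather than via ``no \innerFailure{},'' and it plugs in the explicit Chernoff parameters $\Dev = \tfrac{\ErrorA+2}{\ErrorA+4}$, $\Mean = 2(1+\tfrac{4}{\ErrorA})\log\!\bigl(\tfrac{2}{1-\ReduProp}\bigr)$ to get the clean exponent $-\tfrac{(\ErrorA+2)^2}{\ErrorA(\ErrorA+4)}\log\!\bigl(\tfrac{2}{1-\ReduProp}\bigr) \leq -\log 2$, which avoids the ceiling bookkeeping you flagged.
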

	
	\begin{proof}
		Let $ \DepTrial_\InnerIterIndex $ be a Bernoulli trial \emph{dependent} on the outcomes of previous trials, where
		\begin{itemize}
			\item $ \DepTrial_\InnerIterIndex = 0 $ denotes $ |\RemElemsDefer_{\OuterIterIndex, \InnerIterIndex}(\Prefix_{\PrefixIndexBest})| > (1-\frac{\ErrorA}{4})|\RemElemsDefer_{\OuterIterIndex, \InnerIterIndex}| $ and $ |\RemElemsDefer_{\OuterIterIndex, \InnerIterIndex}| > (1-\ReduProp)|\RemElems_{\OuterIterIndex}| $.
			\item $ \DepTrial_\InnerIterIndex = 1 $ denotes $ |\RemElemsDefer_{\OuterIterIndex, \InnerIterIndex}(\Prefix_{\PrefixIndexBest})| \leq (1-\frac{\ErrorA}{4})|\RemElemsDefer_{\OuterIterIndex, \InnerIterIndex}| $ or $ |\RemElemsDefer_{\OuterIterIndex, \InnerIterIndex}| \leq (1-\ReduProp)|\RemElems_{\OuterIterIndex}| $.
		\end{itemize}
		
		That is, $ \DepTrial_\InnerIterIndex $ indicates whether the $ \InnerIterIndex $th inner iteration successfully caused $ \geq \frac{\ErrorA}{4} $ proportion of elements in $ \RemElemsDefer_{\OuterIterIndex, \InnerIterIndex} $ to be implicitly filtered out from $ \RemElemsDefer_{\OuterIterIndex, \InnerIterIndex+1} $. Observe that a necessary condition for $ \DepTrial_\InnerIterIndex = 0 $ is an \innerFailure{} since $ \DepTrial_\InnerIterIndex = 0 $ implies either a \prefixFailure{} when $ |\BlockFiltered| \geq \frac{\ErrorA}{4}|\RemElems_{\OuterIterIndex}| $, or a \sampleFailure{} when $ |\BlockFiltered| < \frac{\ErrorA}{4}|\RemElems_{\OuterIterIndex}| $.
		\begin{align}
			&\Pr[\DepTrial_{\InnerIterIndex} = 1 \mid \DepTrial_{1} = \depTrial_{1}, \dots, \DepTrial_{\InnerIterIndex-1} = \depTrial_{\InnerIterIndex-1}] \notag \\
			& \geq 1 - \Pr[\text{\InnerFailure}] \geq \frac{1}{2}\,. &\text{\cref{lemma:TBS inner failure}} \notag
		\end{align}
		
		Now recall that, by definition, an \outerFailure{} occurs in the $ \OuterIterIndex $th outer iteration when $ |\RemElems_{\OuterIterIndex+1}| >  (1-\ReduProp)|\RemElems_{\OuterIterIndex}| $. So a necessary condition for an \outerFailure{} to occur is that an insufficient number of inner iterations $ \InnerIterIndex $ have $ |\RemElemsDefer_{\OuterIterIndex, \InnerIterIndex}(\Prefix_{\PrefixIndexBest})| \leq (1-\frac{\ErrorA}{4})|\RemElemsDefer_{\OuterIterIndex, \InnerIterIndex}| $. In terms of the Bernoulli trials $\DepTrial_{\InnerIterIndex}$, the condition is given by $( 1 - \frac{\ErrorA}{4} )^{\sum_{\InnerIterIndex=1}^{\NumBlocks} \DepTrial_\InnerIterIndex} > 1-\ReduProp$, which we rewrite into \cref{eqn:num inner for failure} below.
		\begin{align}
			\sum_{\InnerIterIndex=1}^{\NumBlocks} \DepTrial_\InnerIterIndex < \left\lceil \log_{1-\frac{\ErrorA}{4}}(1-\ReduProp) \right\rceil\,. \label{eqn:num inner for failure}
		\end{align}
		
		Now let $ \IndepTrial_{\InnerIterIndex} $ be an \emph{independent} Bernoulli Trial, where $ \Pr[\IndepTrial_{\InnerIterIndex} = 1] = \frac{1}{2} $ and \mbox{$ \Pr[\IndepTrial_{\InnerIterIndex} = 0] = \frac{1}{2} $}. Below, we use \cref{lemma:1st lemma for bernoulli trials} to relate the dependent and independent Bernoulli trials for inner iterations. From there, we use the Chernoff bound from \cref{lemma:chernoff bounds}, \cref{eqn:chernoff bound less}, assigning $ \Dev = \frac{\ErrorA+2}{\ErrorA+4} $ and $ \mu =  2 ( 1 + \frac{4}{\ErrorA} ) \log(\frac{2}{1-\ReduProp}) \leq \frac{\NumBlocks}{2} $. Thus, we show that $ \Pr[\text{\OuterFailure}] \leq \frac{1}{2} $.
		\begin{align}
			&\Pr[\text{\OuterFailure}] \notag \\
			&\leq \Pr\left[ \sum_{\InnerIterIndex=1}^{\NumBlocks} \DepTrial_\InnerIterIndex < \left\lceil \log_{1-\frac{\ErrorA}{4}}(1-\ReduProp) \right\rceil \right] &\text{\cref{eqn:num inner for failure} condition} \notag \\
			&\leq \Pr\left[ \sum_{\InnerIterIndex=1}^{\NumBlocks} \IndepTrial_\InnerIterIndex < \left\lceil \log_{1-\frac{\ErrorA}{4}}(1-\ReduProp) \right\rceil \right] &\text{\cref{lemma:1st lemma for bernoulli trials}} \notag \\
			&\leq \Pr\left[ \sum_{\InnerIterIndex=1}^{\NumBlocks} \IndepTrial_\InnerIterIndex < \frac{4}{\ErrorA} \log\left(\frac{2}{1-\ReduProp}\right) \right] \notag \\
			&= \Pr\left[ \sum_{\InnerIterIndex=1}^{\NumBlocks} \IndepTrial_\InnerIterIndex < \left(\frac{2}{\ErrorA+4}\right) 2 \left( \frac{\Error + 4}{\ErrorA} \right) \log\left(\frac{2}{1-\ReduProp}\right) \right] \notag \\
			&= \Pr\left[ \sum_{\InnerIterIndex=1}^{\NumBlocks} \IndepTrial_\InnerIterIndex < \left(1-\frac{\ErrorA+2}{\ErrorA+4}\right) 2 \left( 1 + \frac{4}{\ErrorA} \right) \log\left(\frac{2}{1-\ReduProp}\right) \right] \notag \\
			&\leq e^{ - \left( \frac{\ErrorA+2}{\ErrorA+4} \right)^2 \left( 1 + \frac{4}{\ErrorA} \right) \log\left(\frac{2}{1-\ReduProp}\right) } &\text{\cref{lemma:chernoff bounds}} \notag \\
			&= e^{ -\frac{(\ErrorA+2)^2}{\ErrorA(\ErrorA+4)} \log\left(\frac{2}{1-\ReduProp}\right) } \leq e^{ -\log\left(\frac{2}{1-\ReduProp}\right) } = \frac{1-\ReduProp}{2} \leq \frac{1}{2}\,. \notag
		\end{align} 
	\end{proof}
	
	\begin{lemma} \label{lemma:TBS success probability}
		Suppose \TBSa{} (\cref{alg:TBS}) is run such that $\Funcg$ is monotone submodular, $|\GoodElems| \leq \MinSizeForSample$, and $0 < \ProbFail < 1$. Then \TBSa{} successfully terminates with probability $1-\frac{\ProbFail}{\MinSizeForSample}$.
	\end{lemma}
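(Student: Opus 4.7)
The plan is to bound $\Pr[\text{\TBSFailure}]$ by a standard tail bound on the number of successful outer iterations. I will define $\DepTrial_\OuterIterIndex$ as the indicator that the $\OuterIterIndex$th outer iteration \emph{avoids} an \outerFailure{}, i.e., that $|\RemElems_{\OuterIterIndex+1}| \leq (1-\ReduProp)|\RemElems_{\OuterIterIndex}|$. By \cref{lemma:TBS outer failure}, we have $\Pr[\DepTrial_\OuterIterIndex = 1 \mid \DepTrial_1, \dots, \DepTrial_{\OuterIterIndex-1}] \geq 1/2$. Because $|\RemElems_0| = |\GoodElems| \leq \MinSizeForSample$ and each successful outer iteration shrinks $|\RemElems|$ by a factor of at least $1-\ReduProp$, after $\NumSucc \coloneqq \lceil \log_{1/(1-\ReduProp)}(\MinSizeForSample) \rceil + 1$ outer successes we have $|\RemElems_{\OuterIterIndex}| < 1$ and hence $|\RemElems_{\OuterIterIndex}| = 0$, triggering the successful early exit at \cref{line:TBS check rem elems}. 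Therefore a \TBSFailure{} requires fewer than $\NumSucc$ of the $\NumIters$ outer iterations to succeed, i.e., $\sum_{\OuterIterIndex=1}^{\NumIters}\DepTrial_\OuterIterIndex < \NumSucc$.

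Next, I will use \cref{lemma:1st lemma for bernoulli trials} with $\ProbTrialSucc = 1/2$ to replace the dependent trials $\DepTrial_\OuterIterIndex$ by i.i.d.\ $\mathrm{Bernoulli}(1/2)$ trials $\IndepTrial_\OuterIterIndex$ and upper-bound $\Pr[\text{\TBSFailure}]$ by $\Pr[\sum_{\OuterIterIndex=1}^{\NumIters}\IndepTrial_\OuterIterIndex < \NumSucc]$. Then I will apply the lower-tail Chernoff bound \cref{eqn:chernoff bound less} of \cref{lemma:chernoff bounds} with $\mu = \NumIters/2$ and $\Dev = 1/2$. The assignment $\NumIters = \lceil 4(1 + 1/\ReduProp)\log(\MinSizeForSample/\ProbFail)\rceil$ in \cref{line:TBS initialisation} yields $\mu \geq 2(1 + 1/\ReduProp)\log(\MinSizeForSample/\ProbFail)$, while the elementary inequality $\log(1/(1-\ReduProp)) \geq \ReduProp$ gives $\NumSucc \leq \log(\MinSizeForSample)/\ReduProp + 2$, and this is at most $\mu/2$. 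The Chernoff step then produces a bound of the form $e^{-\mu/8} \leq (\ProbFail/\MinSizeForSample)^{(1+1/\ReduProp)/4}$. The choice $\ReduProp = \ReduPropVal$ prescribed in \cref{line:TBS initialisation} makes the exponent $(1+1/\ReduProp)/4$ strictly greater than $1$, so since $\ProbFail/\MinSizeForSample \leq 1$ the bound collapses to at most $\ProbFail/\MinSizeForSample$, giving the claimed success probability $1 - \ProbFail/\MinSizeForSample$.

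The main obstacle is the bookkeeping of constants in the Chernoff step: verifying that $\NumSucc \leq \mu/2$ under the specific expressions in \cref{line:TBS initialisation} and then confirming that the value $\ReduProp = \ReduPropVal$ is small enough for $(1+1/\ReduProp)/4 > 1$ so the final exponentiation step leaves the desired $\ProbFail/\MinSizeForSample$ tail. A corner regime in which $\log(\MinSizeForSample/\ProbFail)$ is very small (making the comparison $\NumSucc \leq \mu/2$ tight) can be handled by absorbing the additive ``$+2$'' slack from $\NumSucc$ into the ceiling in $\NumIters$; alternatively, when $\ProbFail/\MinSizeForSample$ is already close to $1$ the claimed inequality is essentially vacuous. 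Apart from this accounting, the argument reuses the same template as the success-probability proof of \TSa{} in Chen et al.~\cite{chen2021best}, with the twist that each outer iteration's success condition is itself defined in terms of the nested-inner-loop analysis captured in \cref{lemma:TBS outer failure}.
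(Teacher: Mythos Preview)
Your proposal is correct and follows essentially the same approach as the paper: define outer-iteration success indicators, invoke \cref{lemma:TBS outer failure} to get a $1/2$ lower bound on conditional success, couple via \cref{lemma:1st lemma for bernoulli trials}, and finish with the lower-tail Chernoff bound of \cref{lemma:chernoff bounds}. The only cosmetic difference is your choice of Chernoff parameter $\Dev = 1/2$ versus the paper's $\Dev = \frac{\ReduProp+1/2}{\ReduProp+1}$, which is tuned so that $(1-\Dev)\mu$ exactly matches the threshold and avoids the additive-constant bookkeeping you flag in the corner regime.
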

	
	\begin{proof}
		To prove \cref{lemma:TBS success probability}, we show that $ \Pr[\text{\TBSFailure}] \leq \frac{\ProbFail}{\MinSizeForSample} $.
		
		Let $ \DepTrial_\OuterIterIndex $ be a Bernoulli trial \emph{dependent} on the outcomes of previous trials, where
		\begin{itemize}
			\item $ \DepTrial_\OuterIterIndex = 0 $ denotes $ |\RemElemsDefer_{\OuterIterIndex+1}| > (1-\ReduProp)|\RemElemsDefer_{\OuterIterIndex}| $.
			\item $ \DepTrial_\OuterIterIndex = 1 $ denotes $ |\RemElemsDefer_{\OuterIterIndex+1}| \leq (1-\ReduProp)|\RemElemsDefer_{\OuterIterIndex}| $.
		\end{itemize}
		
		That is, $ \DepTrial_\OuterIterIndex $ indicates whether the $ \OuterIterIndex $th outer iteration successfully caused $ \geq \ReduProp $ proportion of elements in $ \RemElemsDefer_{\OuterIterIndex} $ to be filtered out from $ \RemElemsDefer_{\OuterIterIndex+1} $. Observe that an \text{\outerFailure} is, by definition, the event where $ \DepTrial_\OuterIterIndex = 0 $. Thus, we have that
		\begin{align}
			&\Pr[\DepTrial_{\OuterIterIndex} = 1 \mid \DepTrial_{1} = \depTrial_{1}, \dots, \DepTrial_{\OuterIterIndex-1} = \depTrial_{\OuterIterIndex-1}] \notag \\
			& \geq 1 - \Pr[\text{\OuterFailure}] \geq \frac{1}{2}\,. &\text{\cref{lemma:TBS outer failure}} \notag
		\end{align}
		
		Now recall that, by definition, an \TBSFailure{} occurs when $ |\RemElems_{\OuterIterIndex}| > 0 $ by the final check of the \cref{line:TBS check rem elems} if-condition. So a necessary condition for a \TBSFailure{} is that an insufficient number of outer iterations $ \OuterIterIndex $ have $ |\RemElemsDefer_{\OuterIterIndex+1}| \leq (1-\ReduProp)|\RemElemsDefer_{\OuterIterIndex}| $. In terms of the Bernoulli trials $\DepTrial_{\OuterIterIndex}$, the condition is that $|\GoodElems| ( 1 - \ReduProp )^{\sum_{\OuterIterIndex=1}^{\NumIters} \DepTrial_{\OuterIterIndex} } \geq 1$, which we rewrite into \cref{eqn:num outer for TBS failure} below.
		\begin{align}
			\sum_{\OuterIterIndex=1}^{\NumIters} \DepTrial_{\OuterIterIndex} \leq \left\lceil \log_{\frac{1}{1-\ReduProp}} ( |\GoodElems| ) \right\rceil\,. \label{eqn:num outer for TBS failure}
		\end{align}
		
		Now let $ \IndepTrial_{\OuterIterIndex} $ be an \emph{independent} Bernoulli Trial, where $ \Pr[\IndepTrial_{\OuterIterIndex} = 1] = \frac{1}{2} $ and $ \Pr[\IndepTrial_{\OuterIterIndex} = 0] = \frac{1}{2} $. Below, we use \cref{lemma:1st lemma for bernoulli trials} to relate the dependent and independent Bernoulli trials for outer iterations. From there, we use the Chernoff bound from \cref{lemma:chernoff bounds}, \cref{eqn:chernoff bound less}, assigning $ \Dev = \frac{\ReduProp+1/2}{\ReduProp+1} $ and $ \mu = 2 ( 1 + \frac{1}{\ReduProp} ) \log ( \frac{\MinSizeForSample}{\ProbFail} ) \leq \frac{\NumIters}{2} $. Thus, we finally show that $ \Pr[\text{\TBSFailure}] \leq \frac{\ProbFail}{\MinSizeForSample} $.
		\begin{align}
			&\Pr[\text{\TBSFailure}] \notag \\
			&\leq \Pr\left[ \sum_{\OuterIterIndex=1}^{\NumIters} \DepTrial_{\OuterIterIndex} \leq \left\lceil \log_{\frac{1}{1-\ReduProp}} ( |\GoodElems| ) \right\rceil \right] &\text{\cref{eqn:num outer for TBS failure} condition} \notag \\
			&\leq \Pr\left[ \sum_{\OuterIterIndex=1}^{\NumIters} \IndepTrial_{\OuterIterIndex} \leq \left\lceil \log_{\frac{1}{1-\ReduProp}} ( |\GoodElems| ) \right\rceil \right] &\text{\cref{lemma:1st lemma for bernoulli trials}} \notag \\
			&\leq \Pr\left[ \sum_{\OuterIterIndex=1}^{\NumIters} \IndepTrial_{\OuterIterIndex} \leq \left\lceil \log_{\frac{1}{1-\ReduProp}} \left( \frac{\MinSizeForSample}{\ProbFail} \right) \right\rceil \right] &\begin{aligned}
				&&\text{$ |\GoodElems| \leq \MinSizeForSample $ and} \\
				&&\text{$ 0 < \ProbFail < 1 $}
			\end{aligned} \notag \\
			&\leq \Pr\left[ \sum_{\OuterIterIndex=1}^{\NumIters} \IndepTrial_{\OuterIterIndex} \leq \frac{1}{\ReduProp} \log \left( \frac{\MinSizeForSample}{\ProbFail} \right) \right] \notag \\
			&= \Pr\left[ \sum_{\OuterIterIndex=1}^{\NumIters} \IndepTrial_{\OuterIterIndex} \leq \left(\frac{1/2}{\ReduProp+1}\right)2\left(\frac{\ReduProp + 1}{\ReduProp}\right) \log \left( \frac{\MinSizeForSample}{\ProbFail} \right) \right] \notag \\
			&= \Pr\left[ \sum_{\OuterIterIndex=1}^{\NumIters} \IndepTrial_{\OuterIterIndex} \leq \left(1-\frac{\ReduProp+1/2}{\ReduProp+1}\right)2\left(1+\frac{1}{\ReduProp}\right) \log \left( \frac{\MinSizeForSample}{\ProbFail} \right) \right] \notag \\
			&\leq \exp^{ -\left(\frac{\ReduProp+1/2}{\ReduProp+1}\right)^2 \left( 1 + \frac{1}{\ReduProp} \right) \log \left( \frac{\MinSizeForSample}{\ProbFail} \right) } &\text{\cref{lemma:chernoff bounds}} \notag \\ 
			&= \exp^{ -\frac{(\ReduProp+1/2)^2}{\ReduProp(\ReduProp+1)} \log \left( \frac{\MinSizeForSample}{\ProbFail} \right)} \leq \exp^{ -\log\left( \frac{\MinSizeForSample}{\ProbFail} \right) } = \frac{\ProbFail}{\MinSizeForSample}\,. \notag
		\end{align} 
	\end{proof}
	
	\paragraph{Marginal Gain Properties of \TBSa{}.}
	\begin{lemma} \label{lemma:TBS set value}
		Suppose \TBSa{} (\cref{alg:TBS}) is run such that $\Funcg$ is monotone submodular and $0 < \ErrorA < 1$. Further, suppose that \TBSa{} terminates successfully. Then \TBSa{} returns a set $ \SetT $ satisfying
		\begin{align}
			\Funcg(\SetT \mid \varnothing) \geq \frac{1-\ErrorA}{1+\ErrorA}\Threshold |\SetT|\,. \notag
		\end{align}
	\end{lemma}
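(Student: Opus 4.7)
The plan is to write $\Funcg(\SetT \mid \varnothing)$ as a telescoping sum over the prefixes added to $\SetT$ across all inner iterations, bound each summand by $\frac{1-\ErrorA}{1+\ErrorA}\Threshold$ times the prefix size, then combine. Let $\Prefix^{(1)}, \ldots, \Prefix^{(q)}$ be the pairwise disjoint prefixes added via \cref{line:TBS T update} in chronological order, with preceding partial solutions $\SetT^{(0)} = \varnothing$ and $\SetT^{(j)} = \SetT^{(j-1)} \cup \Prefix^{(j)}$, so $\SetT^{(q)} = \SetT$. Telescoping then yields
\begin{align}
\Funcg(\SetT \mid \varnothing) = \sum_{j=1}^{q} \Funcg(\Prefix^{(j)} \mid \SetT^{(j-1)}) \quad\text{and}\quad |\SetT| = \sum_{j=1}^{q} |\Prefix^{(j)}|\,. \notag
\end{align}

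The key step is to show that each added prefix $\Prefix^{(j)} = \Prefix_{\PrefixIndexBest}$, arising in some inner iteration $(\OuterIterIndex, \InnerIterIndex)$, satisfies the per-prefix inequality $\Funcg(\Prefix_{\PrefixIndexBest} \mid \SetT_{\OuterIterIndex,\InnerIterIndex-1}) \geq \frac{1-\ErrorA}{1+\ErrorA}\Threshold |\Prefix_{\PrefixIndexBest}|$. I would split on the two branches that assign $\PrefixIndexBest$. In the else-branch (\cref{line:TBS assign ibest max}), the failure of the \cref{line:TBS check max index} test forces $\MaxPrefixLength = \max \GoodPrefixIndices$, so $\PrefixIndexBest = \MaxPrefixLength \in \GoodPrefixIndices$ and the defining inequality of $\GoodPrefixIndices$ (\cref{line:TBS good prefixes}) yields the even stronger bound $(1-\ErrorA)\Threshold |\Prefix_{\PrefixIndexBest}|$. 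In the if-branch (\cref{line:TBS assign ibest succ}), set $\PrefixIndex^{\mathrm{prev}} := \max \GoodPrefixIndices$; this is well-defined, since $1 \in \PrefixIndices$ and $\Funcg(\BlockElem_1 \mid \SetT_{\OuterIterIndex,\InnerIterIndex-1}) \geq \Threshold$ (because $\BlockElem_1 \in \BlockFiltered$), whence $1 \in \GoodPrefixIndices$. By monotonicity of $\Funcg$ applied to $\Prefix_{\PrefixIndex^{\mathrm{prev}}} \subseteq \Prefix_{\PrefixIndexBest}$, combined with the good-prefix property at $\PrefixIndex^{\mathrm{prev}}$,
\begin{align}
\Funcg(\Prefix_{\PrefixIndexBest} \mid \SetT_{\OuterIterIndex,\InnerIterIndex-1}) \geq \Funcg(\Prefix_{\PrefixIndex^{\mathrm{prev}}} \mid \SetT_{\OuterIterIndex,\InnerIterIndex-1}) \geq (1-\ErrorA)\Threshold\,\PrefixIndex^{\mathrm{prev}}\,. \notag
\end{align}
The geometric structure of $\PrefixIndices = \{\lfloor(1+\ErrorA)^\GeoExp\rfloor\} \cup \{\MaxPrefixLength\}$ then gives $\PrefixIndexBest \leq (1+\ErrorA)\PrefixIndex^{\mathrm{prev}}$, equivalently $\PrefixIndex^{\mathrm{prev}} \geq |\Prefix_{\PrefixIndexBest}|/(1+\ErrorA)$, and substituting into the display above delivers the required per-prefix bound.

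Summing the per-prefix inequality over $j=1,\ldots,q$ and invoking the telescoping identity from the first paragraph then yields $\Funcg(\SetT \mid \varnothing) \geq \frac{1-\ErrorA}{1+\ErrorA}\Threshold |\SetT|$, as claimed. The main obstacle will be verifying the geometric-spacing claim $\PrefixIndexBest \leq (1+\ErrorA)\PrefixIndex^{\mathrm{prev}}$ with full rigour, because the floor in $\lfloor(1+\ErrorA)^\GeoExp\rfloor$ can make consecutive distinct values differ by a factor exceeding $1+\ErrorA$ in the small-index regime (for instance, $1, 2$ when $\ErrorA < 1$), and the extra index $\MaxPrefixLength$ may constitute a ``jump'' beyond the last qualifying geometric index. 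This likely requires a case split on the regime of $\PrefixIndex^{\mathrm{prev}}$, handling the jump case via a direct appeal to the good-prefix property rather than the geometric spacing.
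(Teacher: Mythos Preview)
Your approach is essentially identical to the paper's: telescope over the prefixes $\Prefix_{\PrefixIndexBest}$ added in \cref{line:TBS T update}, use monotonicity to pass from $\Prefix_{\PrefixIndexBest}$ to its good predecessor $\Prefix_{\PrefixIndex}$ in $\GoodPrefixIndices$, apply the defining inequality of $\GoodPrefixIndices$, and then invoke the geometric spacing $|\Prefix_{\PrefixIndex}| \geq |\Prefix_{\PrefixIndexBest}|/(1+\ErrorA)$. Your case split on the two assignment branches and your verification that $1 \in \GoodPrefixIndices$ are details the paper omits but are consistent with its argument.

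The concern you flag about the floor in $\lfloor (1+\ErrorA)^{\GeoExp} \rfloor$ is legitimate and is \emph{not} addressed in the paper, which simply asserts ``$|\Prefix_{\PrefixIndex, \SetIndex}| \geq \frac{|\Prefix_{\PrefixIndexBest, \SetIndex}|}{1+\ErrorA}$ since the prefixes have geometrically increasing size.'' Your instinct that the small-index regime (e.g., the step $1 \to 2$ when $\ErrorA < 1$) breaks the naive ratio bound is correct, so in this respect your proposal is more careful than the paper's own proof; the paper treats this as a routine fact rather than handling it.
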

	
	\begin{proof}
		Over an entire run of \TBSa{}, let $ \NumSets $ be the total number of sets $ \Prefix_{\PrefixIndexBest} $ that were added into $\SetT$ in \cref{line:TBS T update}, let $ \Prefix_{\PrefixIndexBest, \SetIndex} $ be the $ \SetIndex $th such set added, and let $ \SetT_{\SetIndex} = \Prefix_{\PrefixIndexBest, 1} \cup \dots \cup \Prefix_{\PrefixIndexBest, \SetIndex} $. We also let $\Prefix_{\PrefixIndex, \SetIndex} $ be the prefix, indexed at $\PrefixIndex$, considered in the same iteration as $\Prefix_{\PrefixIndexBest, \SetIndex}$.
		
		Each $ \Prefix_{\PrefixIndexBest, \SetIndex} $ is either the successor of, or the same as, some $ \Prefix_{\PrefixIndex, \SetIndex} $ that satisfies $ \Funcg(\Prefix_{\PrefixIndex, \SetIndex} \mid \SetT_{\SetIndex-1}) \geq (1-\ErrorA)\Threshold |\Prefix_{\PrefixIndex, \SetIndex}| $ (\crefrange{line:TBS good prefixes}{line:TBS assign ibest max}). Either way, we have that $ |\Prefix_{\PrefixIndex, \SetIndex}| \geq \frac{|\Prefix_{\PrefixIndexBest, \SetIndex}|}{ 1+\ErrorA} $ since the prefixes have geometrically increasing size (\cref{line:TBS prefixes}). Thus, we prove \cref{lemma:TBS set value} below.
		\begin{align}
			\Funcg(\SetT \mid \varnothing) &= \sum_{\SetIndex=1}^{\NumSets} \Funcg(\Prefix_{\PrefixIndexBest, \SetIndex} \mid \SetT_{\SetIndex-1}) &\text{telescoping series} \notag \\
			&\geq \sum_{\SetIndex=1}^{\NumSets} \Funcg(\Prefix_{\PrefixIndex, \SetIndex} \mid \SetT_{\SetIndex-1}) &\text{monotonicity} \notag \\
			&\geq \sum_{\SetIndex=1}^{\NumSets} (1-\ErrorA)\Threshold |\Prefix_{\PrefixIndex, \SetIndex}| &\text{\cref{line:TBS good prefixes}} \notag \\
			&\geq \sum_{\SetIndex=1}^{\NumSets} (1-\ErrorA)\Threshold \frac{|\Prefix_{\PrefixIndexBest, \SetIndex}|}{ 1+\ErrorA} &\text{geometric prefix size} \notag \\
			&= \frac{1-\ErrorA}{1+\ErrorA}\Threshold  \sum_{\SetIndex=1}^{\NumSets}|\Prefix_{\PrefixIndexBest, \SetIndex}| \notag \\
			&= \frac{1-\ErrorA}{1+\ErrorA}\Threshold |\SetT|\,. \notag
		\end{align} 
	\end{proof}
	
	\begin{lemma} \label{lemma:TBS rem elements}
		Suppose \TBSa{} (\cref{alg:TBS}) is run with value oracle $\Funcg$, cardinality constraint $\CardA$, and marginal gain threshold $\Threshold$. Further, suppose that \TBSa{} terminates successfully and that it outputs a set $\SetT$ satisfying $|\SetT| < \CardA$. Then for all $ \Elem \in \GoodElems \colon \Funcg(\Elem \mid \SetT) < \Threshold $.
	\end{lemma}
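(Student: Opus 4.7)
The plan is to first identify which return statement must have been executed, and then to track, for each $\Elem\in\GoodElems$, the outer iteration in which $\Elem$ was filtered out; submodularity will then carry the required bound forward to the final returned set.

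First I would argue about termination. \TBSa{} has only three exit points: \cref{line:TBS check rem elems} (returning $\SetT_{\OuterIterIndex-1,\InnerIterIndex}$ when $|\RemElems_{\OuterIterIndex}|=0$), \cref{line:TBS return set tight card} (returning $\SetT_{\OuterIterIndex,\InnerIterIndex}$ when $|\SetT_{\OuterIterIndex,\InnerIterIndex}|=\CardA$), and \cref{line:TBS return failure} (returning \emph{failure}). Successful termination excludes the last, and the second forces $|\SetT|=\CardA$; since we are given $|\SetT|<\CardA$, the returned set must be $\SetT=\SetT_{\OuterIterIndex-1,\InnerIterIndex_{\OuterIterIndex-1}}$, produced at \cref{line:TBS check rem elems} in some outer iteration $\OuterIterIndex$ for which $|\RemElems_{\OuterIterIndex}|=0$.

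Next, fix an arbitrary $\Elem\in\GoodElems$. Since $\RemElems_{0}=\GoodElems$ and $\RemElems_{\OuterIterIndex}=\varnothing$, the set $\{\IterIndex : 0\leq \IterIndex\leq \OuterIterIndex,\ \Elem\in\RemElems_{\IterIndex}\}$ is nonempty (it contains $0$) and does not contain $\OuterIterIndex$, so it has a maximum $\OuterIterIndex^{*}$ with $0\leq\OuterIterIndex^{*}<\OuterIterIndex$. By maximality, $\Elem\in\RemElems_{\OuterIterIndex^{*}}$ but $\Elem\notin\RemElems_{\OuterIterIndex^{*}+1}$, so the filtering rule in \cref{line:TBS filtering step} of outer iteration $\OuterIterIndex^{*}+1$ forces
\begin{align}
\Funcg(\Elem \mid \SetT_{\OuterIterIndex^{*},\InnerIterIndex_{\OuterIterIndex^{*}}}) < \Threshold\,, \notag
\end{align}
where $\InnerIterIndex_{\OuterIterIndex^{*}}$ denotes the last inner-iteration index reached during the $\OuterIterIndex^{*}$th outer iteration.

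Finally, since $\SetT$ is monotonically grown only by unions in \cref{line:TBS T update}, we have $\SetT_{\OuterIterIndex^{*},\InnerIterIndex_{\OuterIterIndex^{*}}}\subseteq\SetT$. Submodularity of $\Funcg$ then gives
\begin{align}
\Funcg(\Elem \mid \SetT) \leq \Funcg(\Elem \mid \SetT_{\OuterIterIndex^{*},\InnerIterIndex_{\OuterIterIndex^{*}}}) < \Threshold\,, \notag
\end{align}
which is the required inequality. The only potential subtlety is confirming that the accumulated solution $\SetT_{\OuterIterIndex-1,\InnerIterIndex_{\OuterIterIndex-1}}$ does contain every $\SetT_{\OuterIterIndex^{*},\InnerIterIndex_{\OuterIterIndex^{*}}}$ for $\OuterIterIndex^{*}<\OuterIterIndex$, but this is immediate from the algorithm since $\SetT$ is only ever updated by $\SetT_{\OuterIterIndex,\InnerIterIndex}\gets\SetT_{\OuterIterIndex,\InnerIterIndex-1}\cup\Prefix_{\PrefixIndexBest}$ and between outer iterations by $\SetT_{\OuterIterIndex,0}\gets\SetT_{\OuterIterIndex-1,\InnerIterIndex}$.
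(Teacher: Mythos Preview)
Your proof is correct and follows essentially the same approach as the paper: identify that the only successful exit with $|\SetT|<\CardA$ is \cref{line:TBS check rem elems}, so $|\RemElems_{\OuterIterIndex}|=0$, and conclude that every $\Elem\in\GoodElems$ was filtered out at some point with marginal gain below $\Threshold$. Your version is actually more careful than the paper's, since you explicitly invoke submodularity to carry the bound from $\SetT_{\OuterIterIndex^{*},\InnerIterIndex_{\OuterIterIndex^{*}}}$ forward to the final $\SetT$ for elements filtered out in earlier outer iterations, a step the paper leaves implicit.
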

	\begin{proof}
		If \TBSa{} terminates successfully and returns a set $\SetT_{\OuterIterIndex-1, \InnerIterIndex}$ satisfiying $|\SetT_{\OuterIterIndex-1, \InnerIterIndex}| < \CardA$, then the only way to return $\SetT_{\OuterIterIndex-1, \InnerIterIndex}$ is inside the \cref{line:TBS check rem elems} if-block. To enter this if-block, $|\RemElems_{\OuterIterIndex}| = 0$ must hold. Therefore, every $\Elem \in \GoodElems$ must have previously been filtered out of $\RemElems_{\OuterIterIndex}$ in \cref{line:TBS filtering step} and so must satisfy $\Funcg(\Elem \mid \SetT_{\OuterIterIndex-1, \InnerIterIndex}) < \Threshold$. 
	\end{proof}
	
	\section{Simple Parallel Algorithm for $\PSep$-Superseparable \SMCC{}}\label{sec:simple}
	In this section, we describe \LALST{} (\LALSTa{}), the simple algorithm for $\PSep$-superseparable \SMCC{}. We give its pseudocode in \cref{alg:LAT} and state its performance guarantees in \cref{theorem:simple results}.
	
	\begin{theorem} \label{theorem:simple results}
		Let $(\Funcf, \Card)$ be an instance of \SMCC{} where $\Funcf$ is $\PSep$-superseparable. Suppose \LALST{} is run such that $0 < \ApproxParamA < 1$ and $0 < \Error < \frac{1}{2}$. Then, with probability $1 - O(\frac{1-\ApproxParamA}{\PSep})$, \LALST{} achieves:
		\begin{itemize}
			\item a solution $\SetS$ satisfying $|\SetS| \leq \Card$ and $\Funcf(\SetS) \geq \ApproxParamA (5 + \frac{4(2-\Error)\Error}{(1-\Error)(1-2\Error)} )^{-1} \OptValue $,
			\item an adaptive complexity of $O( \frac{1}{\Error^3} \log( \frac{\PSep}{1-\ApproxParamA} ) )$, and
			\item an expected query complexity of $ O( \frac{1}{1-\ApproxParamA} ( \frac{\PSep \Card}{\Error^3} + \frac{\PSep}{\Error^4} ) ) $.
		\end{itemize}
	\end{theorem}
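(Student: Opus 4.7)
The plan is to prove \cref{theorem:simple results} by combining two ingredients: (i) \cref{lemma:top elems opt value}, which guarantees that the restricted instance on $\TopElem$ retains a near-optimal solution, and (ii) the existing performance guarantees of \LALS{} from Chen et al.~\cite{chen2021best}. Since \LALSTa{} is defined as simply running \LALS{} on the top-$\lceil \frac{\PSep \Card}{1-\ApproxParamA} + \Card \rceil$ valued elements $\TopElem$, so that $|\TopElem| = O(\frac{\PSep \Card}{1-\ApproxParamA})$, the analysis reduces to a substitution argument that transfers \LALS{}'s bounds on input set $\TopElem$ back to the original ground set $\GroundSet$.

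For the approximation factor, let $\BestSetTopElem \subseteq \TopElem$ be a $\Card$-size subset maximizing $\Funcf$. By \cref{lemma:top elems opt value}, $\Funcf(\BestSetTopElem) \geq \ApproxParamA \OptValue$, so the optimum of the \SMCC{} instance $(\Funcf, \Card)$ restricted to ground set $\TopElem$ is at least $\ApproxParamA \OptValue$. Since \LALS{} achieves a $\bigl(5 + \frac{4(2-\Error)\Error}{(1-\Error)(1-2\Error)}\bigr)^{-1}$-approximation of the optimum over its input set (the form matches the coefficient pattern used in \cref{line:LSTPGS assign ApproxParam} for \LS{}, and the constant $5$ rather than $4$ is the known overhead of the low-adaptivity variant), the output $\SetS$ of \LALSTa{} satisfies the claimed approximation by chaining the two bounds. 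The cardinality bound $|\SetS| \leq \Card$ is inherited directly from \LALS{}.

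For the adaptive and query complexities, I would substitute the input size $|\TopElem|$ into \LALS{}'s complexity expressions, which are (from Chen et al.) of the form $O\bigl(\frac{1}{\Error^3}\log(n/k)\bigr)$ adaptive rounds and $O\bigl(\frac{n}{\Error^3} + \frac{n}{k\Error^4}\bigr)$ expected queries for input size $n$ and cardinality $k$. With $n = |\TopElem| = O(\frac{\PSep \Card}{1-\ApproxParamA})$ and $k = \Card$, the logarithm collapses to $O\bigl(\log(\frac{\PSep}{1-\ApproxParamA})\bigr)$, and the query terms become $O\bigl(\frac{\PSep \Card}{(1-\ApproxParamA)\Error^3}\bigr) + O\bigl(\frac{\PSep}{(1-\ApproxParamA)\Error^4}\bigr)$, exactly matching the stated bounds. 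The success probability $1 - O\bigl(\frac{1-\ApproxParamA}{\PSep}\bigr)$ likewise follows from \LALS{}'s failure probability scaling as $O(k/n) = O(\Card/|\TopElem|)$.

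There is no substantive obstacle: the proof is essentially bookkeeping. The only minor cares are verifying that $|\TopElem| \geq \Card$ so the cardinality constraint passed to \LALS{} is well-defined (immediate from the definition of $|\TopElem|$), and confirming that the explicit $O(\Error)$ term in \LALS{}'s approximation ratio is precisely $\frac{4(2-\Error)\Error}{(1-\Error)(1-2\Error)}$ as quoted in the theorem. Because \LALSTa{} performs only one extra preprocessing step beyond \LALS{}---a single top-$|\TopElem|$ selection, which takes one adaptive round and $\GroundSetSize$ queries and is dominated by \LALS{}'s costs whenever $\PSep$ and $\ApproxParamA$ satisfy the regimes of interest---no additional asymptotic terms appear.
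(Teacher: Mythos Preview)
Your proposal is correct and follows essentially the same approach as the paper: invoke \cref{lemma:top elems opt value} to bound the loss in the approximation factor by $\ApproxParamA$, then substitute $|\TopElem| \in O(\frac{\PSep\Card}{1-\ApproxParamA})$ for $\GroundSetSize$ into the guarantees of \LALS{} (Theorem~5 of \cite{chen2021best}) to obtain the adaptive complexity, query complexity, and success probability. Your additional remark about the preprocessing cost of computing the top-valued set is a fair observation, though the paper (like you) absorbs it into the stated bounds.
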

	
	\paragraph{Description of \LALSTa{}.} \LALSTa{} simply constructs the set of $\TopElem$ of top-$ \lceil \frac{\PSep\Card}{1-\ApproxParamA} + \Card \rceil $ valued elements in $\GroundSet$, runs the existing \LALS{} (\LALSa{}) procedure \cite{chen2021best} on $\TopElem$, and returns the solution from this procedure. This approach is interesting due to the $O(\frac{1}{\Error^3} \log( \frac{\GroundSetSize}{\Card} ) )$ adaptive complexity of \LALSa{}; since we run this on $\TopElem$, substituting $\GroundSetSize = |\TopElem| \in O( \frac{\PSep\Card}{1-\ApproxParamA} )$ cancels out the dependence on $\Card$ in the adaptive complexity. We mention that \LALSa{} currently has the lowest adaptive complexity of any constant-factor approximation algorithm for \SMCC{}.
	
	\paragraph{Deriving \cref{theorem:simple results}.} By Theorem 5 of \cite{chen2021best}, with probability $1-\frac{\Card}{\GroundSetSize}$, \LALSa{} achieves: 1) a solution $\SetS$ satisfying $|\SetS| \leq \Card$ and $\Funcf(\SetS) \geq (5 + \frac{4(2-\Error)\Error}{(1-\Error)(1-2\Error)} )^{-1} \OptValue $, 2) an adaptive complexity of $O(\frac{1}{\Error^3} \log( \frac{\GroundSetSize}{\Card} ) )$, and 3) an expected query complexity of  $O( (\frac{1}{\Error \Card} + 1 ) \frac{\GroundSetSize}{\Error^3} )$.
	
	Since \LALSTa{} runs \LALSa{} on the top-valued elements $\TopElem$, the approximation factor is only worsened by a factor of $\ApproxParamA$ by \cref{lemma:top elems opt value} and the $\PSep$-superseparability of $\Funcf$. The remaining performance guarantees follow by substituting $\GroundSetSize \in O( \frac{\PSep\Card}{1-\ApproxParamA} )$ into the guarantees of \LALSa{}.
	
	\begin{algorithm}
		\caption{} \label{alg:LAT}
		\begin{algorithmic}[1]
			\Procedure{\LALST}{$\Funcf, \GroundSet, \PSep, \Card, \ApproxParamA, \Error$}
			\Input{value oracle $ \Funcf \colon 2^\GroundSet \rightarrow \mathbb{R}_{\geq 0} $, ground set $ \GroundSet $, parameter $ \PSep $ such that $ \Funcf $ is $ \PSep $-superseparable, cardinality constraint $ \Card $, approximation term $ \ApproxParamA $, approximation error $\Error$}
			\Output{set $ \SetS $ satisfying $\Funcf(\SetS) \geq \ApproxParamA \left(5 + \frac{4(2-\Error)\Error}{(1-\Error)(1-2\Error)} \right)^{-1} \OptValue $}
			\State $ \TopElem \gets $ set of top-$ \left\lceil \frac{\PSep\Card}{1-\ApproxParamA} + \Card \right\rceil $ elements $ \Elem \in \GroundSet $ by value $ \Funcf(\Elem) $ \label{line:LAT assign TopElem}
			\State $ \SetS \gets \LALS(\Funcf, \TopElem, \Card, \Error) $ \label{line:LAT run LinearSeq}
			\State \Return{$ \SetS $} \label{line:LAT return solution}
			\EndProcedure
		\end{algorithmic}
	\end{algorithm}
	
	\section{Conclusions} \label{sec:conclusions}
	In this paper, we propose highly parallel algorithms for $\PSep$-superseparable \SMCC{} that achieve adaptive complexities independent of $\GroundSetSize$, but dependent on parameters $\PSep$ and $\Card$, with the main algorithm being \LSTPGSa{}. We also propose a new procedure \TBS{}, a subroutine of \PGS{}, which is the key to improving the existing state-of-the-art query complexity of our \LSTPGSa{}, not only for the $\PSep$-superseparable \SMCC{}, but also for the general case.
	An interesting research direction is to 
	design an algorithm whose adaptivity depends on $\PSep$ and $\Card$ without the need of prior knowledge on the value of $\PSep$, as our \LSTPGSa{} needs to know this value to set parameters appropriately.
	Also, our simple algorithm for $\PSep$-superseparable \SMCC{} hints at the possibility of a $(5 + O(\Error))^{-1}$-approximation algorithm that only requires $O(\frac{1}{\Error^3} \log \PSep )$ rounds, removing the $\ApproxParamA$ term in the current approximation factor. Finally, it is also worth conducting experiments to compare our algorithms against the existing parallel algorithms for general \SMCC{}, especially on those submodular $\PSep$-superseparable functions with small values of $\PSep$.
	
	\paragraph{Acknowledgements.} This work was in part supported by ARC Discovery Early Career Researcher Award (DECRA) DE190101118 and the University of Melbourne Faculty of Engineering and Information Technology, and School of Computing and Information Systems.
	
	\bibliographystyle{splncs04}
	\bibliography{Fast_Parallel_Algorithms_for_Submodular_p-Superseparable_Maximization}
	
\end{document}